%
\documentclass[11pt,onecolumn]{article}
\usepackage[utf8]{inputenc}

\usepackage[a4paper]{geometry}

\usepackage
[
colorlinks=true,
linkcolor=blue,
anchorcolor=blue,
citecolor=blue,
urlcolor=blue,
plainpages=false,
pdfpagelabels
]{hyperref}

\setlength{\topmargin}{-.6in}
\setlength{\textwidth}{6.5in}
\setlength{\evensidemargin}{0.0in}
\setlength{\oddsidemargin}{0.0in}
\setlength{\textheight}{9in}

\usepackage{amsmath}
\usepackage{amsthm}

\usepackage{thmtools}
\usepackage{enumerate}

\usepackage{amssymb}
\usepackage{microtype,t1enc,txfonts}
\usepackage{multirow}
\usepackage{xspace}
\usepackage{graphicx}
\usepackage{epstopdf}
\usepackage{ifpdf}
\usepackage{url,hyperref}
\usepackage{latexsym}
\usepackage{euscript}
\usepackage{xspace}
\usepackage{color}
\usepackage{makeidx}
\usepackage{picins}
\usepackage{wrapfig}
\usepackage{xypic}
\usepackage{tikz,tikz-cd}
\usetikzlibrary{matrix,arrows}
\usepackage[capitalize]{cleveref}

\usepackage[numbers,sort,compress]{natbib}
\setlength\bibsep{0pt}
\bibliographystyle{abbrvnaturl}

\xyoption{all}

\long\def\remove#1{}

\newtheorem{theorem}{Theorem}[section] 
\newtheorem{lemma}[theorem]{Lemma}
\newtheorem{prop}[theorem]{Proposition}

\newtheorem{obs}[theorem]{Observation}

\newtheorem{cor}[theorem]{Corollary}

\newtheorem{mydef}[theorem]{Definition}

\usepackage{todo}

\makeatletter
\renewcommand{\@tododisplay}[1]{%
}
\makeatother

\newcommand {\mm}[1] {\ifmmode{#1}\else{\mbox{\(#1\)}}\fi}

\newcommand{\eps}{{\varepsilon}}
\newcommand{\reals}	{{\rm I\!\hspace{-0.025em} R}}

\newcommand{\etal}      {et al.\@\xspace}

\newcommand{\rg}		{ {R}}
\newcommand{\HH}		{{{H}}}
\newcommand{\ZZ}		{{{Z}}}
\newcommand{\XX}		{{{X}}}
\newcommand{\YY}		{{{Y}}}
\newcommand{\CC}	{{{C}}}
\newcommand{\DD}		{{d_{FD}}} 
\newcommand{\Z}		{{\mathbb{Z}}}
\newcommand{\R}		{{\mathbb{R}}}
\newcommand{\surReeb}		{{\mu}}
\DeclareMathOperator\Dg{Dg} 
\DeclareMathOperator\eDg{ExDg} 
\DeclareMathOperator\height{height}
\DeclareMathOperator\range{range}

\definecolor{red}{rgb}{0.6, 0, 0}
\definecolor{blue}{rgb}{0, 0, 0.6}
\definecolor{green}{rgb}{0.16, 0.435, 0.16}

\renewcommand{\paragraph}[1]	{{\vspace*{0.1in}\noindent {\bf #1.~}}}

\newcommand{\GHlike}	{functional distortion} 

\newcommand{\dreeb}		{d}
\newcommand{\leftmap}		{\phi}
\newcommand{\rightmap}		{\psi}

\newcommand{\optd}			{\delta}

\newcommand{\mycanonical}	{{thin\xspace}}
\newcommand{\Mycanonical}	{{Thin\xspace}}
\newcommand{\frank}		{n}
\newcommand{\grank}		{m}
\newcommand{\loopone}		{\gamma}
\newcommand{\looptwo}		{\zeta}
\newcommand{\aloop}		{\gamma}

\newcommand{\gset}			{{\mathcal{G}}}
\newcommand{\smallH}		{{{Z}_1^{2\delta}}}

\DeclareMathOperator\domcycle{dom}

\newcommand{\bp}			{{b}}
\newcommand{\tp}			{{d}}
\newcommand{\onerg}		{{{R}}} 

\newcommand{\newrg}		{{\widetilde{\onerg}}}
\DeclareMathOperator\im{im}
\DeclareMathOperator\id{id}

\newcommand{\Fmatrix}		{\Phi}

\newcommand{\augGH}		{{functional GH\xspace}}
\newcommand{\mycycle}		{{cycle\xspace}}
\newcommand{\mydecomposition}	{{\mycanonical{} basis decomposition\xspace{}}}

\newcommand{\ignore}[1]{}


\newcommand{\fGH}		{{functional distortion\xspace}} 
\newcommand{\treeone}	{T_f}
\newcommand{\treetwo}	{T_g}
\newcommand{\Dint}		{d_I}
\newcommand{\Dfgh}		{d_{fGH}}
\newcommand{\leftintmap}	{\alpha^\eps}
\newcommand{\rightintmap}	{\beta^\eps}
\newcommand{\ishift}		{{\mathsf{i}}}
\newcommand{\jshift}		{{\mathsf{j}}}

\newcommand{\rightmapit}		{\phi_{\leftarrow}}
\newcommand{\optleftmapit}	{\phi^*} 
\newcommand{\optrightmapit}	{\psi^*}
\newcommand{\leftintmapdelta}	{\alpha^\delta}
\newcommand{\rightintmapdelta}	{\beta^\delta}

\begin{document}

\title{Measuring Distance between Reeb Graphs}
\author{Ulrich Bauer\thanks{Department of Mathematics, Technical University of Munich (TUM), D-85747 Garching. \url{http://ulrich-bauer.org}} \and Xiaoyin Ge\thanks{Computer Science and Engineering Department, The Ohio State University, Columbus, OH 43221. Emails: \href{mailto:gex@cse.ohio-state.edu}{\tt gex}, {\href{mailto:yusu@cse.ohio-state.edu}{\tt yusu@cse.ohio-state.edu}}.} \and Yusu Wang$^\dagger$}

\maketitle

\begin{abstract}
We propose a metric for Reeb graphs, called the \GHlike{} distance. 
Under this distance, the Reeb graph is stable against small changes of input functions. 
At the same time, it remains discriminative at differentiating input functions. 
In particular, the main result is that the \GHlike{} distance between two Reeb graphs is bounded from below by the bottleneck distance between both the ordinary and extended persistence diagrams for appropriate dimensions. %

As an application of our results, we analyze a natural simplification scheme for Reeb graphs, and show that persistent features in Reeb graph remains persistent under simplification. Understanding the stability of important features of the Reeb graph under simplification is an interesting problem on its own right, and critical to the practical usage of Reeb graphs. 
\end{abstract}

\section{Introduction}
\label{sec:intro}

One of the prevailing ideas in geometric and topological data analysis is to provide descriptors that encode
useful information about hidden objects from observed data. The Reeb graph is one such descriptor. 
Specifically, given a continuous function $f: X \rightarrow \reals$ defined on a domain $X$, the level set of $f$ at value $a$ is the set $f^{-1}(a) = \{ x \in X \mid f(x) = a \}$. 
As the scalar value $a$ increases,
connected components appear, disappear, split and merge in the level set, 
and the Reeb graph of $f$ tracks such changes. 
It provides a simple yet meaningful abstraction of the input domain. 
The concept behind the Reeb graph was first introduced by G.~Reeb 
in \cite{Reeb46} for Morse functions on manifolds; 
the term \emph{Reeb graph} was coined by R.~Thom. The first use of Reeb graphs for visualization applications can be found in work on shape understanding by Shinagawa \etal{} \cite{SK91}. 
Since then, it has been used in a variety of applications in graphics and visualization, e.g,  \cite{HA03,HSKK01,NBPF11,SK91,Tie08,WHDS04}; 
also see \cite{BGSF08} for a survey. 

The Reeb graph can be computed efficiently in $O(m\log m)$ time for a piecewise-linear function 
defined on an arbitrary simplicial complex domain with $m$ vertices, edges and triangles \cite{Salman12} (a randomized algorithm was given in \cite{HWW10}). 
This is in contrast to, for example, the $O(m^3)$ time (or matrix multiplication time) needed to compute even just the first-dimensional homology information for the same simplicial complex. 
The Reeb graph of a scalar field on a manifold can also be approximated from a point sample efficiently and with theoretical guarantees \cite{DW13}. 
It encodes meaningful information on the input scalar field, in particular the so-called one-dimensional \emph{vertical homology group} \cite{DW13}.  
Being a graph structure, the Reeb graph is simple to represent and manipulate. 
These properties make the Reeb graph appealing for analyzing high-dimensional point data. 
For example, a generalization of the Reeb graph is proposed in \cite{SMC07} for analyzing high dimensional data, and in \cite{GSBW11}, the Reeb graph is used to recover a hidden geometric graph from its point samples. 
Very recently in \cite{CS14}, it is shown that a certain Reeb graph can reconstruct a metric graph
with respect to the Gromov-Hausdorff distance. 

Given the popularity of the Reeb graph in 
data analysis, it is important to understand its stability and robustness with respect to changes in the input function (both in function values and in the domain). 
To measure the stability, we first need to define a distance between two Reeb graphs. 
Furthermore, an important application of the Reeb graph is to provide a descriptive summary of the function. %
Again, a central problem involved is to have a meaningful distance between Reeb graphs. 

In the special case of Reeb graphs of functions on curves, similar results were obtained in \cite{DiFabio2012Stability} using an editing distance on Reeb graphs, and this approach is being extended to surfaces by the same authors.
Recently, Morozov \etal{} proposed the \emph{interleaving distance} for \emph{merge trees}, based on the concept of an interleaving \cite{CCG09}, and obtained similar upper and lower bounds relating this distance to \emph{ordinary} persistence diagrams \cite{MBW13}. 
Here, the merge trees are variants of the loop-free Reeb graphs (\emph{contour trees}). 
However, it is not clear how to generalize these results to Reeb graphs containing loops, an important family of features of the Reeb graph. 
Another distance based on the \emph{branch decomposition} of merge trees was proposed in \cite{BYM13}, together with a polynomial time algorithm to compute it. This distance, however, is not stable with respect to changes in the function and also does not generalize beyond trees. 

Recently, de Silva et al. introduced the \emph{interleaving distance} for Reeb graphs, which is defined at the algebraic topology level, utilizing the equivalence between Reeb graphs and a particular class of cosheaves \cite{DMP14}. In a previous conference paper \cite{Bauer2014}, we introduced the functional distortion distance to be described in the current full version. Notably, it has been shown very recently in \cite{BMW15} that these two definitions of distances between Reeb graphs are strongly equivalent, in the sense that they are within constant factor of each other.

\paragraph{Our work} 
In this paper, we propose a metric for Reeb graphs, called the \emph{\GHlike{} distance}, drawing intuition from the Gromov-Hausdorff distance for measuring metric distortion. 
Under this distance, the Reeb graph is stable against perturbations of the input function; at the same time, it retains a certain ability to discriminate between different functions (these statements will be made precise in \cref{sec:properties}). 
In particular, the main result is that the \GHlike{} distance between two Reeb graphs is bounded from below by (and thus more discriminative than) the \emph{bottleneck distance} between the persistence diagrams \emph{of the Reeb graphs}.
On the other hand, the \GHlike{} distance yields the same type of sup norm stability that persistence diagrams enjoy \cite{CEH07,CCG09,CSGO12,Bauer2013Induced}. The persistence diagram has been a popular topological summary of shapes and functions, and the bottleneck distance is introduced in \cite{CEH07} as a natural distance for persistence diagrams. 
However, as the simple example in \cref{fig:twotypes} (a) shows, the Reeb graph can be strictly more discriminative than the persistence diagram of dimension 0. 

In Section \ref{sec:GHrelation}, we show the relation between our functional distortion distance to a functional-version of the Gromov-Hausdorff distance. 
In Section \ref{sec:interleaving}, we show that, when applied to merge trees, our functional distortion distance is equivalent to the interleaving distance proposed by Morozov et al. \cite{MBW13}. 

Finally, as an application of our results, we show in \cref{sec:simp} that persistent features of the Reeb graph remain persistent under a certain natural simplification strategy of the Reeb graph. 
Understanding the stability of Reeb graph features under simplification is an interesting problem on its own right: In practice, one often collapses small branches and loops in the Reeb graph to remove noise; see, e.g., \cite{DN12,GSBW11,PSBM07}. It is crucial that by collapsing a collection of small features, there is no cascading effect that causes larger features to be destroyed, and our results confirm that this is indeed the case.

\section{Preliminaries and Problem Definition}
\label{sec:background}

\paragraph{Reeb graphs}
Given a continuous function $f: \XX \rightarrow \reals$ on a finitely triangulable 
topological space~$\XX$, for each $\alpha \in \reals$, the set $f^{-1}(\alpha)=\{x\in \XX: f(x)=\alpha \}$ is called a \emph{level set} of $f$. A level set may consist of several connected components. 
We define an equivalence relation~$\sim$~on~$\XX$ such that $x \sim y$ iff $ f(x) = f(y) = \alpha$ and $x$ is connected to~$y$ in~$f^{-1}(\alpha)$. 
\begin{figure}[h]
\centering\includegraphics[width=7cm]{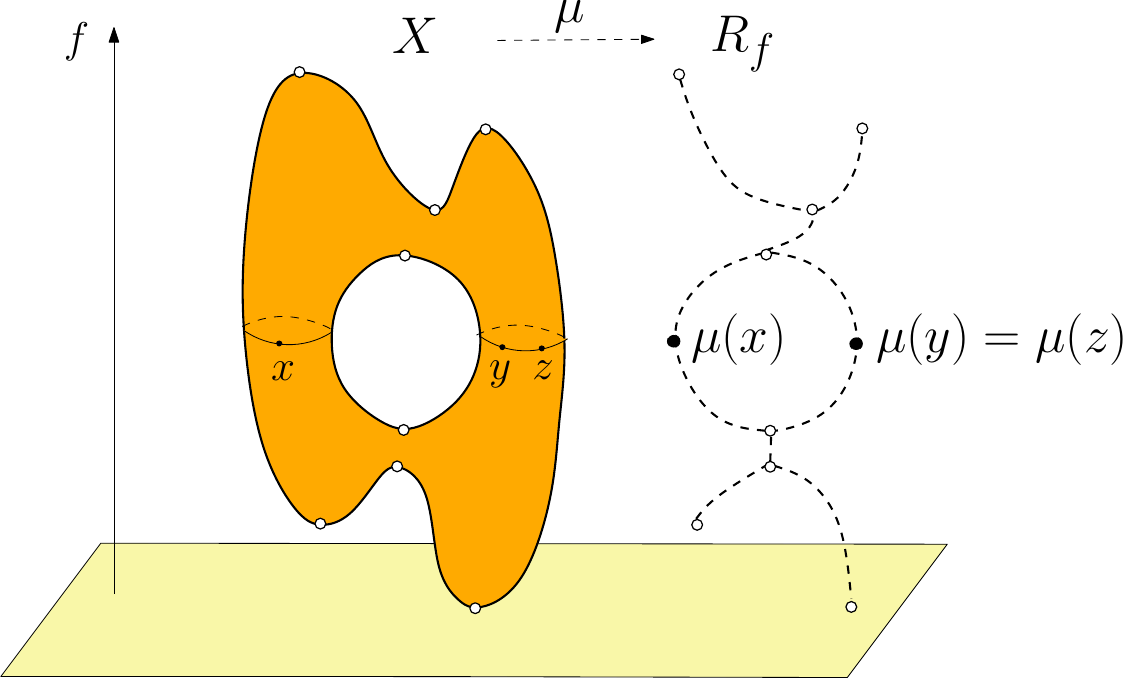}
\end{figure}
The \emph{Reeb space} of the function $f: \XX \rightarrow \reals$, denoted by $\rg_f$, 
is the quotient space $\XX/{\sim}$, i.e., the set of 
equivalent classes equipped with the quotient topology induced by
the quotient map $\surReeb: \XX \rightarrow \rg_f$. 
Under appropriate regularity assumptions
(to be made precise later),
$\rg_f$ has the structure of a finite $1$-dimensional regular CW complex, and we call it a \emph{Reeb graph}.
Throughout this paper, we tacitly assume that all mentioned connected components are also path-connected. 

The input function~$f:\XX\to\reals$ also induces a 
continuous function $\tilde{f}: \rg_f \rightarrow \reals$ defined as
$\tilde{f} (z) = f(x)$ for any preimage $x \in \surReeb^{-1}(z)$ of $z$. 
To simplify notation, we often write $f(z)$ instead of $\tilde f(z)$ for $z \in \rg_f$ when there is no ambiguity, and use $\tilde f$ mostly to emphasize the different domains of the functions.
In all illustrations of this paper, we plot the Reeb graph with 
the vertical coordinate of a point $z$ corresponding to the function value~$f(z)$.

Given a point $x \in \rg_f$, we use the term \emph{up-degree} (resp.~\emph{down-degree}) of $x$ to denote the number of branches (1-cells) incident to $x$ that have higher (resp.~lower) values of $f$ than $x$. 
A point is \emph{regular} if both of its up-degree and down-degree equal to 1, and \emph{critical} otherwise. 
A critical point is a minimum (maximum) if it has down-degree 0 (up-degree 0), and a down-fork (up-fork) 
if it has down-degree (up-degree) larger than~$1$. 
A critical point can be degenerate, having more than one types of criticality. %
From now on, we use the term \emph{node} to refer to a critical point in the Reeb graph. 
For simplicity of exposition, we assume that all nodes of the Reeb graph have distinct $\tilde f$ function values. 
Note that because of the monotonicity of $\tilde f$ at regular points, the Reeb graph together with its associated function is completely described, up to homeomorphisms preserving the function, by the function values on the nodes. 

\paragraph{Persistent homology and persistence diagrams} 
The notion of persistence was originally introduced by Edelsbrunner \etal{} in \cite{ELZ02}. There has since been a great amount of development both in theory and in applications; see, e.g., \cite{ZC05,CD08,CSGO12,Bauer2013Induced}. 
This paper does not concern the theory of persistence, hence we only provide a simple description so as to introduce the notion of \emph{persistence diagrams}, which will be used later. We refer the readers to \cite{Hatcher2002Algebraic} for a detailed treatment of homology groups in general and to \cite{EH09} for persistent homology. 

Given a continuous function $f: \XX \rightarrow \reals$ defined on a finitely triangulable 
topological space $\XX$, we call $\XX_{\leq a} = \{ x\in \XX \mid f(x) \leq a \}$
a \emph{sublevel set} of $f$. Let $\HH_p(\YY)$ denote the $p$-th homology group of a triangulable topological space~$\YY$. Recall that a triangulation gives a CW structure and singular, simplicial, and cellular homology are isomorphic (see~\cite{Hatcher2002Algebraic} for details).
In this paper, we always consider homology with coefficients in $\Z_2$, so $\HH_p(\YY)$ is a vector space. 
We now investigate the changes of $\HH_p(\XX_{\leq a})$ for increasing values of $a$. 
Throughout this paper, we will assume that $f$ is \emph{tame} in the following sense: there is a finite partition $-\infty = a_0 < \min f = a_1 < \dots < a_N = \max f < \infty = a_{N+1}$ such that for all $i<n$ and $s,t \in [a_i,a_{i+1})$ with $s<t$, the homomorphism $\HH_p (\XX_{\leq s}) \to \HH_p(\XX_{\leq t})$ induced by the inclusion $\XX_{\leq s} \hookrightarrow \XX_{\leq t}$ is an isomorphism, and similarly, for all $s,t \in (a_i,a_{i+1}]$ with $s<t$, the homomorphism $\HH_p (\XX_{\geq t}) \to \HH_p(\XX_{\geq s})$ induced by the inclusion $\XX_{\geq t} \hookrightarrow \XX_{\geq s}$ is an isomorphism. 
Moreover, $\HH_p (\XX_{\leq a_i})<\infty$ for all $i$. This implies that $\rg_f$ is a Reeb graph. 
We call $a_i$ a \emph{homologically critical level} of $f$.

Consider the following sequence of vector spaces, %
\begin{equation}
0 = \HH_p (\XX_{\leq a_0}) \to \HH_p(\XX_{\leq  a_1}) \to \cdots \to \HH_p(\XX_{\leq a_N}) = \HH_p(\XX), 
\label{eqn:tradseq}
\end{equation}
where each homomorphism $\mu_i^j: \HH_p (\XX_{\leq a_i}) \to \HH_p (\XX_{\leq  a_j})$ is induced by the canonical inclusion $\XX_{\leq a_i} \hookrightarrow \XX_{\leq a_j}$.

A homology class $h$ is created at $a_i$ if \[h \in \HH_p(\XX_{\leq a_i})\text{ but }h \not\in \im \mu_{i-1}^i.\]
It is destroyed at $a_j$ if \[\mu_i^{j-1}(h)\notin\im \mu_{i-1}^{j-1}\text{ but }\mu_i^j(h)\in\im \mu_{i-1}^j.\]
Persistent homology records such birth and death events. %
In particular, the $p$-th \emph{ordinary persistence diagram} of~$f$, denoted by $\Dg_p(f)$, is a multiset of pairs $(b, d)$ corresponding` to the birth value $b$ and death value $d$ of some $p$-dimensional homology class. 
See Figure~\ref{fig:twotypes}~(c) for an example of the $0$-th persistence diagram. 
(We note that this is only an intuitive and informal introduction of the persistence diagram; see \cite{EH09,ZC05} for a more formal treatment.) 
\begin{figure*}[tbp]
\begin{center}
\begin{tabular}{cccccc}
\includegraphics[height=2.8cm]{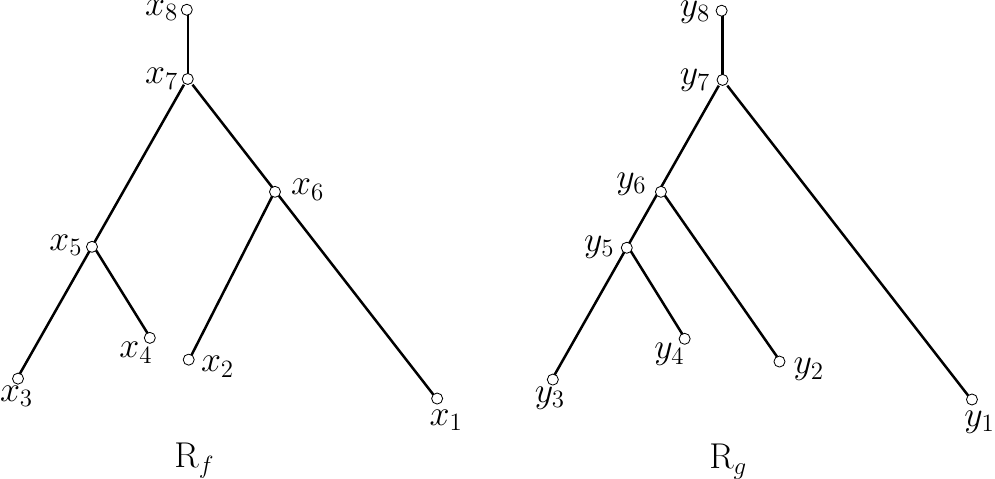} &
\includegraphics[height=3.3cm]{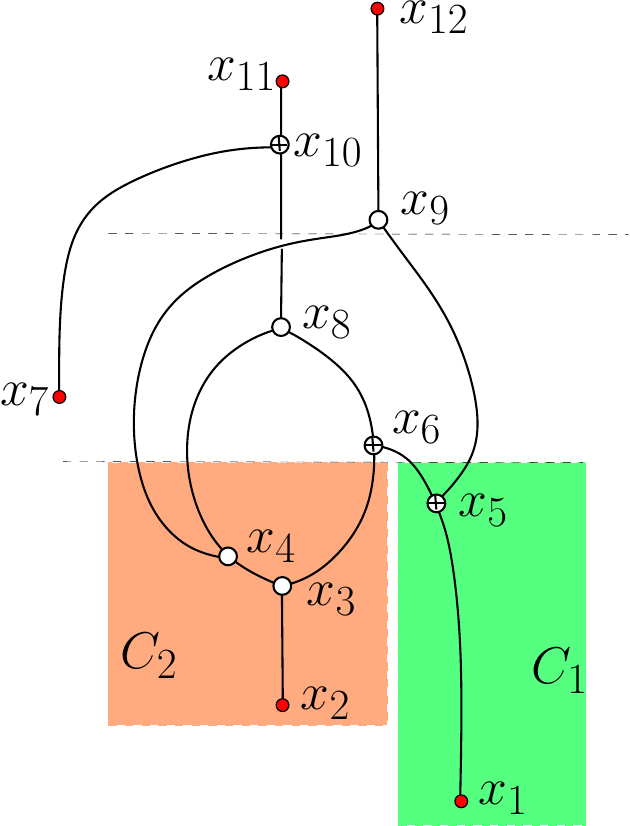} & 
\includegraphics[height=2.9cm]{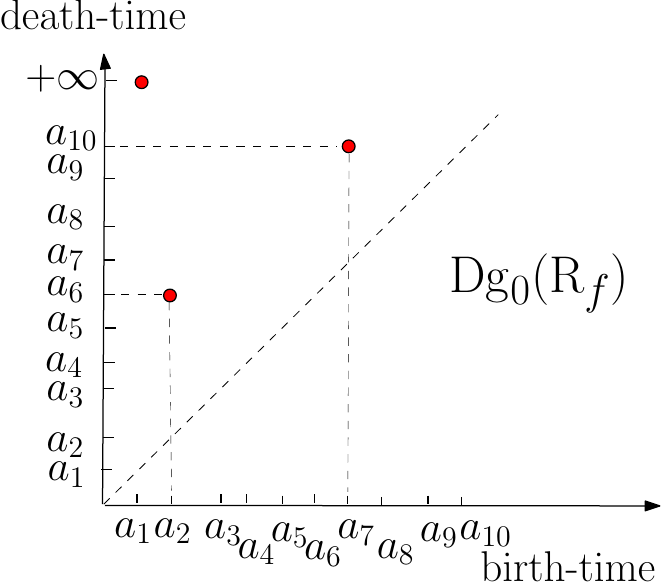} &
\includegraphics[height=2.9cm]{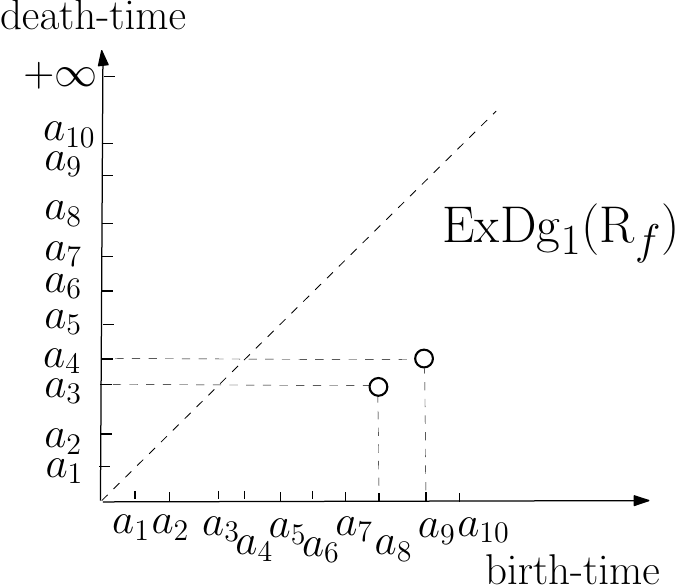} \\
(a) &  (b) & (c) & (d) 
\end{tabular}
\end{center}
\vspace*{-0.2in}\caption{{\small (a) The height functions on the two trees have the same persistence diagrams (thus the bottleneck distance between their persistence diagrams is 0), but their tree structures are different. The \GHlike{} distance will differentiate these two cases. 
In (b), solid dots are minimum and maximum, empty dots are essential forks, and crossed-dots are ordinary forks. The ordinary fork $x_6$ merges components $C_1$ and $C_2$ in the sublevel set below it, represented by minima $x_1$ and $x_2$ respectively. 
The resulting critical pair $(x_2, x_6)$ gives rise to the point $(a_2, a_6)$ in $\Dg_0(\rg_f)$ in (c), where $a_i = f(x_i)$ for $i \in [1, 12]$. 
The essential fork $x_9$ is paired with the up-fork $x_4$, corresponding to the \mycanonical{} loop $x_4 x_8 x_6 x_5 x_9 x_4$ created at $x_9$. This gives rise to the point $(a_4,a_9)$ in the extended persistence diagram $\eDg_1(\rg_f)$ in (d).  }
\label{fig:twotypes}}
\end{figure*}

In general, since $\HH_p (\XX)$ may not be trivial, any nontrivial homology class of $\HH_p (\XX)$, referred to as an \emph{essential homology class}, will never die during the sequence in \cref{eqn:tradseq}. 
For example, there is a point $(a_1, \infty)$ in \cref{fig:twotypes} (b) indicating a $0$-dimensional homology class that was created at $a_1$ but never dies. 
By appending a sequence of relative homology groups to \cref{eqn:tradseq}, we obtain a pairing of the essential homology classes (i.e., homology classes of $\HH_p(\XX)$): 
\begin{multline}
0 = \HH_p (\XX_{\leq a_0}) \to \cdots \to \HH_p(\XX_{\leq a_N}) = \HH_p(\XX) = \\ \HH_p(\XX, \XX_{\geq a_N}) \to \HH_p (\XX, \XX_{\geq a_{N-1}}) \to \cdots  \to \HH_p (\XX, \XX_{\geq a_0}) = 0.  
\label{eqn:extseq}
\end{multline}
Here $\XX_{\geq a}$ denotes the \emph{superlevel set} $\XX_{\geq a} = \{ x\in \XX \mid f(x) \ge a \}$. 
Since the last vector space $\HH_p(\XX, \XX_{\geq a_0}) = 0$, each essential homology class will necessarily die in the \emph{relative part} of the above sequence at some relative homology group $\HH_p(\XX, \XX_{\geq a_j})$. 
We refer to the multiset of points encoding the birth and death time of $p$th homology classes created in the ordinary part and destroyed in the relative part of the sequence in \cref{eqn:extseq} as the \emph{$p$th extended persistence diagram} of $f$, denoted by $\eDg_p(f)$. In particular, for each point $(b, d)$ in $\eDg_p(f)$ there is a (essential) homology class in $\HH_p(\XX)$ that is born in $\HH_p(\XX_{\leq b})$ 
and dies at $\HH_p(\XX, \XX_{\geq d})$. %
See \cref{fig:twotypes} (d) for an example; note that the birth time is larger than or equal to death time in the extended persistence diagram. 

\paragraph{Reeb graphs and persistent homology}
There is a natural way to define and quantify features of the Reeb graph, which turns out to be consistent with the information encoded in the diagrams 
$\Dg_0(\rg_f)$ and $\eDg_1(\rg_f)$ of the function $\tilde f: \rg_f \rightarrow \reals$. 
Since $\rg_f$ is a graph, we only need to consider persistent homology in dimensions 0 and 1. 
We provide an intuitive treatment below. 
For simplicity of exposition, we assume that all nodes have different function values and are either a minimum, a maximum, a down-fork with down-degree 2, or an up-fork with up-degree 2, noting that these assumptions hold in the generic case.

Imagine that we sweep through $\rg_f$ in increasing values of $a$ and inspect changes in $\HH_0( (\rg_f)_{\leq a})$. %
New components in the \emph{sublevel sets} are created at minima of $\rg_f$. 
For any value $a$, associate each component $C$ in the sublevel set of $(\rg_f)_{\leq a}$ with the lowest local minimum $m$ contained in $C$: intuitively, $C$ is created at $m$. 

Consider a down-fork node $s$ with $a = f(s)$. %
If the two lower branches are contained in different connected components $C_1$ and $C_2$ of the open sublevel set $(\rg_f)_{<a}$, for reasons that will become obvious soon we call~$s$ an \emph{ordinary fork}; otherwise, it is an \emph{essential fork}. 
Let $x_1$ and $x_2$ be the global minimum of $C_1$ and $C_2$, respectively. 
Assume that $f(x_1)<f(x_2)$.
Then the homology class $[x_2+x_1]$ is created at $f(x_2)$ and dies at $f(s)$, giving rise to a unique point $(f(x_2), f(s))$ in the $0$-th ordinary persistence diagram $\Dg_0(\rg_f)$. 
Indeed, 
there is a one-to-one correspondence between the set of 
such pairs of minima and ordinary down-forks
and points in the $0$th persistence diagram $\Dg_0(\rg_f)$ with finite coordinates; see \cref{fig:twotypes} (b) and (c). 
A symmetric procedure with $-f$ will produce pairs of maxima and ordinary up-forks, corresponding to points in the $0$th persistence diagram $\Dg_0(\rg_{-f})$. 
Together, these pairs capture the \emph{branching features} of a Reeb graph. 

If, on the other hand, the two lower branches of $s$ are connected in the 
sublevel set, we call $s$ an \emph{essential fork}; see \cref{fig:twotypes} (b) and (d). 
In this case, some cellular 1-cycle in the sublevel set $(\rg_f)_{\leq a}$ is born at $a$. 
Since $\rg_f$ is a graph, this cycles is non-trivial in $\rg_f$, and their corresponding homology classes will not be destroyed in ordinary persistent homology. 
Consider the unique cycle $\gamma$ with largest minimum value of $f$ among all cycles born at~$a$ and corresponding to an embedded loop in $\rg_f$.
Let $s'$ be the point achieving the minimum on $\gamma$. Then the cycle $\gamma$ is created at $f(s)$ during the ordinary sequence of \cref{eqn:extseq}, and killed at time $f(s')$ in the extended part, giving rise to a unique point $(\tilde f(s'), \tilde f(s))$ in the $1$st extended persistence diagram of $\tilde f$. It turns out that $s'$ is necessarily an essential up-fork \cite{AEHW06}, and we call such a pair $(s', s)$ an \emph{essential pair}. 
Indeed, 
the collection of essential pairs has a one-to-one correspondence to points in $\eDg_1(\rg_f)$. (The extended persistence diagram $\eDg_1(\rg_{-f})$ is the reflection of $\eDg_1(\rg_f)$ and thus encodes the same information as $\eDg_1(\rg_f)$.) 
These essential pairs capture the \emph{\mycycle{} features} of a Reeb graph. 

In short, the branching features and \mycycle{} features of a Reeb graph give rise to points in the $0$th ordinary and $1$st extended persistence diagrams, respectively. 
However, the persistence diagram captures only the lifetime of features, but not how these features are connected; see \cref{fig:twotypes} (a). In this paper we aim to develop a way of measuring distance between Reeb graphs which also takes into account the graph structure. 

\section{A Metric on Reeb Graphs}
\label{sec:metric}

Throughout this paper, by a \emph{distance} we will mean an extended pseudometric, i.e., a binary symmetric function $d$ with values in $[0,\infty]$ that satisfies $d(x,x)=0$ and $d(x,z)\leq d(x,y)+d(y,z)$.  From now on, consider two Reeb graphs $\rg_f$ and $\rg_g$, 
generated by tame functions $f: \XX \rightarrow \reals$ and $g: \YY \rightarrow \reals$. 
While topologically each Reeb graph is simply a 1-dimensional regular CW complex, it is important to note that it also has a function associated with it (induced from the input scalar field). 
Hence the distance should depend on both the graph structures and the functions $\tilde f$ and $\tilde g$. 
Approaching the problem through graph isomorphisms does not seem viable, as small perturbation of the function $f$ may create an arbitrary number of new branches and loops in the graph. 
To this end, we first put the following metric structure on a Reeb graph~$\rg_f$ to capture information about the function~$f$ . 

Specifically, 
for any two points $u, v \in \rg_f$ (not necessarily nodes), let $\pi$ be a continuous path between $u$ and $v$. The \emph{range} of this path is the interval $\range(\pi) := [\min_{x\in \pi} f(x), \max_{x\in \pi} f(x)]$, and its \emph{height} is simply the length of the range, denoted by $ \height(\pi) = \max_{x\in \pi} f(x) - \min_{x\in \pi} f(x)$. 
We define the distance 
\begin{equation}
\dreeb_f(u,v) = \min_{\pi: u \leadsto v} \height(\pi),
\label{eqn:df}
\end{equation}
where $\pi$ ranges over all paths from $u$ to $v$, denoted by $u \leadsto v$.
Equivalently, $\dreeb_f(u,v)$ is the minimum length of any interval $I$ such that $u$ and $v$ are in the same connected component of $f^{-1}(I)$.
Note that this is in fact a metric, since on Reeb graphs there is no path of constant function value between two points $u\neq v$.
We put~$f$ in the subscript to emphasize the dependency on the input function. 
Intuitively, $\dreeb_f(u,v)$ is the minimal function difference one has to overcome to move from $u$ to $v$. 

To define a distance between $\rg_f$ and $\rg_g$, we need to connect the spaces $\rg_f$ and $\rg_g$, 
which is achieved by continuous maps $\leftmap: \rg_f \rightarrow \rg_g$ and $\rightmap: \rg_g \rightarrow \rg_f$.  
Borrowing from the definition of Gromov--Hausdorff distance given in~\cite{Kalton2008Distances},
let 
\begin{align}
G(\leftmap,\rightmap)&=\big\{(x,\leftmap(x)):x\in\rg_f\}\cup\{(\rightmap(y),y):y\in\rg_g\big\} ~~\text{and} \nonumber \\
D(\leftmap,\rightmap)&=\sup_{(x,y),(\tilde x,\tilde y)\in G(\leftmap,\rightmap)}\frac12\left|\dreeb_f(x,\tilde x)-\dreeb_g(y,\tilde y)\right|, 
\label{eqn:GD}
\end{align}
where $G(\leftmap,\rightmap)$, the union of the graphs of $\phi$ and $\psi$, can be thought of as the set of correpondences between $\rg_f$ and $\rg_g$ induced by maps $\leftmap$ and $\rightmap$. 
The \emph{\GHlike{} distance} is defined as: 
\begin{align}
\DD(\rg_f, \rg_g) = \inf_%
{\leftmap,\rightmap}
\max \big\{ D(\leftmap,\rightmap), \|f-g\circ\leftmap\|_\infty, \|f\circ\rightmap-g\|_\infty \big\},  
\label{eqn:distdef}
\end{align}
where $\leftmap$ and $\rightmap$ 
range over all continuous maps between $\rg_f$ and $\rg_g$. 
The latter two terms address the fact that composition with isometries of the real line (translation, negation) does not affect the metric $\dreeb_f$ induced by a function $f$.
Note that this definition can be considered as a continuous, functional variant of the Gromov--Hausdorff distance, with the additional condition that the maps between $\rg_f$ and $\rg_g$ are required to be continuous, and taking into consideration the difference between the function values of corresponding points as well. In fact, this definition is the continuous version of the extended Gromov-Hausdorff distance introduced in Definition 2.4 of \cite{CCGMO09}. 
Furthermore, it turns out that for metric graphs, our continuous version of the extended Gromov-Hausdorff (GH) distance is a constant factor approximation of the extended GH distance induced by arbitrary maps, which we will make precise and show later in Section \ref{sec:GHrelation}. 
As an example, consider the two trees in \cref{fig:twotypes}. The distortion of distances in the two trees in (a) is large no matter how we identify correspondences between points from them. Thus the \GHlike{} distance between them is also large, making it more discriminative than the bottleneck distance between persistence diagrams. 

It is straightforward to show that the functional distortion distance is a pseudometric, and a metric on the equivalence classes of Reeb graphs up to function-preserving homeomorphisms.
Note that this definition and our results apply to any graph $G$ with a function $f$ that is strictly monotonic on the edges. This is easy to see since in that case $\rg_f=G$ and $\tilde f=f$.

\section{Properties of the Functional Distortion Distance}
\label{sec:properties}

In this section, we show that the \GHlike{} distance is both stable (upper bounded)
and discriminative (lower bounded). Note that it is somewhat meaningless to discuss the stability of a distance alone without understanding its discriminative power -- the constant function with value $0$ is a pseudo-metric too. 

\subsection{Stability}
\label{sec:traditional}
Suppose that $f$ and $g$ are defined on the same domain $\XX$. 
Furthermore, assume that the quotient maps $\mu_f$ and $\mu_g$ have continuous sections (right-inverses) $s_f$ and $s_g$, i.e., $\mu_f \circ s_f = \id_{\rg_f}$ and $\mu_g \circ s_f = \id_{\rg_f}$. 
Then we have the following stability result for the metric $\DD$ for Reeb graphs. 
\begin{theorem}
Let $f, g: \XX \to \reals$ be tame functions whose Reeb quotient maps $\mu_f: \XX \to \rg_f$ and $\mu_g: \XX\to \rg_g$ have continuous sections. Then $\DD(\rg_f, \rg_g) \le \|f - g\|_\infty$.  
\label{thm:stability}
\end{theorem}

\begin{proof}
Let $\delta=\|f - g\|_\infty$.
Choose $\leftmap=\mu_g \circ s_f,\rightmap=\mu_f \circ s_g$.
Now assume that $(x,y),(\tilde x,\tilde y)\in G(\leftmap,\rightmap)$, with $G(\leftmap, \rightmap)$ as defined in \cref{eqn:GD}.
Let $\xi=s_f(x)$, $\tilde\xi=s_f(\tilde x)$, $\upsilon=s_g(y)$, and $\tilde\upsilon=s_g(\tilde y)$.
Note that either $y=\leftmap(x)$ or $x=\rightmap(y)$, so either
\[\mu_g(\upsilon)=y=\leftmap(x)=\mu_g \circ s_f(x)=\mu_g(\xi)\]
or
\[\mu_f(\xi)=x=\rightmap(y)=\mu_f \circ s_g(y)=\mu_f(\upsilon).\]
In other words,
$\xi$ and $\upsilon$ are either in the same level set component of $f$ or of $g$, and analogously for $\tilde\xi$ and $\tilde\upsilon$.

Let $[a,b]$ be such that $x,\tilde x$ are connected in $\tilde f^{-1}[a,b]$. 
Then $\xi$ and $\tilde\xi$ are connected in \[f^{-1}[a,b]\subset g^{-1}{[a-\delta,b+\delta]},\] and hence, by the above, $\upsilon$ and $\tilde\upsilon$ are also connected in $g^{-1}{[a-\delta,b+\delta]}$. Therefore, $y$ and $\tilde y$ are connected in $\tilde g^{-1}{[a-\delta,b+\delta]}$. We conclude that $(b-a)+2\delta \geq \dreeb_g(y,\tilde y)$. Since this inequality holds for all intervals $[a,b]$ with the stated properties, we have $\dreeb_f(x,\tilde x)+2\delta \geq \dreeb_g(y,\tilde y)$. By symmetry of the above argument, we also have $\dreeb_g(y,\tilde y)+2\delta \geq \dreeb_f(x,\tilde x)$.
Moreover, by assumption, \[\max_{x \in \rg_f} | f(x) - g\circ \leftmap(x)| \le \max_{y \in \XX} |f(y) - g(y)| =\delta.\] Similarly, \[\max_{x \in \rg_g} | g(x) - f\circ \rightmap(x)| \le \max_{y \in \XX} |g(y) - f(y)| =\delta.\] Hence $\| f- g\circ\leftmap\|_\infty \le \delta$ and $\|f\circ\rightmap - g \|_\infty \le \delta$. Combining these with \cref{eqn:distdef}, we conclude that $\DD(\rg_f, \rg_g) \le \|f - g\|_\infty$.
\end{proof}

The above result is similar to the stability result obtained for the bottleneck distance between persistence diagrams~\cite{CEH09}, as well as for the $\eps$-interleaving distance between merge trees~\cite{MBW13}. 
Note that the above stated conditions (on the existence of continuous sections) are only required for the stability result. They are not necessary for \cref{thm:traditional,thm:extendedbound}. The condition on the common domain $X$ is required so that we can define the distance between input scalar fields $f$ and $g$. The condition on the existence of sections is purely technical; it holds e.g.\@ for Morse functions or for generic PL functions.

\subsection{Relation to Ordinary Persistence Diagram}

The main part of this section is devoted to discussing the discriminative power of the \GHlike{} distance for Reeb graphs. In particular, we relate this distance with the bottleneck distance between persistence diagrams. 
We have already seen in \cref{fig:twotypes} (a) that there are cases where the \GHlike{} distance is strictly larger than the bottleneck distance between persistence diagrams of according dimensions ($0$th ordinary and $1$st extended persistence diagrams). 
We next show that, up to a constant factor, the \GHlike{} distance is always at least as large as the bottleneck distance. 
We take different approaches to investigate the branching features (ordinary persistence diagram) and the \mycycle{} features (extended persistence diagram). 
For the former, we have the following main result. The proof is rather standard, and similar to the result on interleaving distance between merge trees in \cite{MBW13}. %

\begin{theorem}
$d_B(\Dg_0(\rg_f), \Dg_0(\rg_g)) \le \DD(\rg_f, \rg_g).$ ~Similarly, $d_B(\Dg_0(\rg_{-f}), \Dg_0(\rg_{-g})) \le \DD(\rg_f, \rg_g)$.
\label{thm:traditional}
\end{theorem}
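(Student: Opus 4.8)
The plan is to show that any pair of continuous maps $\leftmap : \rg_f \to \rg_g$ and $\rightmap : \rg_g \to \rg_f$ realizing a value $\eps$ in Eqn~(\ref{eqn:distdef}) induces a $2\eps$-interleaving (in the sense of persistence modules) between the sublevel-set filtrations of $(\rg_f, f)$ and $(\rg_g, g)$, from which the bottleneck bound follows by the standard Algebraic Stability Theorem for persistence modules. Concretely, for each $a$ I would restrict $\leftmap$ and $\rightmap$ to the sublevel sets: the first term of~(\ref{eqn:distdef}) guarantees $f(x) \le a \Rightarrow g(\leftmap(x)) \le a + \eps$, so $\leftmap$ sends $(\rg_f)_{\le a}$ into $(\rg_g)_{\le a+\eps}$, and symmetrically $\rightmap$ sends $(\rg_g)_{\le a}$ into $(\rg_f)_{\le a+\eps}$. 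Applying $\HH_0$ gives maps of persistence modules shifted by $\eps$ in each direction. The key point is that these are genuine interleaving maps: the composite $\rightmap \circ \leftmap$ maps $(\rg_f)_{\le a}$ into $(\rg_f)_{\le a + 2\eps}$, and by the third term of~(\ref{eqn:distdef}), for every $x \in (\rg_f)_{\le a}$ the point $x$ and $\rightmap \circ \leftmap(x)$ are joined by a path staying in $f^{-1}([f(x)-\eps, f(x)+\eps]) \subseteq (\rg_f)_{\le a+\eps} \subseteq (\rg_f)_{\le a+2\eps}$; hence on $\HH_0$ the composite equals the inclusion-induced map $(\rg_f)_{\le a} \hookrightarrow (\rg_f)_{\le a+2\eps}$. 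The symmetric statement holds for $\leftmap \circ \rightmap$, so the two triangles required in the definition of a $2\eps$-interleaving commute.

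Once the interleaving is established, I would invoke the Algebraic Stability Theorem (Chazal et al., also Cohen-Steiner--Edelsbrunner--Harer): a $2\eps$-interleaving of the sublevel-set persistence modules implies $d_B(\Dg_0(\rg_f), \Dg_0(\rg_g)) \le 2\eps$. Taking the infimum over all $(\leftmap, \rightmap)$ and all valid $\eps$ then gives $d_B(\Dg_0(\rg_f), \Dg_0(\rg_g)) \le 2\,\DD(\rg_f, \rg_g)$. To get the sharp constant $1$ claimed in the theorem rather than $2$, I would instead split the shift asymmetrically — or, more cleanly, observe that the third/fourth terms of~(\ref{eqn:distdef}) bound the \emph{height} of the reconnecting path, not twice it, so the composite $\rightmap\circ\leftmap$ followed back through a path of height $\le \eps$ already lands within an $f$-interval of height $\eps$ around $x$; combined with the function-distortion bound this should be arranged to yield an $\eps$-interleaving directly, giving the constant $1$. (If the cleanest argument only yields $2\eps$, one still recovers the stated bound by the standard trick of re-deriving the interleaving with the tighter path-height estimate; I would present whichever bookkeeping gives the factor~$1$ matching the statement.)

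The second inequality, $d_B(\Dg_0(\rg_{-f}), \Dg_0(\rg_{-g})) \le \DD(\rg_f, \rg_g)$, follows for free by symmetry: the quantity $\DD(\rg_f, \rg_g)$ is manifestly invariant under replacing $f \mapsto -f$ and $g \mapsto -g$ simultaneously, since $|{-f(x)} - ({-g(\leftmap(x))})| = |f(x) - g(\leftmap(x))|$ and $\dreeb_{-f} = \dreeb_f$ (the height of a path is unchanged under negating the function). Hence $\DD(\rg_{-f}, \rg_{-g}) = \DD(\rg_f, \rg_g)$, and applying the first inequality to the pair $(-f, -g)$ immediately gives the second. I would state this as a one-line corollary of the first part.

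The main obstacle I anticipate is the care needed at the $\HH_0$ level to verify that the path-reconnection condition (the third/fourth terms) really does force the composite maps to agree with the inclusion maps on homology, and to track the shift constants precisely enough to land the factor $1$ rather than $2$ — in particular making sure that the path of height $\le \dreeb_f(x, \rightmap\circ\leftmap(x))$ witnessing $x \sim \rightmap\circ\leftmap(x)$ stays inside the correct sublevel set. Everything else (the restriction of continuous maps to sublevel sets, functoriality of $\HH_0$, and the black-box invocation of algebraic stability) is routine. This is why the authors remark the proof is standard and defer it to an appendix.
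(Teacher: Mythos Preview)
Your approach is essentially identical to the paper's: restrict $\leftmap$ and $\rightmap$ to sublevel sets using the function-distortion terms of Eqn~(\ref{eqn:distdef}), check that on $\HH_0$ the composites agree with the inclusion-induced maps using the path-height terms, and invoke the algebraic stability theorem of \cite{CCG09}. The argument for $-f,-g$ by symmetry is also exactly what the paper does.

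The only slip is terminological, and it is the source of your hedging paragraph. What you have verified in your first paragraph is already an $\eps$-interleaving, not a $2\eps$-interleaving. In the standard definition, an $\eps$-interleaving consists of module morphisms $F_a \to G_{a+\eps}$ and $G_a \to F_{a+\eps}$ whose composites equal the structure maps $F_a \to F_{a+2\eps}$ and $G_a \to G_{a+2\eps}$; the ``$2\eps$'' appearing in the composite is part of the definition of an $\eps$-interleaving. Since your individual maps each shift by $\eps$, and you have shown (correctly) that the connecting path lies in $(\rg_f)_{\le a+\eps} \subseteq (\rg_f)_{\le a+2\eps}$ so that on $\HH_0$ the composite coincides with the inclusion $(\rg_f)_{\le a} \hookrightarrow (\rg_f)_{\le a+2\eps}$, you already have an $\eps$-interleaving and hence $d_B \le \eps$. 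No further bookkeeping or ``asymmetric splitting'' is needed to get the constant~$1$; your second paragraph can be deleted.
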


\begin{proof}
Let $\leftmap: \rg_f \rightarrow \rg_g$ and $\rightmap: \rg_g \rightarrow \rg_f$  be the optimal continuous maps that achieve $\delta = \DD(\rg_f, \rg_g)$ \footnote{If the $\DD(\rg_f, \rg_g)$ is achieved only in the limit, then one can extend the argument by constructing two sequences of maps that are optimal up to an arbitrarily small additive term $\eps$ and taking the limit in the distance they induce.}. 
First, note that by \cref{eqn:distdef}, $\max_{x\in \rg_f} |f(x) - g(\leftmap(x))| \le \optd$. Hence $\leftmap:(\rg_f)_{\leq\alpha} \to (\rg_g)_{\leq\alpha+\optd}$ is well defined for any $\alpha\in\R$. 
Similarly,  $\rightmap:(\rg_g)_{\leq\beta} \to (\rg_f)_{\leq \beta+\optd}$ is well defined for any $\beta\in\R$.
Let $i$ denote the canonical inclusion maps, and for any map $\rho$, let $\rho_*$ indicate the induced homomorphism on homology. 
We now show that the following %
diagram commutes %
for any real value $\alpha$: 
\[
\begin{tikzcd}[column sep=-16pt]
H_0\left((\rg_f)_{\leq\alpha}  \arrow{rr}{i_*} \ar{dr}{\leftmap_*} \right) & & H_0\left((\rg_f)_{\leq\alpha+2\optd} \right) \\
 & H_0\left((\rg_g)_{\leq\alpha+\optd} \ar{ur}{\rightmap_*} \right) 
\end{tikzcd}
\label{eqn:diag1}
\]

To show the commutativity of the above %
diagram, 
we need to show that for any $0$-cycle $c$ in $(\rg_f)_{\leq\alpha}$, $[i(c)] = [\rightmap \circ \leftmap (c)]$, where $[c']$ is the homology class represented by a cycle $c'$. 
Assume w.l.o.g.\ that the $0$-cycle $c = x_1 + x_2$ contains only two points $x_1, x_2$ from $(\rg_f)_{\leq\alpha}$; 
the argument easily extends to the case where $c$ contains an arbitrary even number of points. 
Let $x_1' = \rightmap \circ \leftmap (x_1)$ and $x_2' = \rightmap \circ \leftmap (x_2)$. 
Since $\dreeb_f(x_1, x_1') \le \optd$, we know that there is a path (1-chain) $\pi(x_1, x_1')$ with height at most $\optd$ connecting $x_1$ and $x_1'$. In other words, $x_1$ and $x_1'$ are connected in $(\rg_f)_{\leq\alpha+\optd} \subseteq (\rg_f)_{\leq \alpha+2\optd}$. Similarly, $x_2$ and $x_2'$ are connected in $(\rg_f)_{\leq\alpha+2\optd}$. 
Hence the new $0$-cycle $c' = x_1' + x_2' = \rightmap \circ \leftmap (c)$ is homologous to $c$ in $(\rg_f)_{\leq\alpha+2\optd}$. Thus, $[i(c)] = [c'] = [\rightmap \circ \leftmap (c)]$. 

A similar argument also shows that the symmetric versions of the diagrams in \cref{eqn:diag1} (by switching the roles of $\rg_f$ and $\rg_g$) also commute at the 0th homology level. This means that the two persistence modules $\{ \HH_0( (\rg_f)_{\leq \alpha}) \}_\alpha$ and $\{\HH_0 ((\rg_g)_{\leq \beta}) \}_\beta$ are strongly $\optd$-interleaved (as introduced in \cite{CCG09}). The first half of \cref{thm:traditional} then follows from Theorem 4.8 of \cite{CCG09}. 

The same argument works for the scalar fields $-\tilde f: \rg_f \to \reals$ and $-\tilde g: \rg_g \to \reals$, which proves the second half of \cref{thm:traditional}. Recall that $\Dg_0(\rg_f)$ captures minimum and down-fork persistence pairs, while $\Dg_0(\rg_{-f})$ captures up-fork and maximum persistence pairs.
\end{proof}

\subsection{Relation to Extended Persistence Diagram}
\label{sec:extended}
Recall that the range of \mycycle{} features 
in the Reeb graph correspond to points in the 1st extended persistence diagram. 
In what follows we will show the following main theorem, which states that $\DD(\rg_f, \rg_g)$ is bounded below by the bottleneck distance between the 1st extended persistence diagrams $\eDg_1(\rg_f)$ and $\eDg_1(\rg_g)$. 
\begin{theorem}
$d_B( \eDg_1(\rg_f), \eDg_1(\rg_g) ) \le 3\DD(\rg_f, \rg_g).$
\label{thm:extendedbound}
\end{theorem}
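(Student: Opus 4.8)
\textbf{Proof proposal for Theorem~\ref{thm:extendedbound}.}

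The plan is to work with the metric $\dreeb_f$ on $\rg_f$ and the almost-inverse maps $\leftmap, \rightmap$ realizing (up to an $\eps$ with $\eps$ arbitrarily close to $\optd := \DD(\rg_f,\rg_g)$) the infimum in~(\ref{eqn:distdef}), and produce from these data a matching between $\eDg_1(\rg_f)$ and $\eDg_1(\rg_g)$ whose bottleneck cost is at most $3\optd$. The key fact I would exploit is the description of points of $\eDg_1(\rg_f)$ recalled in Section~\ref{sec:background}: a point $(b,d)$ corresponds to an essential pair $(s',s)$, where $s$ is a down-fork essential saddle, $s'$ is the highest minimum on the ``\mycanonical{}'' cycle $\loopone_s$ created at $s$ (equivalently, a point realizing $\dreeb_f$ between the two lower branches of $s$ within the sublevel set, and an up-fork essential saddle). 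So a point of the diagram is really a pair of function values of two nodes carrying a certifying cycle. The strategy is: given $\eps>0$, for a point $p=(f(s'),f(s))$ of $\eDg_1(\rg_f)$ of persistence $> 6\optd$ (say), push the certifying cycle $\loopone_s$ forward through $\leftmap$ to a $1$-cycle $\leftmap(\loopone_s)$ in $\rg_g$, then extract from it, via the structure of essential pairs in $\rg_g$, a point $q$ of $\eDg_1(\rg_g)$ with $\|p-q\|_\infty \le 3\optd + O(\eps)$; pairs of small persistence are matched to the diagonal.

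First I would set up the homological bookkeeping. Both $\rg_f,\rg_g$ are graphs, so $\HH_1$ is computed by cycles, and for a graph $\HH_1((\rg_f)_{\leq a})$ injects into $\HH_1(\rg_f)$; the birth value of an essential $1$-class is $\min_{x\in\gamma} f(x)$ maximized over representative cycles $\gamma$, and the death value is the super-level-set value at which the class becomes null in the relative sequence~(\ref{eqn:extseq}), which on a graph is dually $\max_{x\in\gamma} f(x)$ minimized over representatives. Thus for an essential $1$-class with representative $\gamma$, the interval $[\,\mathrm{birth},\mathrm{death}\,]$ is precisely the ``tightest'' range $[\min\gamma^*,\max\gamma^*]$ achievable by homologous cycles, and $\mathrm{death}-\mathrm{birth}$ is exactly the quantity $\min_{\gamma' \sim \gamma}\height(\gamma')$. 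This reformulation is what lets the metric $\dreeb$ interact with extended persistence: moving a cycle by $\leftmap$ changes the $f$- versus $g$-values by at most $\optd$ (first term of~(\ref{eqn:distdef})), so $\max$ and $\min$ over the cycle each move by at most $\optd$, giving $|\,\mathrm{birth}_g(\leftmap(\gamma)) - \mathrm{birth}_f(\gamma)\,| \le \optd$ and similarly for death — but only after one argues that $\leftmap(\gamma)$ is still a tight-enough representative. That last point is where the factor $3$ enters: $\rightmap(\leftmap(\gamma))$ need not equal $\gamma$, only be homologous to it via short (height $\le \optd$) ``vertical'' homotopies (fourth term of~(\ref{eqn:distdef})), so the correspondence $[\text{class in }\rg_f] \leftrightarrow [\text{class in }\rg_g]$ one gets from $\leftmap_*,\rightmap_*$ may fail to be a bijection on all classes, and one must restrict to classes of persistence large compared to $\optd$ and show $\leftmap_*$ and $\rightmap_*$ are mutually inverse \emph{there}. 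Concretely I expect: $\leftmap_*\rightmap_* = \mathrm{id}$ on the subspace of $\HH_1(\rg_g)$ spanned by classes of persistence $>2\optd$, because a vertical homotopy of height $\le\optd$ cannot change the tight range of a class by more than $2\optd$ and hence cannot kill it; this is the standard ``interleaving implies $\eps$-matching'' lemma (cf.\ \cite{CCG09}) adapted to the extended setting.

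The concrete steps, in order: (1) prove the reformulation above, that for $\rg_f$ a graph the extended-persistence interval of an essential $1$-class $\xi$ equals $[\,\min_{\gamma}\min_\gamma f,\ \min_\gamma \max_\gamma f\,]$ appropriately quantified — i.e.\ relate the persistence pairing to optimal cycle representatives; (2) show that for a cycle $\gamma$ in $\rg_f$ and the induced cycle $\leftmap(\gamma)$ in $\rg_g$, the tight range of $[\leftmap(\gamma)]$ is within $\optd$ of that of $[\gamma]$ on each end (using term~1 of~(\ref{eqn:distdef})), hence $\eps$-close as diagram points with $\eps = \optd$; (3) show $\rightmap_* \circ \leftmap_*$ acts on $\HH_1(\rg_f)$ as a map that, on each essential class $\xi$, returns a class whose tight range is within $2\optd$ of that of $\xi$ (using the $\dreeb_f$ term), and conclude that on classes of persistence $>2\optd$ the composition $\rightmap_*\leftmap_*$ is ``persistence-non-decreasing enough'' to be injective, and symmetrically; (4) assemble a partial bijection between the large-persistence points of the two diagrams with $\|\cdot\|_\infty$-cost $\le 2\optd+\optd = 3\optd$ — the extra $\optd$ coming from the slack needed to certify that the matched class on the far side is genuinely essential and tight — and match every remaining (small-persistence) point to the diagonal, which costs $\le \tfrac12\cdot(\text{persistence}) \le 3\optd$ by the threshold choice; (5) take $\eps\to 0$.

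The main obstacle I anticipate is step~(3)/(4): establishing that the almost-inverse pair $(\leftmap,\rightmap)$, which only controls things \emph{pointwise up to the vertical metric} $\dreeb$, induces a well-behaved correspondence on $\HH_1$ — in particular that a cycle and its image-under-$\rightmap\leftmap$ are homologous through a homotopy supported in a thin horizontal slab, so that their optimal ranges differ by $O(\optd)$ rather than being uncontrolled. Unlike the $\HH_0$ (branching) case, where one just tracks which sublevel-set component a minimum lands in, here one needs a genuinely $1$-dimensional interleaving argument: building from $\leftmap,\rightmap$ a pair of maps between the persistence modules $\{\HH_1((\rg_f)_{\leq a})\}$ and $\{\HH_1((\rg_g)_{\leq a})\}$ \emph{together with} the relative/super-level part, commuting up to a shift of $\optd$ in the filtration parameter, and then invoking (an extended-persistence version of) the algebraic stability theorem. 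Getting the extended (relative-homology) half of the sequence~(\ref{eqn:extseq}) to interleave correctly under $\leftmap,\rightmap$ — since these maps are not filtration-preserving on the nose, only up to $\dreeb$ — is the delicate part, and is presumably why the bound is $3\optd$ rather than $\optd$: each of the three legs (forward into $\rg_g$, back into $\rg_f$, and the vertical homotopy closing the loop) contributes one unit of $\optd$.
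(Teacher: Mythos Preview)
Your proposal has the right broad strokes---push thin cycles forward via $\leftmap$, use that $\rightmap\circ\leftmap$ is close to the identity modulo cycles of small height---and your step~(3) is essentially the content of the paper's Lemma~\ref{lem:cycledistortion}. But there is a genuine gap at the heart of steps~(1), (2), and~(4), and it is exactly the one the paper spends most of its effort on.

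The gap is this: on a graph $\HH_1 \cong \ZZ_1$ (there are no 2-cells), so each homology class has a \emph{unique} cycle representative, and the minimization over ``homologous cycles'' in your step~(1) is vacuous. The birth/death pair of an arbitrary cycle $\ell$ is thus simply $\range(\ell)$. But---crucially---$\range(\ell)$ is \emph{not} in general a point of $\eDg_1$. Diagram points correspond one-to-one to elements of a \emph{thin basis} $\gset_g = \{\looptwo_1,\dots,\looptwo_m\}$; an arbitrary cycle decomposes as a sum of thin basis elements, and its range can strictly contain the range of every summand. So when you push a thin cycle $\gamma_k$ to $\leftmap(\gamma_k)$ and observe that $\range(\leftmap(\gamma_k))$ is $\optd$-close to $\range(\gamma_k)$, you have not yet produced a \emph{diagram point} of $\rg_g$, only a sum of them. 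Your step~(2) therefore does not yield ``$\eps$-close as diagram points''; it yields a cycle whose range is controlled but which may decompose into many thin basis cycles---some $3\optd$-close to $\gamma_k$, some of small height, and some neither. Extracting a single diagram point from this sum, \emph{injectively} over all $k$, is the whole difficulty.

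The paper's proof is precisely a mechanism for this extraction. It introduces the \emph{dominating cycle} $\domcycle(\ell)$ (the tallest thin basis element in the decomposition of $\ell$), proves a key modification step (Proposition~\ref{prop:either_large_or_small}) replacing each $\gamma_k$ by a cycle $\widehat\gamma_k$ whose push-forward decomposes only into $3\optd$-close and small basis elements, then uses a linear-algebra argument (the submatrix $\widetilde\Fmatrix$ has independent columns, Lemma~\ref{lem:submatrixindependent}) together with Hall's marriage theorem (Corollary~\ref{cor:bipartitematching}) to obtain an \emph{injective} assignment of diagram points, and finally an augmenting-path construction to merge the two one-sided injections $\mathsf F,\mathsf G$ into a single $3\optd$-matching. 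None of this is supplied by your outline, and the ``extended-persistence version of the algebraic stability theorem'' you hope to invoke in step~(4) does not exist off the shelf: as the paper itself remarks (Appendix~\ref{appendix:extended-discussion}), the extended module is not indexed by $\R$, so the interleaving framework of \cite{CCG09} does not apply directly---which is exactly why the argument here is combinatorial rather than module-theoretic.
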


For simplicity of exposition, we assume that $\DD(\rg_f, \rg_g)$ can be achieved by optimal continuous maps $\leftmap: \rg_f \rightarrow \rg_g$ and $\rightmap: \rg_g \rightarrow \rg_f$. The case where $\DD(\rg_f, \rg_g)$ is achieved in the limit can be handled by considering a sequence of continuous maps that are optimal up to an arbitrarily small additive term $\epsilon$. 
Let $\optd = \DD(\rg_f, \rg_g)$.

\paragraph{\Mycanonical{} bases}
Let $\ZZ_1(\rg_f)$ be the $1$-dimensional cellular cycle group of $\rg_f$ with coefficients in $\Z_2$, i.e., the subgroup of the $1$-dimensional cellular chains with zero boundary. 
Since the Reeb graph has the structure of a 1-dimensional CW complex, the 1-dimensional cellular boundary group is trivial, and so every \emph{cellular} 1-cycle 
of $\rg_f$ represents a unique homology class\footnote{Note that the same is not true for singular homology; this is the reason why we consider cellular cycles here.} in $\HH_1(\rg_f)$; that is, $\HH_1(\rg_f) \cong \ZZ_1(\rg_f)$.  

For a cellular 1-cycle $\gamma=\sum_\alpha e_\alpha$, let $\im \gamma$ denote the union of the images of the characteristic maps %
for all 1-cells (edges) $e_\alpha$. %
Let $\range(\gamma) = [ \min_{x\in \im \gamma} f(x), \max_{x \in \im \gamma} f(x)]$ denote the \emph{range} of a cycle $\gamma$, and let $\height(\gamma)$ be the length of this interval. 
A cycle is \emph{thinner} than another one if its height is strictly smaller. 
A cycle $\gamma$ is \emph{\mycanonical{}} if it cannot be written as a linear combination of thinner cycles. 
See \cref{fig:twotypes} (b), where the cycle $x_4x_8x_6x_5x_9x_4$ is \mycanonical{}, while the cycle $x_3x_4x_9x_5x_6x_3$ is not. 
Given a basis of $\ZZ_1(\rg_f)$, 
consider the sequence of the heights of the cycles contained in it, ordered in non-decreasing order. 
A basis for $\ZZ_1(\rg_f)$ is a \emph{\mycanonical{}} basis if its height sequence is 
less than or equal to that of any other basis of $\ZZ_1(\rg_f)$ in the lexicographic order. %
Obviously, each cycle in a \mycanonical{} basis is necessarily a \mycanonical{} cycle. 

From now on, we fix an arbitrary \mycanonical{} basis $\gset_f = \{ \loopone_1, \ldots, \loopone_\frank \}$ of $\ZZ_1(\rg_f)$ and $\gset_g = \{ \looptwo_1, \ldots, \looptwo_\grank \}$ of $\ZZ_1(\rg_g)$, with $\frank$ and $\grank$ being the rank of $\ZZ_1(\rg_f)$ and $\ZZ_1(\rg_g)$, respectively. 
It is known \cite{CEH09} that every cycle in a \mycanonical{} basis of $\rg_f$ is necessarily a \mycanonical{} cycle, and the ranges $[b, d]$ of cycles in $\gset_f$ (resp.~in $\gset_g$) correspond one-to-one to the points $(b, d)$ in the 1st extended persistence diagram $\eDg_1(\rg_f)$ (resp.~in $\eDg_1(\rg_g)$). 
For example, in \cref{fig:twotypes} (b), the two cycles $x_3x_4x_8x_6x_3$ and $x_4x_8x_6x_5x_9x_4$ form a \mycanonical{} basis, corresponding to points $(\tilde f(x_8), \tilde f(x_3))$ and $(\tilde f(x_9), \tilde f(x_4))$ in $\eDg_1(\rg_f)$ in (d). 

Given any cycle $\aloop$ of $\rg_f$ (resp.~of $\rg_g$), we can represent $\aloop$ uniquely as a linear combination of cycles in $\gset_f$ (resp.~$\gset_g$), which we call the \emph{\mydecomposition{} of $\aloop$}; we omit the reference to $\gset_f$ and $\gset_g$ since they will be fixed from now on.  
The \mycanonical{} cycle with the largest height from the \mydecomposition{} of $\aloop$ is called the \emph{dominating cycle of $\aloop$}, denoted by $\domcycle(\aloop)$. 
If there are multiple cycles with the same maximal height, then by convention we choose the one with smallest index in $\gset_f$ (resp.~in $\gset_g$) as the dominating cycle. 
A cycle $\aloop$ is \emph{$\alpha$-stable} if its dominating cycle has a height strictly larger than $2\alpha$.
Let $\ZZ_1^\alpha (\rg_f)$ %
denote the subgroup of $\ZZ_1(\rg_f)$ generated by cycles with height at most $2\alpha$. Equivalently, a thin basis decomposition of a cycle in $\ZZ_1^\alpha(\rg_f)$ consists only of cycles with height at most $2\alpha$. Hence, a cycle $z$ is in $\ZZ_1^\alpha(\rg_f)$ if and only if $z$ is \emph{not} $\alpha$-stable. Note that this only means that the \emph{dominating} cycle of $z$ has height at most $2\alpha$; the height of $z$ itself can be larger than~$2\alpha$.
We have the following property of the dominating cycle:
\begin{lemma}
A set of cycles $\aloop_1, \ldots \aloop_k \in \ZZ_1(\rg_f)$ with distinct dominating cycles is linearly independent. 
\label{lemma:independent}
\end{lemma}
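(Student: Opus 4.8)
The plan is to argue by contradiction using the minimality inherent in the definition of the dominating cycle together with the lexicographic minimality of the \mycanonical{} basis $\gset_f$. Suppose $\aloop_1, \ldots, \aloop_k$ have pairwise distinct dominating cycles but are linearly dependent, so some nonempty subset sums to zero: after reindexing, $\aloop_{i_1} + \cdots + \aloop_{i_r} = 0$ with $r \ge 2$. Among the dominating cycles $\domcycle(\aloop_{i_1}), \ldots, \domcycle(\aloop_{i_r})$, pick the one of largest height, say $\domcycle(\aloop_{i_1}) = \loopone_j$, with ties broken by the index convention already in force. The idea is that $\loopone_j$ appears in the \mydecomposition{} of $\aloop_{i_1}$ but, by the tie-breaking rule and maximality of its height, cannot be cancelled by the \mydecompositions{} of the other $\aloop_{i_s}$, giving the contradiction $0 \ne 0$.

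First I would make the cancellation argument precise. Write each $\aloop_{i_s} = \sum_{t} c^{(s)}_t \loopone_t$ in terms of the fixed basis $\gset_f$, over $\Z_2$. By definition, $\domcycle(\aloop_{i_s})$ is the basis cycle $\loopone_t$ of largest height among those with $c^{(s)}_t = 1$, ties broken by smallest index. I claim the coefficient of $\loopone_j$ in $\sum_s \aloop_{i_s}$ is $1$, contradicting that the sum is $0$. Indeed $c^{(1)}_j = 1$ since $\loopone_j = \domcycle(\aloop_{i_1})$. For $s \ge 2$: if $\height(\domcycle(\aloop_{i_s})) < \height(\loopone_j)$, then every basis cycle occurring in $\aloop_{i_s}$ has height at most $\height(\domcycle(\aloop_{i_s})) < \height(\loopone_j)$, so $c^{(s)}_j = 0$; if $\height(\domcycle(\aloop_{i_s})) = \height(\loopone_j)$, then by maximality $\domcycle(\aloop_{i_s})$ is a height-$\height(\loopone_j)$ basis cycle, and since the dominating cycles are distinct it is some $\loopone_{j'}$ with $j' \ne j$, and by the smallest-index tie-breaking rule $\loopone_j$ (having larger index than the chosen $\loopone_{j'}$, or rather: $\loopone_j$ would have been chosen as $\domcycle(\aloop_{i_1})$'s competitor only if present) — here I need to be slightly careful, so let me instead use the cleaner ordering below.

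The cleanest route is to totally order the basis cycles of $\gset_f$ by the pair $(\height(\loopone_t), -t)$ — i.e.\ by decreasing height with ties broken by increasing index — and let $\loopone_j$ be the maximum in this order among all basis cycles appearing in any \mydecomposition{} of $\aloop_{i_1}, \ldots, \aloop_{i_r}$ with an odd total number of occurrences; in fact take $\loopone_j$ the order-maximum basis cycle appearing in $\aloop_{i_1}$'s decomposition, which is exactly $\domcycle(\aloop_{i_1})$, after choosing $\aloop_{i_1}$ so that $\domcycle(\aloop_{i_1})$ is order-maximal among $\domcycle(\aloop_{i_1}), \ldots, \domcycle(\aloop_{i_r})$. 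Then for $s \ge 2$, every basis cycle in $\aloop_{i_s}$ is $\preceq \domcycle(\aloop_{i_s}) \prec \loopone_j$ in this order (strict because the dominating cycles are distinct and $\domcycle(\aloop_{i_1})$ is the order-max), so none of them equals $\loopone_j$; hence $\loopone_j$ occurs only in $\aloop_{i_1}$, exactly once, and has coefficient $1$ in the sum — contradiction. I expect the main obstacle to be purely bookkeeping: getting the tie-breaking conventions to line up so that "order-maximal dominating cycle cannot be cancelled" holds with no edge cases. Everything else — working over $\Z_2$, using $\HH_1(\rg_f) \cong \ZZ_1(\rg_f)$ so that linear dependence of cycles is just the algebraic statement — is immediate. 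Note the \mycanonical{}-basis hypothesis is not actually needed for this lemma; only that $\gset_f$ is a basis and the dominating cycle is well-defined via the decomposition.
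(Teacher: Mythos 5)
Your proof is correct and follows essentially the same route as the paper's: both arguments single out, among the given cycles, the one whose dominating cycle is extremal (largest height, ties broken by smallest index) and observe that this basis element cannot appear in, hence cannot be cancelled by, the decompositions of the others, so it survives in the sum with coefficient $1$ over $\Z_2$. Your packaging of the paper's two-case analysis into a single total order on the basis cycles, and your remark that only the basis property (not thinness) is needed, are fine but cosmetic differences.
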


\begin{proof}
We show that $\sum_{j=1}^a \aloop_{i_j} \neq 0$ for any subset $\{i_1, \ldots, i_a \} \subseteq \{1, 2, \ldots, k\}$. 
Specifically, consider the maximum height of dominating cycles of any cycle in $\{ \aloop_{i_j} \}_{j=1}^a$.
First, assume that there is only a unique cycle, say $\aloop_{i_a}$, whose dominating cycle $\domcycle(\aloop_{i_a})$ has this maximum height. 
It then follows that this \mycanonical{} cycle $\domcycle(\aloop_{i_a})$ is not in the  \mydecomposition{} of any other cycle $\aloop_{i_j}$, $j \neq a$. 
Since the  \mydecomposition{} of the cycle $\sum_{j=1}^a \aloop_{i_j}$ is simply the sum (modulo 2) of  \mydecomposition{} of each $\aloop_{i_j}$, $\domcycle(\aloop_{i_a})$ must exist in the  \mydecomposition{} of the cycle $\sum_{j=1}^a \aloop_{i_j}$; in fact, $\domcycle(\sum_{j=1}^a \aloop_{i_j}) = \domcycle(\aloop_{i_a})$. Therefore $\sum_{j=1}^a \aloop_{i_j} \neq 0$. 

If there are multiple cycles whose dominating cycle has the maximal height, then we consider the one whose dominating cycle has the smallest index among all of them. The same argument as above shows that this cycle will present in the  \mydecomposition{} of the cycle $\sum_{j=1}^a \aloop_{i_j}$, implying that $\sum_{j=1}^a \aloop_{i_j} \neq 0$. 
\end{proof}

\paragraph{$\alpha$-matching}
The main use of \mycanonical{} cycles is that we will prove \cref{thm:extendedbound} by showing the existence of an \emph{$\alpha$-matching} between $\gset_f$ and $\gset_g$. 
Specifically, two \mycanonical{} cycles $\aloop_1$ and $\aloop_2$ are 
\emph{$\alpha$-close} if their ranges $[a, b]$ and $[c, d]$ are within Hausdorff distance $\alpha$, i.e., $|c -a| \le \alpha$ and $|d - b| \le \alpha$. (Note that two $\alpha$-close cycles can differ in height by at most $2\alpha$.) 
An \emph{$\alpha$-matching} for $\gset_f$ and $\gset_g$  is a set of pairs
$\mathcal{M} \subset \gset_f \times \gset_g$ such that:
\begin{enumerate}[(I)]
\item For each pair $( \loopone, \looptwo ) \in \mathcal{M}$, the cycles $\loopone$ and $\looptwo$ are $\alpha$-close; and 
\item Every $\alpha$-stable cycle in $\gset_f$ and $\gset_g$ (i.e., with height larger than $2\alpha$) appears in \emph{exactly one} pair of $\mathcal{M}$; every other cycle appears in at most one pair.%
\end{enumerate}
Since each point $(b,d)$ in the extended persistence diagram corresponds to the range $[b,d]$ of a unique cycle in a given \mycanonical{} basis, $d_B(\eDg_1(\rg_f), \eDg_1(\rg_g)) \le \alpha$ if and only if there is an $\alpha$-matching for $\gset_f$ and $\gset_g$. 
Our goal now is to prove that there exists a $3\delta$-matching for $\gset_f$ and $\gset_g$, which will then imply \cref{thm:extendedbound}.

\paragraph{Properties of $\leftmap$ and $\rightmap$}
Recall that $\leftmap: \rg_f \to \rg_g$ and $\rightmap: \rg_g \to \rg_f$ are the optimal continuous maps that achieve $\delta=\DD(\rg_f, \rg_g)$. 
We now investigate the effect of these maps on \mycanonical{} cycles. 
Note that $\leftmap$ and $\rightmap$ induce maps $\ZZ_1(\rg_f)\to\ZZ_1(\rg_g)$ and $\ZZ_1(\rg_g)\to\ZZ_1(\rg_f)$. To simplify notation, we denote these maps by $\leftmap$~and~$\rightmap$ as well.
\Cref{lem:cycledistortion} below states that $\rightmap$ is ``close'' to being an inverse of $\leftmap$. \Cref{lem:rangebound,lem:shortenheight} relate the range of $\leftmap(\gamma)$ with the range of $\gamma$. 
\begin{lemma}
Given any cycle $\aloop \in \ZZ_1(\rg_f)$, we have $\rightmap \circ \leftmap (\aloop) \in \aloop + \smallH(\rg_f)$. That is, $\rightmap\circ \leftmap (\aloop) = \aloop + \aloop'$, where $\aloop'$ is not $2\delta$-stable. A symmetric statement holds for any cycle in $\rg_g$. 
\label{lem:cycledistortion}
\end{lemma}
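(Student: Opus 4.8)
The plan is to argue pointwise along the cycle $\aloop$, using the last two terms in the definition of $\DD$ to control how far $\rightmap\circ\leftmap$ moves each point. Let $\psi = \rightmap\circ\leftmap : \rg_f \to \rg_f$; by the definition of $\optd = \DD(\rg_f,\rg_g)$ we have $\dreeb_f(x,\psi(x)) \le \optd$ for every $x \in \rg_f$. First I would recall that $\dreeb_f(x,\psi(x)) \le \optd$ means there is a path from $x$ to $\psi(x)$ whose height (oscillation of $f$) is at most $\optd$; in particular $x$ and $\psi(x)$ lie in the same path-component of the level set $\{z \in \rg_f : f(z) \in I_x\}$ for an interval $I_x$ of length at most $\optd$ containing $f(x)$. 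The idea is then to build a homotopy (or rather, a chain homotopy at the level of $\Z_2$-chains) between the $1$-cycles $\aloop$ and $\psi(\aloop)$ out of these short paths, and to observe that the ``error'' cycle $\aloop' = \aloop + \psi(\aloop)$ is generated by small cycles.

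The key steps, in order, would be: (1) Since $\psi$ is continuous and $\aloop$ is (the image of) a loop $S^1 \to \rg_f$, the restriction $\psi|_\aloop$ is homotopic to the inclusion $\aloop \hookrightarrow \rg_f$ via the straight-line-in-a-level-set homotopy — more carefully, decompose $\aloop$ into finitely many small arcs $e_1,\dots,e_k$ (refining at the nodes of $\rg_f$ and using tameness so that on each arc $f$ varies by an arbitrarily small amount), and for each endpoint $v$ of an arc choose a short path $p_v$ from $v$ to $\psi(v)$ of height $\le \optd$. (2) The cycle $\aloop + \psi(\aloop)$ is then homologous to a sum of ``quadrilateral'' cycles $q_i = e_i + p_{v_i} + \psi(e_i) + p_{v_{i+1}}$, one per arc. (3) Bound the height of each $q_i$: its $f$-range is contained in the union of $f(e_i)$, $f(\psi(e_i))$, $\range(p_{v_i})$, and $\range(p_{v_{i+1}})$; since each $p_v$ has height $\le \optd$ and its range contains $f(v)$, and since $|f(v) - g(\leftmap(v))|$-type bounds plus another application of $\dreeb$ give $|f(v) - f(\psi(v))| \le \optd$, each of these four intervals lies within $\optd$ of the point $f(v_i)$ (for arcs chosen fine enough that $f(e_i)$ and $f(\psi(e_i))$ are nearly singletons). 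Hence $\height(q_i) \le 2\optd + o(1)$, and by the fineness of the decomposition $\height(q_i) \le 2\optd$. (4) Therefore $\aloop' = \aloop + \psi(\aloop) = \sum_i q_i \in \smallH(\rg_f)$, i.e., $\aloop'$ is a sum of cycles of height $\le 2\optd$, so $\aloop'$ is not $\optd$-stable — which is exactly the claim, after recalling that $\smallH(\rg_f)$ is by definition the subgroup generated by such cycles and that membership in it is equivalent to not being $\optd$-stable. The symmetric statement for $\rg_g$ follows by swapping the roles of $f$ and $g$ and of $\leftmap,\rightmap$.

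The main obstacle I expect is making step (3) rigorous: I need the quadrilateral cycles $q_i$ to genuinely have height at most $2\optd$, not merely $2\optd$ plus an error that only vanishes in a limit, since $\smallH(\rg_f)$ is defined with a sharp threshold $2\optd$. The clean way around this is to exploit that $\rg_f$ has finitely many nodes with distinct $f$-values, so there is a positive gap between consecutive critical values; choosing the arc decomposition fine enough that the interior of each arc contains no node and $f$ restricted to each arc (and to each $\psi(e_i)$, after further subdivision) varies by less than any prescribed $\eps$, the $o(1)$ terms can be absorbed without crossing the threshold — or, alternatively, one argues that a cycle of height $< 2\optd + \eps$ for all $\eps>0$ that avoids being $\optd$-stable is literally a cycle of height $\le 2\optd$ by a compactness/limiting argument on the finite set of possible cycle heights. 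A secondary subtlety is checking that the homotopy in step (1)–(2) is valid even when $\psi(e_i)$ backtracks or passes through nodes; since we are working with $\Z_2$-coefficient $1$-chains on a graph, this is handled by treating everything combinatorially: $\psi(e_i)$ is a $1$-chain (edge-path) in $\rg_f$, and the boundary computation $\partial(q_i) = \partial e_i + \partial p_{v_i} + \partial \psi(e_i) + \partial p_{v_{i+1}} = 0$ holds because endpoints cancel in pairs, so each $q_i$ is a bona fide cycle regardless of geometric pathologies.
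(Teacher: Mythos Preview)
Your proposal is correct and is essentially the same argument as the paper's: decompose $\aloop + \rightmap\circ\leftmap(\aloop)$ into ``quadrilateral'' cycles built from a short arc of $\aloop$, its image, and the two connecting paths of height $\le\optd$, then bound each quadrilateral's height by $2\optd$ plus an error controlled by the fineness of the subdivision. The paper handles the sharp-threshold obstacle you flag exactly by your option (b): since $\rg_f$ is a finite graph it has only finitely many cycles and hence finitely many cycle heights, so one fixes $\eps>0$ with $2\optd+2\eps$ strictly below the smallest cycle height exceeding $2\optd$, and then any quadrilateral of height $\le 2\optd+2\eps$ automatically has height $\le 2\optd$.
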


\begin{proof}
The input Reeb graph $\rg_f$ is a finite graph, and there are only a finite number of (cellular) cycles. Hence there are only a finite number of height values (reals) that these cycles can have.
Let $\alpha$ denote the lowest height of any cycle whose height is strictly larger than $4 \delta$.
We set $\rho$ to be any positive constant between $0$ and $\alpha - 4\delta$; that is, $4\delta +\rho < \alpha$. 
Note that if a cycle $\gamma$ satisfies $\height(\gamma) \le 4\delta + \rho < \alpha$, then it is necessary that $\height(\gamma) \le 4\delta$.

Assume that $\im\aloop$ has only a single connected component -- the case with multiple components can be handled in a component-wise manner. This means that $\aloop$ is generated by a loop $\ell$, i.e., a closed curve on $\rg_f$, in the sense that we can consider $\ell$ as a singular cycle and then use the isomorphism of singular homology and cellular cycles. 
Without loss of generality, we may assume that $\ell$ is embedded; otherwise, $\ell$ can be split into embedded subloops, which again can be treaded separately.
Now for the given loop $\ell$, let $\tilde \ell$ denote $\rightmap \circ \leftmap (\ell)$, and for any point $x$ on $\ell$, let $\tilde x \in \tilde \ell$ denote $\tilde x := \rightmap \circ \leftmap (x)$.
Since both $(x, \leftmap(x))$ and $(\tilde{x}, \leftmap(x))$ are in $G(\leftmap, \rightmap)$, by \cref{eqn:distdef}, there is an embedded path $\pi(x, \tilde{x})$ connecting $x$ to $\tilde x$ with $\range(\pi(x, \tilde x))=[a,b]$, where $b-a \le 2\delta$.

Let $\ell[x, x']$ and $-\ell[x, x']$ denote the orientation-preserving and orientation-reversing subcurves of $\ell$ from $x$ to $x'$, respectively.
Start with an arbitrary point $x= x_0$ on $\ell$, and consider $\tilde x_0$ on $\tilde \ell$.
Since $\rightmap \circ \leftmap$ is a continuous map, as we move $x$ along $\ell$ continuously, $\tilde x$ moves continuously.
In step $i$, as we move along $\ell$ (starting from $x_i$), we set $x_{i+1}$ to be the first point such that the height of the loop $\ell_i = \ell[x_i, x_{i+1}] \circ \pi(x_{i+1}, \tilde x_{i+1}) \circ - \tilde \ell [\tilde x_{i+1}, \tilde x_i] \circ \pi(\tilde x_i, x_i)$ is $4\delta + \rho$.
If no such~$\ell_i$ exists before $\tilde{x}$ moves back to $\tilde{x_0}$, then we set $x_{i+1}$ to be $x_0$, and the process terminates.
Since both $\pi(x_{i+1}, \tilde x_{i+1})$ and $\pi(\tilde x_i, x_i)$ are paths of height at most $2\delta$, the sum of the heights of $\ell[x_i, x_{i+1}]$ and $-\tilde \ell [\tilde x_i, \tilde x_{i+1}]$ must be at least $\rho$. Hence, for a fixed value $\rho$, this process terminates in a finite number of steps.
Now let~$c_i$ denote the cellular cycle homologous to the loop $\ell_i$ in the above construction.
By construction, we have that $\aloop = \tilde\aloop + \sum_{i} c_i$, where $\tilde\aloop = \rightmap \circ \leftmap (\aloop)$.
Since each $c_i$ satisfies $\height(c_i) \le \height(\ell_i) \le 4\delta + \rho < \alpha$, as discussed earlier it then follows that $\height(c_i) \le 4\delta$. Hence $c_i \in \smallH (\rg_f)$ for each $i$ and $\aloop' = \sum_{i} c_i \in \smallH(\rg_f)$, implying that $\rightmap \circ \leftmap(\aloop) = \tilde \aloop = \aloop + \aloop' \in \aloop + \smallH (\rg_f)$. 
\end{proof}

\begin{lemma}
Given any \mycanonical{} cycle $\loopone \in \ZZ_1(\rg_f)$ with range $[b, d]$, we have that the range of any cycle in the \mydecomposition{} of $\leftmap(\gamma)$ must be contained in the interval $[b-\delta, d+\delta]$. 
\label{lem:rangebound}
\end{lemma}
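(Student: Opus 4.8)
The plan is to track how $\leftmap$ distorts the range of a canonical cycle $\loopone$ and then argue that no cycle appearing in the \mycanonical{} basis decomposition of $\leftmap(\loopone)$ can stick out beyond $[b-\delta, d+\delta]$. First I would establish the easy half: since $\leftmap$ changes function values by at most $\delta$ (the first term in Eqn (\ref{eqn:distdef})), every point of the image curve $\leftmap(\loopone)$ has $g$-value in $[b-\delta, d+\delta]$, so $\leftmap(\loopone)$ as a cycle in $\ZZ_1(\rg_g)$ has $\range(\leftmap(\loopone)) \subseteq [b-\delta, d+\delta]$, hence $\height(\leftmap(\loopone)) \le \height(\loopone) + 2\delta = (d-b) + 2\delta$. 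The real content is to pass from this bound on the cycle $\leftmap(\loopone)$ itself to a bound on each summand in its decomposition with respect to $\gset_g$.

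The key step is the following: suppose, for contradiction, that the \mycanonical{} basis decomposition $\leftmap(\loopone) = \looptwo_{j_1} + \cdots + \looptwo_{j_k}$ contains a basis cycle $\looptwo_{j}$ whose range is \emph{not} contained in $[b-\delta, d+\delta]$ — say its maximum exceeds $d+\delta$ (the case of the minimum being below $b-\delta$ is symmetric). I would then use the lexicographic minimality of the \mycanonical{} basis $\gset_g$ to derive a contradiction. The idea: among all basis cycles in the decomposition whose maximum exceeds $d+\delta$, pick $\looptwo_j$ with the largest maximum value (it is unique by the genericity assumption that nodes have distinct $f$-values, hence distinct max/min values on cycles). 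Because $\leftmap(\loopone)$ attains no value above $d+\delta$, the contribution of $\looptwo_j$ at its own maximum must be cancelled by some other basis cycle $\looptwo_{j'}$ in the sum with the same maximum value — but distinct basis cycles of a \mycanonical{} basis have distinct maxima (this is part of the standard structure theorem for canonical/thin bases, coming from the one-to-one correspondence of ranges with points of $\eDg_1$), a contradiction. Symmetrically for the minimum. Thus every basis cycle in the decomposition has its range inside $[b-\delta, d+\delta]$.

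I expect the main obstacle to be making precise the claim that ``the topmost basis cycle in the decomposition cannot be cancelled,'' i.e., cleanly invoking the property — implicit in the cited characterization from \cite{CEH09} and in the \mycanonical{}-basis definition — that the maxima (and minima) of the cycles in a \mycanonical{} basis are pairwise distinct and coincide with the birth/death coordinates of $\eDg_1$. Once that structural fact is in hand, the argument is a short extremal-value chase: the highest (resp.\ lowest) critical value touched by $\leftmap(\loopone)$ equals the highest (resp.\ lowest) maximum (resp.\ minimum) over the basis cycles appearing in its decomposition, and this is $\le d+\delta$ (resp.\ $\ge b-\delta$) by the function-value bound on $\leftmap$. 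One subtlety to handle carefully is that $\leftmap(\loopone)$ need not be a simple cycle (the image of an embedded loop under a continuous map can be a more complicated $1$-cycle), but this is exactly why we reduce everything to statements about $g$-values of points in the image set $\leftmap(\loopone) \subseteq \rg_g$ together with the combinatorics of the fixed basis $\gset_g$, rather than trying to reason geometrically about the image curve.
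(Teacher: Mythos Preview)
Your proposal is correct, and the overall shape (bound $\range(\leftmap(\loopone))\subseteq[b-\delta,d+\delta]$ via the function-distortion term, then argue by contradiction that no summand in the $\gset_g$-decomposition can stick out) matches the paper. The contradiction step, however, is genuinely different.

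You pick the basis cycle with the single largest maximum among the summands and argue its top cannot be cancelled because distinct cycles of a \mycanonical{} basis have pairwise distinct maxima (via the bijection with points of $\eDg_1(\rg_g)$ together with the distinct-node-values assumption). The paper instead avoids that structural fact entirely: it collects \emph{all} summands sharing the extremal endpoint $b'$ (allowing several), forms their sum $\rho$, observes that $\rho=\leftmap(\loopone)+(\text{remaining summands})$ forces the left endpoint of $\rho$ to be strictly above $b'$, so $\height(\rho)<\height(\looptwo_{i_a})$ for the tallest among them, and then swaps $\looptwo_{i_a}$ for $\rho$ in $\gset_g$ to contradict the lexicographic minimality of the \mycanonical{} basis directly. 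Your route is quicker once the distinct-maxima lemma is in hand, but that lemma is exactly the step you flag as the ``main obstacle,'' and it leans on the $\eDg_1$ correspondence from \cite{CEH09} plus genericity; the paper's basis-replacement argument is more self-contained, using only the definition of a \mycanonical{} basis and nothing about persistence diagrams or distinctness of extrema.
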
 

\begin{proof}
First, by \cref{eqn:distdef}, we have $\max_{x\in \rg_f} |f(x) - g(\leftmap(x))| \le \delta$.
Hence $\range(\leftmap(\gamma)) \subseteq [b-\delta, d+\delta]$. 
Now let $b'$ be the smallest left endpoint of the range of any cycle in the  \mydecomposition{} of $\leftmap(\gamma)$.
We will prove that $b' \ge b - \delta$. 

Suppose this is not case and $b' < b - \delta$. 
Now let $\looptwo_{i_1}, \ldots, \looptwo_{i_a} \in \gset_g$, $a \ge 1$, denote all those cycles in the \mydecomposition{} of $\leftmap(\gamma)$ whose ranges have $b'$ as the left endpoint. 
Set $\rho = \looptwo_{i_1} + \cdots \looptwo_{i_a}$. 
Assume $\looptwo_{i_a}$ has the largest height among these \mycanonical{} cycles. 
Note that $\range(\looptwo_{i_j}) \subseteq \range(\looptwo_{i_a})$ for any $j < a$, as all these ranges share the same left endpoint $b'$. 
Clearly $\range(\rho) \subseteq \range(\looptwo_{i_a})$. 
On the other hand, by definition of $b'$ and $\looptwo_{i_j}$, all other cycles in the \mydecomposition{} of $\leftmap(\gamma)$ have a range whose left endpoint is strictly greater than~$b'$. Let $\rho'$ be the sum of these other cycles; we have that $\leftmap(\gamma) = \rho + \rho'$. Since the left endpoint of $\range(\rho')$ is strictly bigger than $b'$, the left endpoint of $\range(\rho)$ also has to be strictly bigger than $b'$, as otherwise the left endpoint of $\range(\rho + \rho')$ would be $b'$, which contradicts to the fact that the left endpoint of $\range(\leftmap(\gamma))$ is at least $b - \delta > b'$. 
In other words, it is necessary that $\range(\rho)$ is a proper subset of $\range(\looptwo_{i_a})$; i.e., $\range(\rho) \subset \range(\looptwo_{i_a})$. 
This implies that $\rho$ has strictly smaller height than $\looptwo_{i_a}$. 
This however contradicts that $\gset_g$ is a thin basis, because we can replace $\looptwo_{i_a}$ in $\gset_g$ with $\rho$ and obtain a basis element with smaller height (the resulting set of cycles remain independent). 
Therefore it is not possible that $b' < b-\delta$. 

A symmetric argument shows that the largest right endpoint of the range of any cycle in the  \mydecomposition{} of $\leftmap(\gamma)$ is at most $d + \delta$. 
Hence the range of any cycle in the  \mydecomposition{} of $\leftmap(\gamma)$ is a subset of $[b-\delta, d+\delta]$. 
\end{proof}

\begin{lemma}
For any $2\delta$-stable cycle $\aloop \in \ZZ_1(\rg_f)$, we have \[\height(\domcycle(\leftmap(\aloop))) \geq \height(\domcycle(\aloop)) - 2\delta.\] A symmetric statement holds for any cycle of $\rg_g$. 
\label{lem:shortenheight}
\end{lemma}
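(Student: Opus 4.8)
The plan is to argue by contradiction, exploiting that the dominating cycle of a $\delta$-stable cycle is too tall to be destroyed by the ``error terms'' produced by Lemma~\ref{lem:cycledistortion}. First I would dispose of the easy case: if $\aloop$ is not $\delta$-stable then $\height(\domcycle(\aloop)) \le 2\delta$, so the right-hand side is non-positive and there is nothing to prove (with the convention that the dominating cycle of the zero cycle has height $0$). So assume $\aloop$ is $\delta$-stable and set $\loopone_i := \domcycle(\aloop)$; then $\loopone_i$ occurs in the thin basis decomposition $\aloop = \sum_{k\in S}\loopone_k$, we have $\height(\loopone_k) \le \height(\loopone_i)$ for all $k\in S$, and $\height(\loopone_i) > 2\delta$. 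Writing $\leftmap(\aloop) = \sum_{j\in T}\looptwo_j$ for the decomposition with respect to $\gset_g$, suppose toward a contradiction that $\height(\looptwo_j) < \height(\loopone_i) - 2\delta$ for every $j \in T$. (Here $T \ne \varnothing$: if $\leftmap(\aloop) = 0$ then $\rightmap\circ\leftmap(\aloop) = 0$, and Lemma~\ref{lem:cycledistortion} would then force $\aloop \in \smallH(\rg_f)$, contradicting $\delta$-stability.)

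Next I would push $\leftmap(\aloop)$ back through $\rightmap$. By functoriality of the induced maps on $\HH_1 \cong \ZZ_1$, $\rightmap\circ\leftmap(\aloop) = \sum_{j\in T}\rightmap(\looptwo_j)$. Each $\looptwo_j$ is a thin cycle, so the symmetric version of Lemma~\ref{lem:rangebound} (roles of $\rg_f$ and $\rg_g$ exchanged) tells us that every cycle in the thin basis decomposition of $\rightmap(\looptwo_j)$ with respect to $\gset_f$ has range contained in an interval of length $\height(\looptwo_j) + 2\delta < \height(\loopone_i)$, hence height strictly less than $\height(\loopone_i)$. Since over $\Z_2$ the set of basis cycles appearing in a sum is contained in the union of those appearing in the summands, every cycle in the thin basis decomposition of $\rightmap\circ\leftmap(\aloop)$ has height strictly less than $\height(\loopone_i)$.

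Finally, I would read off the contradiction from Lemma~\ref{lem:cycledistortion}: $\rightmap\circ\leftmap(\aloop) = \aloop + \aloop'$ with $\aloop' \in \smallH(\rg_f)$, so every basis cycle in the decomposition of $\aloop'$ has height $\le 2\delta < \height(\loopone_i)$, and in particular $\loopone_i$ is not among them. As $\loopone_i$ does appear in the decomposition of $\aloop$, it appears in the decomposition of $\aloop + \aloop' = \rightmap\circ\leftmap(\aloop)$ — a cycle of height $\height(\loopone_i)$ in a decomposition all of whose cycles were just shown to be strictly shorter, a contradiction. Hence $\height(\domcycle(\leftmap(\aloop))) \ge \height(\loopone_i) - 2\delta = \height(\domcycle(\aloop)) - 2\delta$, and the statement for cycles of $\rg_g$ follows by the symmetric argument.

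The main obstacle is bookkeeping rather than geometry: one must carefully track which thin-basis cycles survive in a decomposition under $\Z_2$-addition and under the homology-level maps $\leftmap_*$ and $\rightmap_*$, since all of the metric content is already packaged in Lemmas~\ref{lem:cycledistortion} and~\ref{lem:rangebound}. The single idea that makes it go through is that $\domcycle(\aloop)$, being strictly taller than $2\delta$, cannot be cancelled by the error term $\aloop' \in \smallH(\rg_f)$, so it is still visible after the round trip $\rightmap\circ\leftmap$ — whereas a round trip applied to short cycles can only yield short cycles.
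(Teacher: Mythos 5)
Your argument is correct and is essentially the paper's proof run in contrapositive form: both hinge on Lemma~\ref{lem:cycledistortion} guaranteeing that the dominating cycle $\domcycle(\aloop)$, being taller than $2\delta$, survives in the \mycanonical{} basis decomposition of $\rightmap\circ\leftmap(\aloop)$, combined with a ``heights grow by at most $2\delta$ under $\rightmap$'' bound applied to the decomposition of $\leftmap(\aloop)$. The only cosmetic difference is that you obtain the latter bound from the symmetric version of Lemma~\ref{lem:rangebound} (which also cleanly justifies that basis elements in a decomposition are no taller than the cycle itself), whereas the paper derives it directly from the pointwise function-value distortion $|f(x)-g(\leftmap(x))|\le\delta$.
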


\begin{proof}
Note that in this lemma, $\aloop$ is not necessarily a thin loop. 
Let $\loopone_s = \domcycle(\aloop)$; since $\aloop$ is $2\delta$-stable, we have $\height(\loopone_s) > 4\delta$. 
First, we claim that $\domcycle(\rightmap \circ \leftmap (\aloop)) = \loopone_{s}$. 
This is because by \cref{lem:cycledistortion}, $\rightmap\circ \leftmap(\aloop) \in \aloop + \smallH(\rg_f)$. Since $\height(\loopone_{s}) > 4 \delta$, $\loopone_s$ still belongs to the  \mydecomposition{} of $\rightmap\circ \leftmap(\aloop)$ and still has the largest height. 

Now set $\tilde \looptwo = \leftmap(\aloop)$ with $\looptwo_{i_1} + \cdots + \looptwo_{i_a}$ being its  \mydecomposition{}. 
Observe that for any cycle $\aloop'$ in $\rg_f$, we have that $\height(\leftmap(\aloop')) \le \height(\aloop') + 2 \delta$, which follows from the fact that for any $x\in \rg_f$, $| f(x) - g(\leftmap(x))| \le \delta$; recall \cref{eqn:distdef}. A symmetric statement holds for a loop from $\rg_g$. %
We thus have: 
\begin{align}
\height(\domcycle(\rightmap(\tilde \looptwo))) & = \height(\domcycle( \sum_{j=1}^a \rightmap (\looptwo_{i_j}))) 
\le \max_{j=1}^a \height( \rightmap (\looptwo_{i_j})) \nonumber \\
& \le \max_{j=1}^a [\height(\looptwo_{i_j}) + 2 \delta] = \max_{j=1}^a \height(\looptwo_{i_j}) + 2\delta \nonumber\\
&= \height(\domcycle(\tilde \looptwo)) + 2\delta. 
\label{eqn:one}
\end{align}
Since we have shown earlier that $\domcycle(\rightmap \circ \leftmap(\loopone)) = \loopone_s$, it follows that $\domcycle(\rightmap (\tilde \looptwo)) = \domcycle(\rightmap \circ \leftmap(\loopone)) = \loopone_{s} = \domcycle(\aloop)$. 
Combining this with \cref{eqn:one}, we have 
\begin{align*}
\height(\domcycle(\leftmap(\aloop)) &= \height(\domcycle (\tilde \looptwo)) \ge \height(\domcycle(\rightmap(\tilde \looptwo)))- 2\delta \\
&= \height(\domcycle(\aloop)) - 2\delta,
\end{align*}
which proves the lemma.
\end{proof}

In fact, if $\gamma$ is a \mycanonical{} cycle, \cref{lem:shortenheight} can be strengthened to show that $\domcycle(\leftmap(\gamma))$ is $\delta$-close to $\gamma$ . 
This already provides some mapping of base cycles from $\gset_f$ to cycles from $\gset_g$ such that each pair of corresponding cycles are $\delta$-close. 
However, the main challenge is to show that there exists a \emph{one-to-one} correspondence for all $3\delta$-stable cycles  (recall the definition of a $3\delta$-matching of $\gset_f$ and $\gset_g$).
For this, we need to take a slight detour to relate cycles in $\gset_f$ with those in $\gset_g$ in a stronger sense: 
\begin{prop}
For any \mycanonical{} cycle $\gamma_k \in \gset_f$, one can compute a (not necessarily \mycanonical{}) cycle $\widehat{\gamma}_k$ such that $\loopone_k=\domcycle(\widehat\loopone_k)$ and
\[\leftmap(\widehat\loopone_k) \in \sum_{j=1}^r \looptwo_{k_j} + \smallH (\rg_g),\]
where each $\looptwo_{k_j} \in \gset_g$ is $3\delta$-close to $\gamma_k$, for any $j \in [1, r]$. 
\label{prop:either_large_or_small}
\end{prop}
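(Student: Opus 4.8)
The plan is to build $\widehat\gamma_k$ iteratively, starting from $\gamma_k$ itself and repeatedly "cleaning up" the bad cycles in the \mydecomposition{} of $\leftmap$ applied to the current candidate, using the pull-back under $\rightmap$ to transfer corrections back and forth between $\rg_g$ and $\rg_f$. Concretely, set $\gamma_k^{(0)} = \gamma_k$. By Lemma~\ref{lem:shortenheight} and the fact that a \mycanonical{} basis cycle $\gamma_k$ has the largest height among all cycles with its birth/death values, the dominating cycle of $\leftmap(\gamma_k^{(0)})$ is a fixed cycle $\zeta \in \gset_g$ of height $\ge \height(\gamma_k) - 2\delta$. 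Write $\leftmap(\gamma_k^{(0)}) = \zeta_{i_1} + \cdots + \zeta_{i_a}$ as its \mydecomposition{}. By Lemma~\ref{lem:rangebound}, every $\zeta_{i_t}$ has range inside $[b-\delta, d+\delta]$ where $[b,d] = \range(\gamma_k)$; in particular, any $\zeta_{i_t}$ that is $3\delta$-stable automatically has its range $\delta$-close — hence $3\delta$-close — to $[b,d]$, so those summands are already "good." The issue is the summands $\zeta_{i_t}$ that are \emph{not} $3\delta$-stable, i.e.\ those lying in a version of $\smallH(\rg_g)$ with threshold $3\delta$ rather than $2\delta$; I need to absorb these into the $\smallH(\rg_g)$ error term or cancel them.

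The key manoeuvre is: for each "bad" (non-$3\delta$-stable) summand $\zeta_{i_t}$ appearing in $\leftmap(\gamma_k^{(j)})$, consider $\rightmap(\zeta_{i_t})$, which is a cycle in $\rg_f$. I would like to subtract $\rightmap(\zeta_{i_t})$ off the current candidate $\gamma_k^{(j)}$. For this to be legitimate I must check two things. First, $\rightmap(\zeta_{i_t})$ does not disturb the dominating cycle: since $\height(\zeta_{i_t}) \le 6\delta$ (say — being not $3\delta$-stable means height $\le 6\delta$), the fact that $|g(y) - f(\rightmap(y))| \le \delta$ gives $\height(\rightmap(\zeta_{i_t})) \le \height(\zeta_{i_t}) + 2\delta \le 8\delta$, which is well below $\height(\gamma_k) - 2\delta$ once we are in the regime of interest (one may have to argue the case $\height(\gamma_k) \le 8\delta$ or so trivially, since then the whole statement is about a cycle that need not be $3\delta$-stable and everything lands in $\smallH$). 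So every cycle in the \mydecomposition{} of $\rightmap(\zeta_{i_t})$ has height below that of $\gamma_k$, and subtracting it keeps $\gamma_k = \domcycle(\gamma_k^{(j+1)})$. Second — and this is where Lemma~\ref{lem:cycledistortion} is essential — when I apply $\leftmap$ to the corrected candidate, the term $\leftmap(\rightmap(\zeta_{i_t}))$ differs from $\zeta_{i_t}$ only by an element of $\smallH(\rg_g)$ (cycles of height $\le 2\delta$), so the bad summand $\zeta_{i_t}$ is cancelled modulo the allowed $\smallH(\rg_g)$ error. Iterating over all bad summands (and noting the process terminates because the multiset of heights of bad summands strictly decreases in lexicographic order, or simply because there are finitely many basis cycles and a potential-function/linear-algebra argument bounds the number of steps), I arrive at $\widehat\gamma_k = \gamma_k^{(\text{final})}$ with $\leftmap(\widehat\gamma_k) \in \sum_j \zeta_{k_j} + \smallH(\rg_g)$ where every $\zeta_{k_j}$ is $3\delta$-stable, hence by Lemma~\ref{lem:rangebound} has range $\delta$-close, thus $3\delta$-close, to $\range(\gamma_k)$.

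The main obstacle I anticipate is proving \emph{termination} of this correction loop and, relatedly, controlling how the heights grow under the repeated applications of $\leftmap$ and $\rightmap$: each correction introduces new cycles from $\rightmap(\zeta_{i_t})$ whose own \mydecomposition{} may reintroduce small-but-nonzero cycles, and one must ensure this does not spiral. The clean way is probably to set up the argument as a single linear-algebra statement rather than a literal iteration: consider the subspace $V \subseteq \ZZ_1(\rg_g)$ spanned by the $3\delta$-stable cycles of $\gset_g$ whose range is $3\delta$-close to $\range(\gamma_k)$, let $W = \smallH(\rg_g)$, and show that $\leftmap(\gamma_k)$ projected onto the complement of $V + W$ can be killed by adding to $\gamma_k$ an element of the span of $\{\rightmap(\zeta) : \zeta \in \gset_g,\ \height(\zeta) < \height(\gamma_k) - 2\delta\}$, using that $\leftmap \rightmap$ is the identity modulo $W$ (Lemma~\ref{lem:cycledistortion}) to see that $\leftmap$ restricted to that span surjects onto the relevant quotient. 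One also must double-check the bookkeeping on the constants — tracking why $3\delta$ (and not $2\delta$ or $4\delta$) is exactly the right slack, which comes from the single $\delta$ lost in $\range$ under $\leftmap$ (Lemma~\ref{lem:rangebound}) plus the $2\delta$ of height slack in identifying dominating cycles (Lemma~\ref{lem:shortenheight}); getting the definition of "bad summand" to match this threshold so that the argument closes is the delicate part.
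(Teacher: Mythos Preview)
Your proposal has the right overall architecture (pull back the unwanted summands via $\rightmap$ and use Lemma~\ref{lem:cycledistortion} to cancel them modulo $\smallH(\rg_g)$), but the classification of ``good'' versus ``bad'' summands is wrong, and this breaks the argument.

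You assert that any $\zeta_{i_t}$ in the \mydecomposition{} of $\leftmap(\gamma_k)$ which is $3\delta$-stable is automatically $\delta$-close (hence $3\delta$-close) to $\gamma_k$, citing Lemma~\ref{lem:rangebound}. This is false. Lemma~\ref{lem:rangebound} only gives $\range(\zeta_{i_t}) \subseteq [b-\delta,d+\delta]$; it says nothing about the endpoints of $\range(\zeta_{i_t})$ being \emph{near} $b$ and $d$. For instance, if $[b,d]=[0,100\delta]$, a basis cycle with range $[0,50\delta]$ is $3\delta$-stable and sits inside $[b-\delta,d+\delta]$, yet is nowhere near $3\delta$-close to $[b,d]$. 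Consequently, even if your iteration terminated, the surviving ``good'' summands need not satisfy the conclusion of the proposition. Your partition good/bad according to $3\delta$-stability is simply not the right one.

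The paper's partition is different and is what makes the argument close in a single step, with no iteration and no termination worry. Call a summand \emph{good} if it is $3\delta$-close to $\gamma_k$, \emph{small} if it lies in $\smallH(\rg_g)$ (height $\le 2\delta$), and \emph{bad} otherwise. The crucial observation you are missing is that a bad summand, having range contained in $[b-\delta,d+\delta]$ but \emph{not} $3\delta$-close to $[b,d]$, must have either its right endpoint $< d-3\delta$ or its left endpoint $> b+3\delta$; in either case its height is strictly less than $(d-b)-2\delta$. Now let $\tilde\zeta$ be the sum of all bad summands and set $\widehat\gamma_k = \gamma_k + \rightmap(\tilde\zeta)$. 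Lemma~\ref{lem:cycledistortion} gives $\leftmap(\rightmap(\tilde\zeta)) \in \tilde\zeta + \smallH(\rg_g)$, so the bad part cancels in one shot. And since every bad summand has height $< \height(\gamma_k)-2\delta$, Lemma~\ref{lem:rangebound} applied to $\rightmap$ forces every basis cycle in the decomposition of $\rightmap(\tilde\zeta)$ to have height strictly less than $\height(\gamma_k)$, so $\gamma_k$ remains the dominating cycle of $\widehat\gamma_k$. This height bound on bad cycles is exactly where the constant $3\delta$ comes from and is the idea your proposal does not have.
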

\begin{proof}
Assume that the \mydecomposition{} of $\leftmap(\gamma_k)$ has the form: 
\[
\leftmap(\gamma_k) \in \sum_{j=1}^r \looptwo_{k_j} + \sum_{j=1}^s \tilde{\looptwo}_j + \smallH(\rg_g), 
\]
where the first term contains the $2\delta$-stable \mycanonical{} cycles whose range is $3\delta$-Hausdorff close to $\range(\gamma_k)$, the last term contains the \mycanonical{} cycles that are not $2\delta$-stable, 
and the middle term contains those \mycanonical{} cycles $\tilde{\looptwo}_j$ that are neither $3\delta$-close to $\gamma_k$ nor small. 
We wish to get rid of the middle term $\tilde{\looptwo} = \sum_{j=1}^s \tilde{\looptwo}_j$. 
Set $\gamma' =\rightmap(\tilde\looptwo)$. By \cref{lem:cycledistortion}, we have that $\leftmap(\gamma') = \leftmap \circ \rightmap(\tilde{\looptwo}) \in \tilde{\looptwo} + \smallH(\rg_g)$. 
Set $\widehat{\gamma}_k := \gamma_k + \gamma'$. It is then easy to verify that 
$\leftmap(\widehat{\gamma}_k) = \leftmap(\gamma_k) + \leftmap(\gamma') \in \sum_{j=1}^r \looptwo_{k_j} + \smallH (\rg_g)$, as claimed. 

It remains to show that $\domcycle(\widehat{\gamma}_k) = \gamma_k$. 
Let $[b,d]= \range(\gamma_k)$. 
By \cref{lem:rangebound}, $\range(\tilde \looptwo_j) \subseteq [b-\delta, d+\delta]$, for any $j \in [1, s]$. 
Since each $\tilde \looptwo_j$ is not $3\delta$-close to $\gamma_k$, 
it is then necessary that, for any $j \in [1, s]$, either $\range(\tilde \looptwo_j) \subseteq [b-\delta, d-3\delta)$ or $\range(\tilde \looptwo_j) \subseteq (b+3\delta, d+\delta]$. 
W.l.o.g.\ assume that $\range(\tilde \looptwo_j) \subseteq [b-\delta, d-3\delta)$. 
Apply Lemma~\ref{lem:rangebound} to the cycle $\tilde \looptwo_j$. We have that the range of any cycle in the \mydecomposition{} of $\rightmap(\tilde \looptwo_j)$ is contained in $[b-2\delta, d-2\delta)$, thus its height strictly smaller than $d - b$. 
Hence all cycles in the  \mydecomposition{} of $\gamma' = \rightmap(\tilde\looptwo)$ have a height strictly smaller than $\height(\gamma_k)=d-b$. 
This means that $\gamma_k$ has the largest height among the cycles of the  \mydecomposition{} of $\widehat{\gamma}_k = \gamma_k + \gamma'$, and it is the only cycles with this largest height. %
It then follows that $\gamma_k = \domcycle(\widehat \gamma_k)$. 
\end{proof}

\begin{cor}
Let $\widehat \gset_f$ denote the set of cycles $\{ \widehat \gamma_k \}_{k=1}^\frank$, where each $\widehat \gamma_k$ is as specified in \cref{prop:either_large_or_small}.
$\widehat \gset_f$~forms a basis for $\ZZ_1(\rg_f)$. 
\label{cor:changebasis}
\end{cor}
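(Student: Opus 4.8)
The proof is a one-line application of Lemma \ref{lemma:independent} together with a dimension count. The key observation is that, by Proposition \ref{prop:either_large_or_small}, the cycle $\widehat\gamma_k$ satisfies $\domcycle(\widehat\gamma_k) = \gamma_k$ for every $k \in [1,\frank]$. Thus the $\frank$ cycles $\widehat\gamma_1, \ldots, \widehat\gamma_\frank$ have pairwise distinct dominating cycles, provided the $\gamma_k$ themselves are pairwise distinct --- which they are, since $\gset_f = \{\gamma_1, \ldots, \gamma_\frank\}$ is a basis (in particular a set) of $\ZZ_1(\rg_f)$.

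First I would invoke Lemma \ref{lemma:independent}: a family of cycles with distinct dominating cycles is linearly independent, so $\widehat\gset_f = \{\widehat\gamma_k\}_{k=1}^\frank$ is a linearly independent subset of $\ZZ_1(\rg_f)$. Next I would recall that $\ZZ_1(\rg_f)$ is a $\Z_2$-vector space of rank $\frank$ (this is how $\frank$ was defined, as $\rank \ZZ_1(\rg_f) = |\gset_f|$). A linearly independent set of $\frank$ vectors in a vector space of dimension $\frank$ is automatically a basis. Hence $\widehat\gset_f$ is a basis for $\ZZ_1(\rg_f)$.

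There is essentially no obstacle here: all the real work has been front-loaded into Proposition \ref{prop:either_large_or_small} (which guarantees $\domcycle(\widehat\gamma_k)=\gamma_k$) and Lemma \ref{lemma:independent} (distinct dominating cycles $\Rightarrow$ independence). The only point worth stating carefully is that the $\gamma_k$ are genuinely distinct as elements of $\ZZ_1(\rg_f)$ --- immediate from the fact that $\gset_f$ is a basis, hence in particular a set of distinct nonzero vectors --- so that "distinct indices" translates to "distinct dominating cycles." I would write the corollary's proof in two or three sentences along exactly these lines.

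\begin{proof}
By Proposition \ref{prop:either_large_or_small}, $\domcycle(\widehat\gamma_k) = \gamma_k$ for each $k \in [1,\frank]$. Since $\gset_f = \{\gamma_1, \ldots, \gamma_\frank\}$ is a basis of $\ZZ_1(\rg_f)$, the cycles $\gamma_1, \ldots, \gamma_\frank$ are pairwise distinct; hence the cycles $\widehat\gamma_1, \ldots, \widehat\gamma_\frank$ have pairwise distinct dominating cycles. By Lemma \ref{lemma:independent}, $\widehat\gset_f$ is linearly independent. As $\ZZ_1(\rg_f)$ has rank $\frank$ and $|\widehat\gset_f| = \frank$, the set $\widehat\gset_f$ is a basis of $\ZZ_1(\rg_f)$.
\end{proof}
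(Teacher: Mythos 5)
Your proof is correct and is essentially identical to the paper's: both invoke Proposition \ref{prop:either_large_or_small} to get $\domcycle(\widehat\gamma_k)=\gamma_k$, apply Lemma \ref{lemma:independent} to conclude linear independence, and finish with the rank count. Your version just spells out the dimension argument slightly more explicitly than the paper does.
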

\begin{proof}
Since the dominating cycles for cycles in $\widehat \gset_f$ are all distinct, it follows from Lemma~\ref{lemma:independent} that all cycles in $\widehat \gset_f$ are linearly independent. Hence $\widehat \gset_f$ also forms a (not necessarily \mycanonical{}) basis for $\ZZ_1(\rg_f)$. 
\end{proof} 

Let $\Fmatrix$ denote the matrix of the mapping from the base cycles in $\widehat\gset_f$ (columns, domain) to those in $\gset_g$ (rows, range) as induced by $\leftmap$, i.e., the $i$th column of $\Fmatrix$ specifies the representation of $\leftmap(\widehat\loopone_i)$ using basis elements from $\gset_g$, with $\Fmatrix_{ij} = 1$ if $\looptwo_j$ is in the  \mydecomposition{} of $\leftmap(\widehat\gamma_i)$. 
Let $\widetilde \Fmatrix$ be the submatrix of $\Fmatrix$ with columns corresponding to basis elements $\widehat\loopone_i$ that are $3\delta$-stable, and rows corresponding to basis elements $\looptwo_j$ that are $2\delta$-stable.
See \cref{fig:augmenting} (a). 
By \cref{prop:either_large_or_small}, $\widetilde \Fmatrix_{ij} = 1$ implies that the basis element $\looptwo_j \in \gset_g$ is $3\delta$-close to the basis element $\loopone_i \in \gset_f$. 
Recall that our goal is to show that there is a  $3\delta$-matching for $\gset_f$ and $\gset_g$. 
Intuitively, non-zero entries in $\widetilde \Fmatrix$ will provide potential matchings for basis elements in $\gset_f$ to establish a $3\delta$-matching between $\gset_f$ and $\gset_g$ that we need. 
\begin{lemma}
The columns of $\widetilde \Fmatrix$ are linearly independent.
\label{lem:submatrixindependent}
\end{lemma}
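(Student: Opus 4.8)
The plan is to argue by contradiction. Suppose some nonempty set $S$ of columns of $\widetilde\Fmatrix$ — necessarily indexing $3\delta$-stable cycles $\widehat\gamma_i\in\widehat\gset_f$ — sums to the zero vector over $\Z_2$. Consider the cycle
$\aloop := \sum_{i\in S}\widehat\gamma_i \in \ZZ_1(\rg_f)$.
I will show that $\aloop$ is simultaneously $3\delta$-stable and mapped by $\leftmap$ into $\smallH(\rg_g)$, which contradicts Lemma~\ref{lem:shortenheight}; this forces $S=\emptyset$, so the columns are independent. (Corollary~\ref{cor:changebasis} guarantees $\widehat\gset_f$ is a basis, so this is the natural quantity to look at.)

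First I would lower-bound $\height(\domcycle(\aloop))$. By Proposition~\ref{prop:either_large_or_small}, $\domcycle(\widehat\gamma_i)=\gamma_i$, and the $\gamma_i$ are distinct basis cycles of $\gset_f$; moreover every cycle other than $\gamma_i$ appearing in the \mydecomposition{} of $\widehat\gamma_i$ has height strictly smaller than $\height(\gamma_i)$. Let $\gamma_{i_0}$ have maximal height among $\{\gamma_i : i\in S\}$. When the $\widehat\gamma_i$ are added over $i\in S$, the summand $\gamma_{i_0}$ cannot be cancelled: the only coefficient-one occurrence of $\gamma_{i_0}$ is from $\widehat\gamma_{i_0}$ itself, and no lower-height term of any $\widehat\gamma_i$ can reach height $\height(\gamma_{i_0})$. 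Hence $\gamma_{i_0}$ survives in the \mydecomposition{} of $\aloop$, and all cycles in that decomposition have height at most $\height(\gamma_{i_0})$, so $\height(\domcycle(\aloop)) = \height(\gamma_{i_0}) > 6\delta$ because $\widehat\gamma_{i_0}$ is $3\delta$-stable.

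Next I would use the vanishing hypothesis to control $\leftmap(\aloop)$. Writing $\leftmap(\widehat\gamma_i)=\sum_j \Fmatrix[i][j]\,\looptwo_j$, linearity of $\leftmap$ on $\ZZ_1$ gives $\leftmap(\aloop)=\sum_j\bigl(\sum_{i\in S}\Fmatrix[i][j]\bigr)\looptwo_j$. For every $\delta$-stable $\looptwo_j$ the coefficient $\sum_{i\in S}\Fmatrix[i][j]=\sum_{i\in S}\widetilde\Fmatrix[i][j]$ is the $j$-th entry of the vanishing column sum, hence $0$; so only non-$\delta$-stable basis cycles $\looptwo_j$, each of height at most $2\delta$, occur in $\leftmap(\aloop)$, i.e., $\leftmap(\aloop)\in\smallH(\rg_g)$. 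Thus $\leftmap(\aloop)$ is not $\delta$-stable and $\height(\domcycle(\leftmap(\aloop)))\le 2\delta$, whereas Lemma~\ref{lem:shortenheight} yields $\height(\domcycle(\leftmap(\aloop)))\ge \height(\domcycle(\aloop))-2\delta>4\delta$, a contradiction. Hence $S=\emptyset$.

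The only delicate point is the non-cancellation step of the second paragraph: one must verify carefully that the dominating cycle of $\sum_{i\in S}\widehat\gamma_i$ does not shrink below height $6\delta$. This rests on two facts already established — that in a \mycanonical{} basis decomposition every non-dominating summand is strictly thinner, and that distinct $\widehat\gamma_i$ have distinct dominating cycles (the content behind Lemma~\ref{lemma:independent}). Everything else is a direct application of Proposition~\ref{prop:either_large_or_small} and Lemma~\ref{lem:shortenheight}.
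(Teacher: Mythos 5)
Your argument is correct and essentially identical to the paper's proof: both identify the surviving maximal-height dominating cycle $\gamma_{i_0}$ of the column sum $\sum_{i\in S}\widehat\gamma_i$ via Proposition~\ref{prop:either_large_or_small}, then apply Lemma~\ref{lem:shortenheight} to conclude that its image under $\leftmap$ remains $\delta$-stable and hence hits a $\delta$-stable row of $\widetilde\Fmatrix$. The only difference is presentational — you phrase it as a contradiction with $\leftmap(\aloop)\in\smallH(\rg_g)$, while the paper directly shows the column sum is nonzero.
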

\begin{proof}
Consider an arbitrary subset of indices $i_1, \ldots, i_s$ whose corresponding columns are in $\widetilde \Fmatrix$ (i.e, each $\widehat \loopone_{i_a}$ is $3\delta$-stable), and let $\widehat\gamma=\widehat\loopone_{i_1}+\dots+\widehat\loopone_{i_s}$. 
We will show that $\leftmap(\widehat\gamma)$ is $2\delta$-stable; that is, $\domcycle(\leftmap(\widehat\gamma))$ has height at least~$4\delta$. 
This means that the linear combination of the corresponding columns in $\widetilde \Fmatrix$ contains a non-zero element. 
Since this holds for any subset of columns from $\widetilde \Fmatrix$, we have that the columns in $\widetilde \Fmatrix$ are linearly independent. 

It remains to show that $\leftmap(\widehat \gamma)$ is $2\delta$-stable.
Recall that for any index $a$, $\loopone_a$ is the dominating cycle of $\widehat \loopone_a$. 
Assume w.l.o.g.\ that $\loopone_{i_1}$ has the largest height among all $\loopone_{i_j}$, for $j \in [1, s]$. (If there are multiple cycles from $\{ \loopone_{i_j} \}_{j=1}^s$ having this largest height, let $\loopone_{i_1}$ be the one with smallest index.) 
From the end of the proof of \cref{prop:either_large_or_small}, 
we note that for any index $a$, $\loopone_a$ is the \emph{only} cycle with maximum height among all cycles in the  \mydecomposition{} of $\widehat \loopone_a$. In other words, all other cycles in the  \mydecomposition{} of $\widehat \loopone_a$ have strictly smaller height than $\loopone_a$. 
Putting these together, it follows that $\loopone_{i_1}$ must exist in the  \mydecomposition{} of $\widehat \gamma$ w.r.t.\ the original thinnest basis $\gset_f$ and in fact, $\loopone_{i_1} = \domcycle(\widehat \gamma)$. 
Since $\loopone_{i_1}$ is $3\delta$-stable (thus its height at least $6\delta$), it then follows from \cref{lem:shortenheight} that 
$\height(\domcycle(\leftmap(\widehat\gamma))) \ge \height(\domcycle(\widehat \gamma)) - 2\delta > 4\delta. $ So $\leftmap(\widehat \gamma)$ is $2\delta$-stable.  
\end{proof}

\begin{cor}
We can identify a unique row index $i$ for each column index $j$ in $\widetilde \Fmatrix$ such that $\widetilde \Fmatrix_{ij}= 1$. 
\label{cor:bipartitematching}
\end{cor}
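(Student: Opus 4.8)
The statement to prove is Corollary~\ref{cor:bipartitematching}: given that the columns of $\widetilde\Fmatrix$ are linearly independent (Lemma~\ref{lem:submatrixindependent}), we can select a system of distinct representatives — a unique row index $j$ for each column index $i$ with $\widetilde\Fmatrix[i][j]=1$. This is a matching in the bipartite graph between columns and rows where an edge $(i,j)$ exists iff $\widetilde\Fmatrix[i][j]=1$, saturating all columns. So the natural tool is Hall's marriage theorem.

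Plan. The plan is to invoke Hall's theorem (over $\Z_2$, but the combinatorial statement is field-independent). First I would set up the bipartite graph $G$ on column-indices $I$ (the $3\delta$-stable cycles of $\gset_f$) and row-indices $J$ (the $\delta$-stable cycles of $\gset_g$), with $i \sim j$ iff $\widetilde\Fmatrix[i][j]=1$. A system of distinct representatives saturating $I$ is exactly a perfect matching of $I$ into $J$, which by Hall's theorem exists iff for every subset $S \subseteq I$, the neighborhood $N(S) = \{ j : \exists i \in S, \widetilde\Fmatrix[i][j]=1\}$ satisfies $|N(S)| \ge |S|$.

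Verifying Hall's condition is where linear independence enters, and this is the main (though short) obstacle. Suppose toward contradiction that some $S \subseteq I$ has $|N(S)| < |S|$. The submatrix of $\widetilde\Fmatrix$ formed by the columns in $S$ has all its nonzero entries confined to the $|N(S)|$ rows in $N(S)$; i.e., it has at most $|N(S)|$ nonzero rows. Hence these $|S|$ columns, viewed as vectors in $\Z_2^{N(S)}$ (a space of dimension $|N(S)| < |S|$), must be linearly dependent, contradicting Lemma~\ref{lem:submatrixindependent}. Therefore Hall's condition holds and a perfect matching (= system of distinct representatives) exists, giving the desired unique row index $j$ for each column $i$.

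I'd close by noting the matching is a function from columns to rows (each $i$ gets exactly one $j$) and the distinctness across different $i$'s is precisely the injectivity guaranteed by the matching, matching the wording "a unique row index $j$ for each column index $i$." The only subtlety to state carefully is that "linearly independent columns" over $\Z_2$ implies the rank of any column-submatrix equals the number of columns, hence the number of nonzero rows is at least the number of columns — that is the one line that powers the Hall check. No deep difficulty beyond this.

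\begin{proof}
We use Hall's marriage theorem. Consider the bipartite graph $G$ whose vertex classes are the column indices $I$ of $\widetilde\Fmatrix$ (corresponding to the $3\delta$-stable cycles in $\gset_f$) and the row indices $J$ of $\widetilde\Fmatrix$ (corresponding to the $\delta$-stable cycles in $\gset_g$), with an edge between $i \in I$ and $j \in J$ if and only if $\widetilde\Fmatrix[i][j] = 1$. A choice of a distinct row index $j$ for each column index $i$ with $\widetilde\Fmatrix[i][j] = 1$ is precisely a matching in $G$ that saturates $I$.

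By Hall's theorem, such a matching exists provided that for every subset $S \subseteq I$ we have $|N(S)| \ge |S|$, where $N(S) = \{ j \in J \mid \widetilde\Fmatrix[i][j] = 1 \text{ for some } i \in S \}$ is the set of rows in which the columns indexed by $S$ have a nonzero entry. Suppose for contradiction that $|N(S)| < |S|$ for some $S \subseteq I$. Then every column of $\widetilde\Fmatrix$ indexed by an element of $S$ has all its nonzero entries in rows belonging to $N(S)$. Hence, restricting to these rows, the $|S|$ columns indexed by $S$ are vectors in the $\Z_2$-vector space $\Z_2^{N(S)}$, which has dimension $|N(S)| < |S|$. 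Therefore these columns are linearly dependent over $\Z_2$, contradicting Lemma~\ref{lem:submatrixindependent}.

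Thus Hall's condition holds and $G$ has a matching saturating $I$. This matching assigns to each column index $i$ a row index $j$ with $\widetilde\Fmatrix[i][j] = 1$, and distinct columns are assigned distinct rows, which is exactly the claimed assignment.
\end{proof}
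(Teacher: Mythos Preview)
Your proof is correct and follows essentially the same approach as the paper: set up the bipartite graph between columns and rows, verify Hall's condition by observing that a deficient neighborhood would force the corresponding columns into a $\Z_2$-space of too small a dimension and hence make them dependent (contradicting Lemma~\ref{lem:submatrixindependent}), and conclude via Hall's theorem that a column-saturating matching exists.
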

\begin{proof}
View the matrix $\widetilde \Fmatrix$ as the adjacency matrix for the following bipartite graph $G = (P \cup Q, E)$, where $P$ are the columns of $\widetilde \Fmatrix$, $Q$ are the rows, and there is an edge between $p \in P$ and $q \in Q$ iff the corresponding entry in the matrix $\widetilde \Fmatrix$ is $1$. 
We now claim that there is a \emph{$P$-saturated matching} for $G$: that is, there is a matching of $G$ such that every node in $P$ is matched exactly once, and each node in $Q$ is matched at most once. 
Note that this immediately implies the claim. 
Specifically, for any subset of nodes $P' \subseteq P$, let $Q' \subseteq Q$ be the union of neighbors of nodes from $P'$. 
In other words, $Q'$ is the set of rows with at least one non-zero entry in the columns $P'$.
If $|Q'| < |P'|$, then these columns of $\widetilde \Fmatrix$ will be linearly dependent, which violates \cref{lem:submatrixindependent}. Hence we have $|Q'| \ge |P'|$.
Now by Hall's Theorem (see, e.g., Page 35 of \cite{LW92}), a $P$-satuated  matching exists for $G$. 
\end{proof}

\paragraph{Proof of \cref{thm:extendedbound}} 
Recall that by \cref{prop:either_large_or_small}, $\widetilde \Fmatrix_{ij} = 1$ implies that the cycles $\loopone_j$ and $\looptwo_i$ are $3\delta$-close. 
It follows from \cref{cor:bipartitematching} that there is an injective map ${F}$ from the set of $3\delta$-stable cycles in $\gset_f$ to the cycles in $\gset_g$ such that each pair of corresponding cycles are $3\delta$-close. 
By a symmetric argument (switching the role of $\rg_f$ and $\rg_g$), 
there is also an injective map ${G}$ from the $3\delta$-stable cycles in $\gset_g$ to cycles in $\gset_f$ where each corresponding pair of cycles are $3\delta$-close. 
However, ${F}$ and $ G$ may not be consistent and do not directly give rise to a $3\delta$-matching of $\gset_f$ and $\gset_g$ yet. 
In what follows, we will modify $ F$ to obtain another \emph{injective} map $\widehat { F}$ such that (i) any $3\delta$-stable cycle in $\gset_f$ is mapped by $\widehat { F}$ to a cycle in $\gset_g$ that is $3\delta$-close, and (ii) all $3\delta$-stable cycles in $\gset_g$ are contained in $\im \widehat { F}$, the image of $\widehat { F}$. 
Note that $\widehat { F}$ provides exactly the set of correspondences necessary in a $3\delta$-matching between $\gset_f$ and $\gset_g$. 
In particular, the injectivity of $\widehat {F}$ and these conditions guarantee that the condition (II) of a $3\delta$-matching. 
As discussed at the beginning of this section, this then means that $d_B(\eDg_1(\rg_f), \eDg_1(\rg_g)) \le 3\delta = 3\DD (\rg_f, \rg_g)$, proving \cref{thm:extendedbound}. 

\begin{figure}[t]
\begin{center}
\begin{tabular}{ccc}
\includegraphics[height = 4cm]{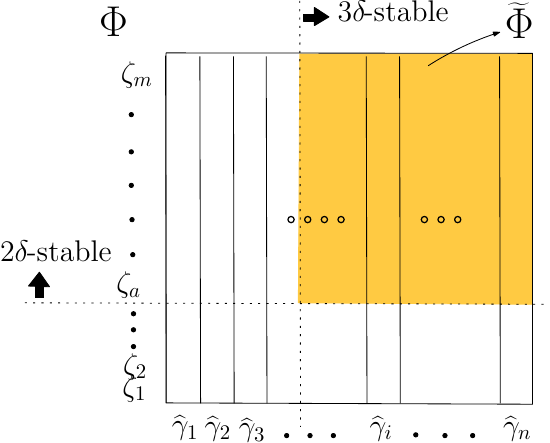} & \hspace*{0.2in} & 
\includegraphics[height=4cm]{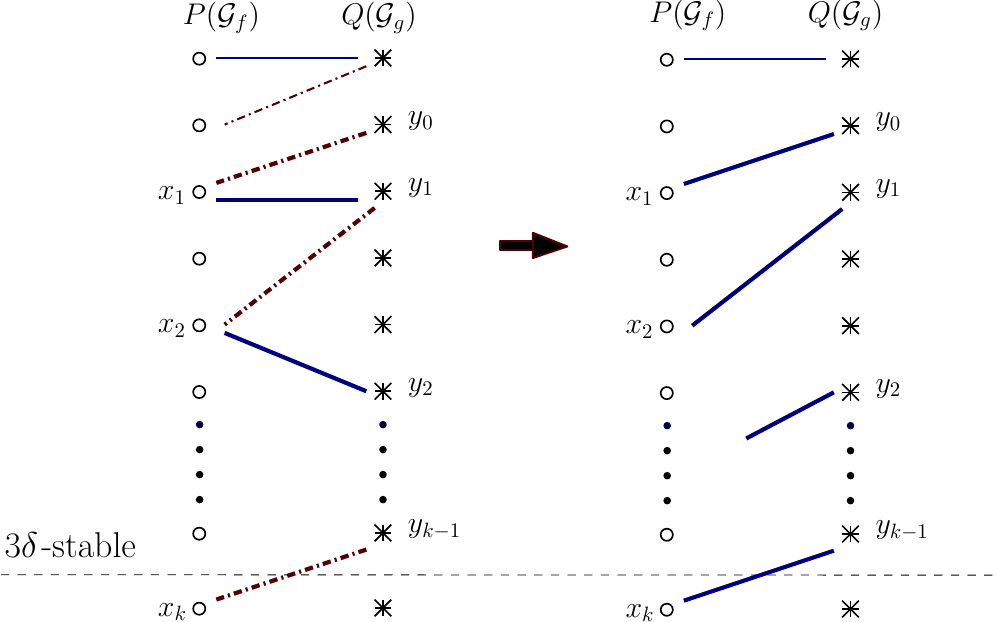} \\
(a) & & (b) \end{tabular}
\end{center}
\vspace*{-0.2in}\caption{{\small
(a): The $i$th column in the matrix $\Fmatrix$ specifies the representation of $\leftmap(\widehat \loopone_i)$ using the basis elements in $\gset_g$. The shaded submatrix represents $\widetilde \Fmatrix$. 
(b): A bipartite graph view of the augmenting process. Left: Thick path alternates between an $F$-induced (solid) and an $G$-induced (dash-dotted) edge. Right: Thick solid edges are induced by the modified injective map $\widetilde F$ (which are used to be $G$-induced edges in the left figure). 
}
\label{fig:augmenting}
}
\end{figure}

It remains to show how to construct $\widehat { F}$ satisfying conditions (i) and (ii) above. Conditions (i) already holds for $ F$, so the main task is to establish condition (ii) while maintaining (i). 
Start with $\widehat { F} =  F$. 
Let $y_0 \not \in \im \widehat{ F}$ be any $3\delta$-stable cycle in $\gset_g$ that is not yet in $\im \widehat { F}$. 
Let $x_1 =  G(y_0) \in \gset_f$. 
We continue with $y_i = \widehat { F}(x_i)$ if $x_i$ is $3\delta$-stable. 
Next, if $y_i$ is $3\delta$-stable, then set $x_{i+1} =  G(y_i)$. 
We repeat this process, until we reach $x_k$ or $y_k$ which is not $3\delta$-stable any more. 
At this time, we modify $\widehat { F}(x_i)$ to be $y_{i-1}$ for each $j \in [1, k]$ (originally, $\widehat { F}(x_i) = y_i$). 
Throughout this process, all $y_i$ 
other than $y_0$ are already in $\im \widehat { F}$. 
After the modification of $\widehat { F}$, we have $y_0 \in \im \widehat { F}$, while all other $y_i$ remain in $\im \widehat { F}$. The only exception is when the above process terminates by reaching some $y_k$ which is not $3\delta$-stable (the termination condition), in which case~$y_k$ will not be in $\im \widehat { F}$ after the modification of $\widehat { F}$.
However, the number of $3\delta$-stable cycles contained in $\im \widehat { F}$ increases by one (i.e., $y_0$) by the above process. It is easy to verify that since $F$ is injective, $\widehat F$ remains injective. 
Furthermore, $x_i$ and $y_{i-1} = \widehat { F}(x_i)$ are still $3\delta$-close, since by construction $x_i = { G}(y_{i-1})$. 

An alternative way to view this is to consider the specific bipartite graph $G' = (P \cup Q, E')$, where nodes in $P$ and $Q$ correspond to basis cycles in $\gset_f$ and $\gset_g$, respectively, and edges $E'$ are those corresponding to a cycle and its image under either the map $\widehat { F}$ or $ G$. 
The sequence $y_0, x_1, y_1, \ldots$ specifies a path with edges alternating between the $\widehat{ F}$-induced and the $ G$-induced matchings. 
The modified assignment of $\widehat{ F}(x_j)$ changes a $ G$-induced matching to an $\widehat{ F}$-induced matching along this path, much similar to the use of augmenting paths to obtain maximum bipartite matching. 
See \cref{fig:augmenting} (b) for an illustration. 

We repeat the above path augmentation process for any remaining $3\delta$-stable cycle in $\gset_g \setminus \im \widehat{ F}$. 
This process will terminate because after each augmentation process, the number of $3\delta$-stable cycles contained in $\im \widehat{ F}$ strictly increases. 
In the end, we obtain an injective map $\widehat{ F}$ from the set of $3\delta$-stable cycles of $\gset_f$ to cycles in $\gset_g$ such that $\im \widehat{ F}$ contains all $3\delta$-stable cycles of $\gset_g$. 
Hence, $\widehat{ F}$ induces a $3\delta$-matching from $\eDg_1(\rg_f)$ to $\eDg_1(\rg_g)$, finishing the proof of \cref{thm:extendedbound}. 
%

%

%


\section{Relation to Gromov--Hausdorff Distance}
\label{sec:GHrelation}

As mentioned earlier in \cref{sec:metric}, the functional distortion distance can be considered as a variant of the Gromov-Hausdorff distance (between metric spaces), restricted to continuous correspondences and taking function values into account. We now discuss this relation in more detail. 

We can view the Reeb graphs $\rg_f$ and $\rg_g$ as metric spaces, equipped with metrics $d_f$ and $d_g$, respectively. 
A natural distance for metric spaces is the \emph{Gromov--Hausdorff} distance, which, using the notation of \cref{eqn:GD}, is defined as
\begin{align}
d_{GH}(\rg_f, \rg_g) = \inf_%
{\leftmap,\rightmap}
\left( D(\leftmap,\rightmap) \right),  
\label{eqn:GHdistdef}
\end{align}
where $\leftmap: R_f\to R_g$ and $\rightmap: R_g \to R_f$ range over all maps between $R_f$ and $R_g$. Here the maps $\leftmap, \rightmap$ are not required to be continuous, which is different from our definition of the \GHlike{} distance. 

Note that translation $f+c$ and negation $-f$ do not change the metric structures of the Reeb graph $R_f$. 
To account for the difference in function values, we define the \emph{\augGH{} distance} 
 between $\onerg_f$ and $\onerg_g$, which measures not only the metric distortion but also the difference in function values between corresponding points:
 \begin{align}
d_{fGH}(\rg_f, \rg_g) := \inf_%
{\leftmap,\rightmap}
\max \left( D(\leftmap,\rightmap), \|f-g\circ\psi\|_\infty, \|f\circ\phi-g\|_\infty \right),  
\label{eqn:fGHdistdef}
\end{align}
where $\leftmap$ and $\rightmap$ 
range over all maps between $\rg_f$ and $\rg_g$. 

It turns out that we have the following relations, which imply that the \GHlike{} distance roughly measures the minimum distortion in both function values (between $f$ and $g$) and in their induced metrics (between $\dreeb_f$ to $\dreeb_g$).  

\begin{theorem}
$d_{fGH} (\rg_f, \rg_g) \le \DD(\rg_f, \rg_g) \le 3 d_{fGH} (\rg_f, \rg_g). $
\label{thm:GHrelations}
\end{theorem}
We note that a similar result also holds for the Gromov--Hausdorff distance, without the terms controlling the function values.
Specifically, for metrics $d_f$ and $d_g$ the standard Gromov--Hausdorff distance as defined in \cref{eqn:GHdistdef} is equivalent to its continuous variant up to a constant factor, restricting $\leftmap$ and $\rightmap$ to continuous maps. 
This relation does not hold in general.

\begin{proof} 

The left inequality $d_{fGH} (\rg_f, \rg_g) \le \DD(\rg_f, \rg_g)$ is immediate from the definitions.
We now prove the right inequality $\DD(\rg_f, \rg_g) \le 3 d_{fGH} (\rg_f, \rg_g)$. 
Fix an arbitrary positive real value $\eps$. Let $\CC$ denote an \emph{$\eps$-optimal correspondence}, i.e., the maximum of the three terms in the right hand side of \cref{eqn:fGHdistdef} is less than or equal to $d_{fGH}(\rg_f, \rg_g) + \eps$. 
Set $\beta = d_{fGH} (\rg_f, \rg_g)$. Our final goal is to show that $\DD(\rg_f, \rg_g) \le 3\beta$. 
We do this by constructing \emph{continuous} maps $\leftmap^\eps: \rg_f \to \rg_g$ and $\rightmap^\eps: \rg_g \to \rg_f$, based on the $\eps$-optimal pair of maps $(\leftmap,\rightmap)$ (which are not necessarily continuous), so that each of the terms in \cref{eqn:distdef} can be bounded by $3\beta+O(\eps)$. 
%

We now show how to construct a certain \emph{continuous} map $\leftmap^\eps: \rg_f \to \rg_g$ from the map $\leftmap: \rg_f \to \rg_g$. 
To do so, we will first construct an \emph{$\eps$-subdivition of $\rg_f$} as follows: 
We subdivide all arcs in $\rg_f$ to obtain a set of nodes $V_\eps = \{v_1, \ldots, v_N\}$ such that $f$ is monotonic on each resulting arc, and the height of an arc $v_i v_j$ (which is $|f(v_i) - f(v_j)|$ since $f$ is monotonic on $v_i v_j$) is at most $\eps$. 
We set $\leftmap^\eps(v_i)=\leftmap(v_i)$. 


Next, we extend this map defined on the nodes in $V_\eps$ to a continuous map defined on the entire graph $\rg_f$. 
In particular, consider an arc $v_i v_j$ and assume w.l.o.g.\ that $f(v_i) \le f(v_j)$. 
Consider $\tilde v_i = \leftmap^\eps(v_i)$ and $\tilde v_j = \leftmap^\eps(v_j)$. 
Since $(\leftmap,\rightmap)$ is $\eps$-optimal, we know that 
\[\frac12|d_f(v_i, v_j) - d_g(\tilde v_i, \tilde v_j) | \le \beta + \eps,\] 
thus 
\[d_g(\tilde v_i, \tilde v_j) \le d_f(v_i, v_j) + 2 (\beta + \eps) \le 2 \beta + 3\eps.\] 
This means that there is an embedded path $\pi(\tilde v_i, \tilde v_j)$ in $\rg_g$ connecting $\tilde v_i$ to $\tilde v_j$ whose height is at most $2 \beta + 3 \eps$. 
We now extend $\leftmap^\eps$ to an arbitrary homeomorphism from the arc $v_i v_j$ of $\rg_f$ to this path $\pi(\tilde v_i, \tilde v_j)$ with  $\leftmap^\eps(v_i) = \tilde v_i$ and $\leftmap^\eps(v_j) = \tilde v_j$. 
Assembling all these pieces of $\leftmap^\eps$ on each arc of $\rg_f$ yields the continuous map $\leftmap^\eps: \rg_f \to \rg_g$. 

Given any point $x \in \rg_f$, assume that $x$ lies on the arc $v_i v_j$. 
Then $\tilde x := \leftmap^\eps (x)$ is mapped to some point in $\pi(\tilde v_i, \tilde v_j)$. 
Since  $(\leftmap,\rightmap)$ is $\eps$-optimal, by definition in \cref{eqn:fGHdistdef}, 
\[g(\tilde v_i) \in [f(v_i) - \beta - \eps, f(v_i) + \beta + \eps ]
\quad\text{and}\quad
g(\tilde v_j) \in [f(v_j) - \beta - \eps, f(v_j) + \beta + \eps]. \]
Since the path $\pi(\tilde v_i, \tilde v_j)$ has height at most $2\beta + 3\eps$, we then have 
\[\range(\pi(\tilde v_i, \tilde v_j)) \in [f(v_i) - 3\beta - 4\eps, f(v_j) + 3\beta + 4\eps].\] 
Since $x \in v_i v_j$ and $\tilde{x} \in \pi(\tilde v_i, \tilde v_j)$, it then follows that $g(\tilde x) \in [f(v_i) - 3\beta - 4\eps, f(v_j) + 3\beta + 4\eps]$ and thus $|g(\tilde x) - f(x)| \le 3\beta + 5\eps$ for any $x \in \rg_f$. Hence we have that $\max_{x  \in \rg_f} |f(x) - g\circ\leftmap^\eps(x)| \le 3\beta + 5\eps$.

Symmetrically, we can take an $\eps$-subdivision of $\rg_g$ with nodes $U_\eps = \{ \tilde u_1, \ldots, \tilde u_M \}$, and construct a continuous map $\rightmap^\eps: \rg_g \to \rg_f$. Using the same argument as above, we have that $\max_{\tilde y \in \rg_g} |g(\tilde y) - f(\rightmap^\eps(\tilde y))| \le 3\beta + 5\eps$. 

We now bound $\dreeb_f(x,y) - \dreeb_g(\tilde x, \tilde y)$ for any $(x,\tilde x), (y,\tilde y) \in G(\leftmap^\eps,\rightmap^\eps)$. 
If $x \in v_iv_j$ and $\tilde x \in \pi(\tilde v_i, \tilde v_j)$ (i.e, $\tilde x = \leftmap^\eps(x)$), we let $w_i=v_i$ and $\tilde w_i=\tilde v_i$ and have $d_f(x,w_i)\leq\eps$ and $d_g(\tilde x,\tilde w_i)\leq 2\beta + 3\eps$ (as the height of path $\pi(\tilde v_i, \tilde v_j)$ is at most $2\beta + 3\eps$ as discussed earlier). If, on the other hand, $\tilde x \in\tilde u_i \tilde u_j$ and $x \in \pi(u_i, u_j)$ (i.e, $x = \rightmap^\eps(\tilde x)$), we let $w_i=u_i$ and $\tilde w_i=\tilde u_i$ and have $d_f(x,w_i)\leq 2\beta + 3\eps$ and $d_g(\tilde x,\tilde w_i)\leq\eps$. In either case, we have $d_f(x,w_i)+d_g(\tilde x,\tilde w_i)\leq 2\beta + 4\eps$. 
See the illustrations of both cases below. 
\begin{figure}[h]
\centering
\includegraphics[scale=.75]{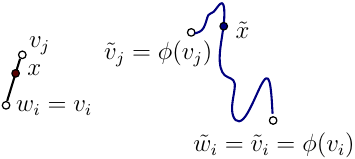}
\hfil
\includegraphics[scale=.75]{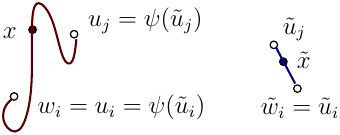}
\end{figure}
In an analogous way, we also obtain $w_a,\tilde w_a$ with $d_f(y,w_a)+d_g(\tilde y,\tilde w_a)\leq 2\beta + 4\eps$. 
Note that by the construction of $\leftmap^\eps$ and $\rightmap^\eps$, both $(w_i, \tilde w_i)$ and $(w_a, \tilde w_a)$ are from the $\eps$-optimal correspondence generated by the maps $(\leftmap, \rightmap)$. In other words, we have that $d_f(w_i, w_a) \le d_g(\tilde w_i, \tilde w_a) + 2\beta + 2\eps$. 
It then follows that:  
\begin{align*}
d_f(x,y) & \leq d_f(x,w_i)+d_f(w_i,w_a)+d_f(w_a,y) \\
& \leq d_f(x,w_i)+d_f(w_a,y)+ ( d_g(\tilde w_i,\tilde w_a)+2\beta+2\eps) \\
& \leq d_f(x,w_i)+d_f(w_a,y)+2\beta+2\eps+d_g(\tilde w_i,\tilde x)+d_g(\tilde x,\tilde y)+d_g(\tilde y,\tilde w_a) \\
& \leq d_g(\tilde x,\tilde y)+6\beta+10\eps.
\end{align*}
By symmetry of the above argument, we obtain \[\frac12\left|\dreeb_f(x,\tilde x)-\dreeb_g(y,\tilde y)\right| \leq 3\beta + 5 \eps.\]
Putting everything together, we  have that $\DD(\rg_f, \rg_g) \le \lim_{\eps \to 0} 3\beta + 5 \eps = 3d_{fGH}(\rg_f, \rg_g)$. 
\end{proof}

\section{Relation to Interleaving Distance for Merge Trees}
\label{sec:interleaving}

A merge tree is simply a rooted tree $T_f$ equipped with a function $f: T_f \to \reals$ such that the function value of $f$ from the root to any leaf is monotonically descreasing. 
For technical reasons, the version of ``merge trees" defined by Morozov et al. \cite{DMP14} further adds an extra arc from the root whose function value extends to $+\infty$, and they 
proposed an \emph{interleaving distance} for two merge trees under this modification. 
From now on, we assume merges are such extended merge trees. 
%

We first introduce the interleaving distance for merge trees defined in \cite{DMP14}. 
Assume that we are given two merge trees $\treeone$ and $\treetwo$ with associated functions $f: \treeone \to \reals$ and $g: \treetwo \to \reals$. 
\begin{mydef}[\cite{DMP14}]
Two continuous maps $\leftintmap: \treeone \to \treetwo$ and $\rightintmap: \treetwo \to \treeone$ are said to be \emph{$\eps$-compatible} for some $\eps \ge 0$, if 
\begin{align}
g(\leftintmap(x)) = f(x) + \eps~; ~~~~~~~~&~~~~~~~~ f(\rightintmap(y)) = g(y) + \eps; \label{eqn:shift1}\\
\rightintmap \circ \leftintmap = \ishift^{2\eps}~; 	~~~~~~~~&~~~~~~~~ \leftintmap \circ \rightintmap = \jshift^{2\eps}; \label{eqn:shift2}
\end{align}
where $\ishift^{2\eps}: \treeone \to \treeone$ and $\jshift^{2\eps}: \treetwo \to \treetwo$ are the $2\eps$-shift maps in the respective trees. 

The \emph{interleaving distance}, $\Dint(T_f, T_g)$, between two merge trees $T_f$ and $T_g$, is the greatest lower bound on $\eps$ for which there are $\eps$-compatible maps: 
\begin{equation}\label{eqn:intdistance}
\Dint(T_f, T_g) = \inf \{ \eps \mid ~\text{there are }~\eps~\text{compatible maps}~\leftintmap: \treeone \to \treetwo~\text{and}~\rightintmap: \treetwo \to \treeone \}. 
\end{equation}
\end{mydef}

Let $\DD(T_f, T_g)$ be the \GHlike{} distance for Reeb graphs that we introduced. 
The main result of this section is that, for merge trees, the interleaving distance of \cite{DMP14} and our \GHlike{} distance are isometric.
\begin{theorem}
Given two merge trees $T_f$ and $T_g$, equipped with functions $: \treeone \to \reals$ and $g: \treetwo \to \reals$, we have $\Dint(T_f, T_g) = \DD(T_f, T_g). $
\label{thm:int-FD}
\end{theorem}

\begin{proof}
We break down the proof into two steps, which are shown in \cref{lem:int-FD-1,lem:int-FD-2}.
\end{proof}

\begin{lemma}
\label{lem:int-FD-1}
$\Dfgh(T_f, T_g) \le \Dint(T_f, T_g)$.
\end{lemma}

\begin{proof}
Let $\eps = \Dint(T_f, T_g)$. We assume that $\eps$ is obtained by a pair of $\eps$-compatible maps\footnote{We note that $\eps$ may be only achieved in the limit. Our argument can be extended to that case by taking a sequence of $\eps'$-compatible maps and send $\eps'$ to $\eps$.},  $\leftintmap: \treeone \to \treetwo$ and $\rightintmap: \treetwo \to \treeone$. 
We will show that the correspondances generated by these two maps $\leftintmap$ and $\rightintmap$ induce a distance distortion at most $\eps$. This implies that $\DD(T_f, T_g) \le \eps$. 
Specifically, let $G(\leftintmap, \rightintmap)$ and $D(\leftintmap, \rightintmap)$ as introduced in \cref{eqn:GD}. 
We now bound $D(\leftintmap, \rightintmap)$. 

Consider two pairs $(x_1, y_1), (x_2, y_2) \in G(\leftintmap, \rightintmap)$. 
we first aim to bound $|d_f(x_1, x_2) - d_g(y_1, y_2)|$ from above. 

Assume first that $y_1 = \leftintmap(x_1)$ and $y_2 = \leftintmap(x_2)$. 
Let $\pi_1$ be the optimal path connecting $x_1$ to $x_2$ that achieves $d_f(x_1, x_2)$, which is necessarily the unique simple path connecting $x_1$ to $x_2$ in the tree $T_f$. 
Its image $\pi'_1 = \leftintmap(\pi_1)$ is a path connecting $y_1$ and $y_2$. 
By \cref{eqn:shift1}, $\leftintmap$ shift every point up by $\eps$ in the corresponding function value. Hence the range of $\pi_1$ is shifted up by $\eps$ to the range of $\pi'_1$ while their heights are the same. 
Hence we have $d_g(y_1, y_2) \le d_f(x_1, x_2)$.  

Now consider the optimal path $\pi_2$ connecting $y_1$ to $y_2$ to achieve $d_g(y_1, y_2)$ in $\treetwo$. 
Let $x'_1 = \rightintmap(y_1)$, $x'_2 = \rightintmap(y_2)$. The image $\pi'_2 = \rightintmap(\pi_2)$ of $\pi_2$ under the map $\rightintmap$ is a path connecting $x'_1$ to $x'_2$ in $\treeone$. 
Similarly, we have that $\height(\pi_2) = \height(\pi'_2)$ and the range of $\pi_2$ is translated up by $\eps$ to $\pi'_2$. 
On the other hand, by \cref{eqn:shift2}, we have $x'_1 = \ishift^{2\eps}(x_1)$, and $x'_2 = \ishift^{2\eps}(x_2)$. By the definition of the shift map, there is a monotone path from $x_1$ to $x'_1$ (along the path from $x_1$ to the root of the merge tree $\treeone$) in $\treeone$; and similarly for $x_2$ and $x'_2$. 
Concatenating these two montone paths with $\pi'_2$ we obtain a path $\pi_3$ connecting $x_1$ to $x_2$. 
Since the two new paths are monotone, of height $2\eps$ each, and both going up, we have that $\height(\pi_3) \le \height(\pi'_2) + 2\eps = \height(\pi_2) + 2\eps$. It then follos that $d_f(x_1, x_2) \le d_g(y_1, y_2) + 2\eps$. 
Putting this together with that $d_g(y_1, y_2) \le d_f(x_1, x_2)$ proved earlier, we thus have $|d_f(x_1,x_2) - d_g(y_1,y_2)| \le 2\eps$. 

If the two pairs are obtained via $x_1 = \rightintmap(y_1)$ and $x_2 = \rightintmap(y_2)$, a symmetric argument will show $|d_f(x_1,x_2) - d_g(y_1,y_2)| \le 2\eps$ as well. 

We now consider the remaining case where $y_1 = \leftintmap(x_1)$ but $x_2 = \rightintmap(y_2)$. 
Let $\pi$ be the optimal path connecting $x_1$ to $x_2$ in $\treeone$ to achieve $d_f(x_1,x_2)$. 
Let $\pi' = \rightintmap(\pi)$ be its image in $\treetwo$: note $\pi'$ connects $y_1$ to $y'_2 = \rightintmap(x_2)$. 
By \cref{eqn:shift1} of the definition of $\eps$-compatible maps, we have that $\pi'$ is of the same height of $\pi$ (and its range is that of $\pi$ shifted upward by $\eps$). 
By \cref{eqn:shift2} of the definition of $\eps$-compatible maps, we have that $y'_2 = \jshift^{2\eps}(y_2)$ and thus there is a monotone path $\pi_4$ of height $2\eps$ connecting $y_2$ to $y'_2$. 
Hence the concatenation $\pi_5 = \pi' \circ \pi_4$ is a path connecting $y_1$ to $y_2$. 
Thus $\height(\pi_5) \le \height(\pi') + 2\eps = \height(\pi) + 2\eps$, implying that $d_g(y_1, y_2) \le d_f(x_1,x_2) + 2\eps$. 

A symmetric argument shows that $d_f(x_1,x_2) \le d_g(y_1,y_2) + 2\eps$. Hence $|d_f(x_1,x_2) - d_g(y_1,y_2)| \le 2\eps$. It then follows that $D(\leftintmap, \rightintmap) \le \eps$. 
On the other hand, by \cref{eqn:shift1}, $\|f - g\circ \leftintmap\|_\infty = \eps$ and $\|f\circ\rightmapit - g\|_\infty = \eps$. 
By \cref{eqn:distdef}, it then follows that $\DD(T_f, T_g) \le \eps$. 
\end{proof}

\begin{lemma}
\label{lem:int-FD-2}
$\Dint(\treeone, \treetwo) \le \DD(\treeone, \treetwo)$. 
\end{lemma}
\begin{proof}
Let $\delta = \DD(\treeone, \treetwo)$ denote the \fGH-distance between two merge trees $\treeone$ and $\treetwo$, and 
let $\optleftmapit: \treeone\to \treetwo$ and $\optrightmapit: \treetwo \to \treeone$ be the optimal continuous maps\footnote{Again, if the optimal is achieved in the limit, we can modify our argument by taking a sequence of near optimal maps and take them to the limit.} achieving $\delta$. 
We will now construct a pair of $\delta$-compatible maps for $\treeone$ and $\treetwo$ using $\optleftmapit$ and $\optrightmapit$. 
This then implies that $\Dint(\treeone, \treetwo) \le \DD(\treeone,\treetwo)$ as claimed. 

First, we construct the map $\leftintmapdelta: \treeone\to \treetwo$ as follows: 
For every point $x\in \treeone$, let $y = \optleftmapit(x)$. Now set $\rho = f(x) + \delta - g\circ\optleftmapit(x)$ --- by the definition of $\DD$ in \cref{eqn:distdef}, $\rho$ is a non-negative real value in the range $[0, 2\delta]$. 
We now set $\leftintmapdelta(x) = \jshift^\rho (y) = \jshift^\rho \circ \optleftmapit (x)$. 
Easy to see that by the choice of $\rho$, $g(\leftintmapdelta(x)) = f(x) + \eps$. 
Since $\optleftmapit$ is continuous, the function $\rho: T_f \to \reals$ is continuous, and the map $\leftintmapdelta$ is thus also a continuous map. 
Similarly, we construct $\rightintmapdelta: \treetwo \to \treeone$. 
By their construction, the requirements in \cref{eqn:shift1} are satisfied. 
We now show that \cref{eqn:shift2} also hold for $\leftintmapdelta$ and $\rightintmapdelta$. 

Indeed, consider a point $x\in \treeone$, and let $y = \optleftmapit(x)$ and $y' = \leftintmapdelta(x)$. 
By the definitino of $\leftintmapdelta$, $g(y') = f(x) + \delta \ge g(y)$ and there is a monotone path $\pi$ connecting $y$ to $y'$ (in particular, $y'$ is along the path from $y$ to the root of the merge tree $\treetwo$). 
Now map $\pi$ back to $\treeone$ via the map $\rightintmapdelta$, which is necessarily a monotone path $\pi'$ connecting $\tilde{x} := \rightintmapdelta(y)$ and $x' := \rightintmapdelta(y') = \rightintmapdelta \circ \leftintmapdelta(x)$. In other words, $x'$ is along the path from $\tilde{x}$ to the root of the merge tree $\treeone$. 
By the definition of $\leftintmapdelta$ and $\rightintmapdelta$, $f(x') = f(x) + 2\delta$. 
We now show that $x'$ is along the path from $x$ to the root of the merge tree $\treeone$: this would then imply that $x' = \ishift^{2\delta}$, namely, $\rightintmapdelta \circ \leftintmapdelta = \ishift^{2\delta}$. 

To see that there is a monotone path from $x$ up to $x'$ in $\treeone$, set $\tilde{x}' = \optrightmapit(y)$. By the construction of $\rightintmapdelta$, $\tilde{x}$ is along the unique monotone path from $\tilde{x}'$ up to the root of $\treeone$. Furthermore, $f(\tilde{x}) = g(y) + \delta, f(\tilde{x}') \in [g(y) - \delta, g(y) + \delta]$ and $f(\tilde{x}') \le f(\tilde{x})$. 
Note that $(x, y)$ and $(\tilde{x}', y)$ are in the set of correspondances $\mathcal{G}(\optleftmapit,\optrightmapit)$ (this is because $y = \optleftmapit(x)$ and $\tilde{x}' = \optrightmapit(y)$). 
Hence by the definition of $\DD$ which is achieved by $\optleftmapit$ and $\optrightmapit$, there is a path $\tilde{\pi}$ connecting $x$ to $\tilde{x}'$ such that $\height(\tilde{\pi}) \le 2\delta$. This means that the least common ancester of $x$ and $\tilde{x}'$ has a function value at most $f(x) + 2\delta$ which is $f(x')$. 
Since $\tilde{x}'$ and $\tilde{x}$ are connected by a monotone path, the least common ancester of $x$ and $\tilde{x}$ has a function value at most $f(x')$. 
Since $x'$ is an ancester of $\tilde{x}$, it follows that $x'$ is an ancestor for $x$ as well. 
Hence $x' = \ishift^{2\delta}(x)$ and $\rightintmapdelta \circ \leftintmapdelta = \ishift^{2\delta}$. 
A symmetric argument shows that $\leftintmapdelta \circ \rightintmapdelta =\ishift^{2\delta}$. 
Putting everything together, we have that $\leftintmapdelta$ and $\rightintmapdelta$ form a $\delta$-compatible pair of maps between $\treeone$ and $\treetwo$. As such, $\Dint(\treeone, \treetwo) \le \delta = \DD(\treeone, \treetwo)$. 

\end{proof}


\section{Simplification of Reeb Graphs}
\label{sec:simp}

Reeb graphs have been used as a meaningful summary of the input functions. Simplifying a Reeb graph can help to remove noise or single out major features, and to create a multi-resolution representation of the input domain; see e.g., \cite{DN12,GSBW11,PSBM07}. 
As we described in \cref{sec:background}, there is a natural way to quantify branching and loop features in terms of ordinary and extended persistence in the according dimensions. 
Indeed, it is common practice to simplify the Reeb graph by removing all features with persistence smaller than a given threshold. In this section, we prove that by removing small features using a natural merging strategy, (branching and loop) features with large persistence will not be killed, and will roughly maintain their persistence (``importance''). 

\subsection{A Natural Simplification Scheme for Reeb Graphs}
\label{sec:simpstrategy}
\begin{figure}[t]
\begin{center}
\begin{tabular}{ccc}
\includegraphics[height=2.8cm]{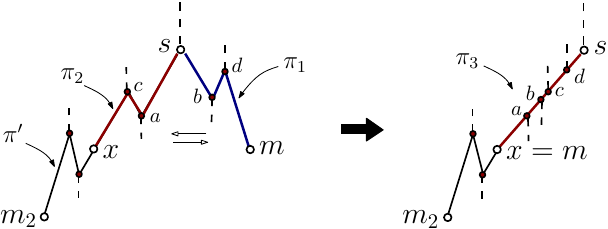} & &
\includegraphics[height=3.3cm]{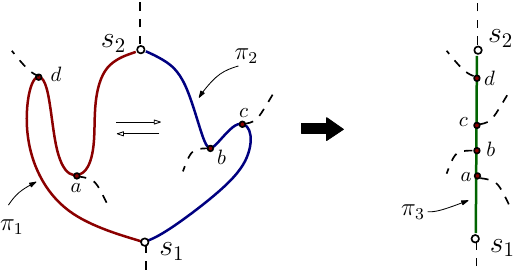}  \\
(a) & & (b)\end{tabular}
  \end{center}
\vspace*{-0.15in}\caption{{\small (a): Removing a branching feature spanned by $(m,s)$ by merging paths $\pi_1$ and $\pi_2$. This removes the point $(f(m), f(s))$ from the $0$-th ordinary persistence diagram. (b): Removing a cycle feature spanned by $(s_1,s_2)$; this removes the point $(f(s_1), f(s_2))$ from the $1$st extended persistence diagram.} 
\label{fig:merge}}
\end{figure}

We first introduce a natural simplification scheme for Reeb graphs (see, e.g., \cite{GSBW11,PSBM07}). See \cref{fig:merge} for an illustration.

Given an ordinary persistence pair $(m, s)$, assume that $m$ is a minimum and $s$ is a down-fork. 
Recall that the down-fork $s$ merges two connected components $C_1$ and $C_2$ of the sublevel set below $f(s)$, and $m$ is the higher minimum of the two. 
To remove the feature $(m, s)$, we wish to merge the branch containing $m$, say $C_1$, into the other branch $C_2$, so that afterwards, $m$ and $s$ become regular points (i.e., with up-degree and down-degree both being $1$). 
In particular, we perform the following operation (see \cref{fig:merge} (a)).  Let $m_2$ denote the minimum of $C_2$. We choose an arbitrary embedded path $\pi_1 \subseteq C_1$ from $s$ to $m$, and an arbitrary $\pi' \subseteq C_2$ from $s$ to $m_2$. Now imagine we traverse $\pi'$ starting from $s$. We stop when we encounter the first point $x$ on $\pi'$ such that $f(x) = f(m)$, and set $\pi_2$ to be the subcurve of $\pi'$ from $s$ to $x$. 
By identifying points with the same function value, we merge $\im\pi_1$ and $\im\pi_2$ to form the image of a new monotonic arc $\pi_3$ between $s$ and $x$ such that any point $p \in \im\pi_1 \cup \im\pi_2$ is mapped to some $q \in \im\pi_3$ with $f(q) = f(p)$. 
Pairs of type (up-fork, maximum) are treated in a symmetric way.

Given an extended persistence pair $(s_1, s_2)$ between an up-fork $s_1$ and a down-fork $s_2$, 
let $\gamma$ be a thin cycle that spans it. W.l.o.g. assume that $\im \gamma$ consists only of a single connected component: 
if $\im\gamma$ has multiple connected components, then there must exist one that contains both $s_1$ and $s_2$. That component is necessarily an embedded loop and thus we can simply set $\gamma$ to be the \mycanonical{} cycle corresponding to that loop. 
Let $\pi_1$ and $\pi_2$ denote the two disjoint sub-curves of the loop that connect $s_1$ and $s_2$. To cancel the feature, intuitively, we wish to merge $\pi_1$ and $\pi_2$ to kill the cycle $\gamma$. Note that $\pi_1$ and $\pi_2$ may not be monotonic (w.r.t. the input function $f$); however, all points in $\pi_1$ and $\pi_2$ have function values within the range $[f(s_1), f(s_2)]$. 
The merging of $\pi_1$ and $\pi_2$ results in a new monotonic arc $\pi_3$ from $s_1$ and $s_2$, such that every point $p \in \im\gamma$ is mapped to some $q \in \pi_3$ with $f(q) = f(p)$. See \cref{fig:merge} (b) for an illustration.

Note that since a critical pair $(m,s)$ (resp. an essential pair $(s_1,s_2)$) corresponds uniquely to a persistence pair $(f(m), f(s))$ in the ordinary persistence diagram (resp. $(f(s_1), f(s_2))$ in the extended persistence diagram), the above process also removes a point from the respective persistence diagram. 

Let $\onerg$ and $\onerg'$ denote the Reeb graph before and after the simplification of a persistence pair $\tau = (\bp, \tp)$ by collapsing its corresponding branching or loop feature. Let $\pi^\tau_1$ and $\pi^\tau_2$ be as introduced above. 
Call $\gamma^\tau = \pi_1^\tau \cup \pi_2^\tau$ the \emph{merging path w.r.t. $\tau$}. Note that $\gamma^\tau$ is a closed curve corresponding to a thin cycle spanning $(\bp, \tp)$ when it is an extended persistence pair, and a connected path with $\bp$ and $\tp$ being the respective minimum and maximum function values on it otherwise. 
In either case, the height of the merging path is at most $|d - b|$, the persistent of this pair $(b,d)$. 
The merging path $\gamma^\tau$ will be collapsed into a single monotonic arc in order to eliminate the persistence pair $\tau$. 
We can view the removal of $\tau$ in a more formal way as follows: We say that two points $x, y \in \onerg$ are $\tau$-equivalent, denoted  by $x \sim_\tau y$, if $f(x) = f(y)$ and $x, y \in \gamma^\tau$. 
The simplified Reeb graph $\onerg'$ is the quotient space $\onerg / {\sim}_\tau$; 
the corresponding quotient map $\mu_\tau: \onerg \rightarrow \onerg'$ satisfies $\mu_\tau(x) = \mu_\tau(y)$ if and only if $x \sim_\tau y$. 
The function $f: \onerg \rightarrow \reals$ induces a function $f': \onerg' \rightarrow \reals$ such that for any $x' \in \onerg'$, $f'(x') = f(x)$ for any $x \in \mu^{-1}_\tau (x')$. 

Now given an input Reeb graph $\onerg$, suppose we wish to eliminate a set of persistence pairs $\{ \tau_1 = (\bp_1, \tp_1), \tau_2 = (\bp_2, \tp_2), \ldots, \tau_k = (\bp_k, \tp_k) \}$. 
Compute the merging path $\gamma^{\tau_i}$ for each persistence pair $\tau_i$ in $\onerg$. 
We now define an equivalence relation $\sim$ as the transitive closure of all $\sim_{\tau_i}$s for $i \in [1, k]$. 
This is equivalent to collapsing $\gamma^{\tau_i}$s for all $i \in [1,k]$ in an arbitrary order to kill the persistence pairs $\tau_1, \ldots, \tau_k$. 
The final simplified Reeb graph $\newrg$ is obtained as the quotient space $\onerg / {\sim}$, with $\mu: \onerg \to \newrg$ being the associated quotient map. We have a well-defined function $g: \newrg \to \reals$ induced by the function $f: \onerg \to \reals$ such that $g(\mu(x)) = f(x)$ for any $x \in \onerg$. 
Let $\delta$ denote the largest persistence of $\tau_1, \ldots, \tau_k$. 
We have the following properties of $\newrg$. 

\begin{obs}
(i) Given any two points $x, y \in \onerg$, we have $d_g (\mu(x), \mu(y)) \le d_f(x, y)$. \\
(ii) Given a point $\tilde{x} \in \newrg$, for any two points $x_0, x_1 \in \mu^{-1}(\tilde{x})$ we have $d_f (x_0, x_1) \le 2\delta$. 
\label{obs:simpmap}
\end{obs}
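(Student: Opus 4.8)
The plan is to prove both parts by first establishing them for a single simplification step (collapsing one merging path $\gamma^\tau$), and then lifting to the general case by transitivity / induction on the number of pairs $\tau_1, \ldots, \tau_k$. For part (i), fix $x, y \in \onerg$ and let $\pi$ be an optimal path realizing $d_f(x,y)$, so $\height(\pi) = d_f(x,y)$. Its image $\mu(\pi)$ is a path in $\newrg$ from $\mu(x)$ to $\mu(y)$, and since $g \circ \mu = f$, every point on $\mu(\pi)$ has a $g$-value equal to the $f$-value of some point on $\pi$; hence $\range(\mu(\pi)) \subseteq \range(\pi)$ and so $\height(\mu(\pi)) \le \height(\pi) = d_f(x,y)$. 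Therefore $d_g(\mu(x),\mu(y)) \le \height(\mu(\pi)) \le d_f(x,y)$. This argument is clean because $\mu$ is value-preserving; the only thing to check is that $\mu(\pi)$ is indeed a (continuous) path, which follows since $\mu$ is continuous (it is a quotient map) and $\pi$ is continuous.

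For part (ii), consider first a single collapse $\mu_\tau: \onerg \to \onerg'$ with merging path $\gamma^\tau$ spanning the persistence pair $\tau = (\bp, \tp)$ of persistence $\tp - \bp \le \delta$. If $x_0, x_1 \in \mu_\tau^{-1}(\tilde x)$ with $x_0 \ne x_1$, then by definition of $\sim_\tau$ we have $f(x_0) = f(x_1)$ and both $x_0, x_1 \in \gamma^\tau$. Now $\gamma^\tau$ is either a connected path (branching case) or a closed curve / thin cycle (loop case), and in both cases all of its points have $f$-values in the interval $[\bp, \tp]$ of length at most $\delta$. Since $\gamma^\tau$ is connected, there is a subpath of $\gamma^\tau$ joining $x_0$ to $x_1$ whose range lies inside $[\bp,\tp]$, so its height is at most $\delta$, giving $d_f(x_0, x_1) \le \delta$. (For the general transitive-closure relation $\sim$, a point of $\mu^{-1}(\tilde x)$ need not be joined to another by a single $\gamma^{\tau_i}$, so this bound by itself is not enough — see below.)

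For the general simplification $\mu: \onerg \to \newrg$, I would argue by induction on $k$ using the factorization $\onerg \to \onerg_1 \to \cdots \to \onerg_k = \newrg$, where $\onerg_i$ is obtained from $\onerg_{i-1}$ by collapsing the image of $\gamma^{\tau_i}$, a merging path of a persistence pair whose persistence is still at most $\delta$ (collapsing does not increase $f$-differences, so ranges of merging paths are preserved). If $x_0, x_1 \in \mu^{-1}(\tilde x)$, trace their images through the factorization: at each stage $i$, the current representatives either already coincide or lie on the same connected merging path in $\onerg_{i-1}$ with $f$-values in an interval of length $\le \delta$, hence are at $d_{f_{i-1}}$-distance $\le \delta$; applying part (i) repeatedly to pull this distance back to $\onerg$ would introduce no loss in the wrong direction, but one must be careful because part (i) bounds $d$ after collapsing, not before. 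The cleaner route, and the one I expect to use, is to observe that $\mu^{-1}(\tilde x)$ is contained in a single connected component of a level set of $f$ restricted to the union $\bigcup_i \gamma^{\tau_i}$ — more precisely in the preimage of $\tilde x$, which is connected in $\onerg$ (quotient by collapsing connected pieces within level sets) and lies in $f^{-1}([\,\min_i \bp_i,\ \max \ldots\,])$; the subtlety is that the relevant interval could a priori be larger than $\delta$ if merging paths "chain up" across different $\tau_i$. Showing that such chaining still keeps all of $\mu^{-1}(\tilde x)$ within a common $f$-interval of length $2\delta$ — rather than something growing with $k$ — is the main obstacle, and the factor $2$ (versus $\delta$ in the single-step case) is exactly what absorbs one level of chaining; I would handle it by noting that $\mu^{-1}(\tilde x)$ is a connected subset of a single level set $f^{-1}(c)$ after... no: it is connected but not within one level set, so instead I track the path inside $\bigcup_i \gamma^{\tau_i}$ connecting $x_0$ to $x_1$ and bound its height by the observation that consecutive merging paths in the chain share a point, and each contributes an interval of length $\le \delta$, but adjacent intervals overlap at the shared point's value, so the total range has length $\le 2\delta$ — the overlap at shared points is what prevents unbounded accumulation. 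Formalizing this overlap argument carefully is the crux of the proof.
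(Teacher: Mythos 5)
Your part (i) is correct and is what the paper intends: $\mu$ is continuous and value-preserving ($g\circ\mu=f$), so the image of a height-optimal path from $x$ to $y$ is a path from $\mu(x)$ to $\mu(y)$ of no larger height. Your single-collapse version of part (ii) is also fine, and your overall skeleton for the general case --- decompose the identification $x_0\sim x_1$ into a chain $x_0=y_0\sim_{\tau_{j_1}}y_1\sim_{\tau_{j_2}}\cdots\sim_{\tau_{j_a}}y_a=x_1$ of elementary identifications and concatenate subpaths of the merging paths $\gamma^{\tau_{j_i}}$ --- is exactly the paper's route.

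The gap is in the final step, which you yourself flag as ``the crux'' but do not actually carry out. The claim that ``adjacent intervals overlap at the shared point's value, so the total range has length $\le 2\delta$'' is a non sequitur: intervals $[0,\delta],[\delta,2\delta],[2\delta,3\delta],\dots$ pairwise overlap at shared endpoints, yet their union grows linearly in the number of links, so overlap alone gives no bound independent of $k$. The fact you are missing is that each elementary relation $\sim_{\tau_i}$ identifies only points with \emph{equal} function value; hence, by transitivity, every junction point of the chain satisfies $f(y_i)=f(x_0)=\alpha$. Therefore every merging path $\gamma^{\tau_{j_i}}$ appearing in the chain contains a point of value $\alpha$, and since its range is an interval of length at most $\delta$, that range lies inside $[\alpha-\delta,\alpha+\delta]$. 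The concatenated path from $x_0$ to $x_1$ thus has range contained in $[\alpha-\delta,\alpha+\delta]$ no matter how long the chain is, which gives $d_f(x_0,x_1)\le 2\delta$. With this one observation inserted your argument closes; without it, the factor you obtain is not $2\delta$ but potentially $a\delta$.
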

\begin{proof}
Claim (i) follows easily since the quotient map $\mu$ preserves function values. We now prove (ii). 
Since $\mu(x_0) = \mu(x_1) = \tilde x$, by the definition of $\mu$ there exists a set of equivalent relations $\sim_{\tau_{j_1}}, \ldots, \sim_{\tau_{j_a}}$ with the index set $\{ j_1, \ldots, j_a \} \subseteq \{ 1, \ldots, k\}$ such that 
$y_0:= x_0 \sim_{\tau_{j_1}} y_1 \sim_{\tau_{j_2}} y_2 \cdots \sim_{\tau_{j_a}} y_a:= x_1$. 
Set $\alpha = f(x_0) = f(x_1)$. 
All~$y_i$ have the same function value $\alpha$. 
For each $i \in [1, a]$, we have that $y_{i-1} \sim_{\tau_{j_i}} y_i$, which is induced by the merging path $\gamma^{\tau_{j_i}}$ with $\height(\gamma^{\tau_{j_i}}) \le \delta$. In other words, there is a subpath $\pi_i$ of $\gamma^{\tau_{j_i}}$ connecting $y_{i-1}$ to $y_i$ such that $\range(\pi_i) \subseteq [\alpha-\delta, \alpha + \delta]$. 
The concatenation of these paths $\pi_i$ gives rise to a path $\pi$ connecting $y_0 = x_0$ and $y_a = x_1$, and $\range(\pi) \subseteq [\alpha-\delta, \alpha+\delta]$. 
This proves claim (ii). 
\end{proof}

A similar argument of the above observation can in fact lead to the following more refined statements.  
\begin{lemma}
Let $y_a, y_b \in \newrg$ be two points in $\newrg$ such that there exists a monotonic path $\pi^*$ between $y_a$ and $y_b$ with $d_g(y_a, y_b) =  \height(\pi^*) = g(y_b) - g(y_a)$, where $g(y_b) > g(y_a)$. 
Let $x_a$ and $x_b$ be arbitrary preimages for $y_a$ and $y_b$, respectively. Then $d_f(x_a,x_b) \le 2\delta + \height(\tilde{\pi})$. 

In fact, there is a path $\pi$ from $x_a$ to $x_b$ such that the highest point $t$ in $\im \pi$ satisfies $f(t) \le f(x_b) + \delta$, and the lowest point $w$ in $\im \pi$ satisfies $f(w) \ge f(x_a) - \delta$. 
\label{lem:ghpreimagepath} 
\end{lemma}

\ignore{
\begin{figure*}[htbp]
\begin{center}
\includegraphics[height=4cm]{./Figs/augmentReeb}
\end{center}
\caption{Left: An $\eps$-subdivision of the simplified Reeb graph $\newrg$ such that each arc is monotonic and of height at most $\eps$. Right: An arc $\tilde{\pi}(\tilde{v}_i, \tilde{v}_j)$ is mapped under $\rightmap^\eps$ to a  path $\pi(v_i, v_j)$ such that $\mu(v_i) = \tilde v_i, \mu(v_j) = \tilde v_j$, and the range of the path $\pi(v_i, v_j) \subseteq \onerg$ is contained within the interval $[f(v_i) - \delta, f(v_j) + \delta]$. Note that the paths $\pi(v_i, v_j)$ and $\pi(v_{i'}, v_{j'})$ for different arcs $\tilde{\pi} (\tilde v_i, \tilde v_j)$ and $\tilde{\pi} (\tilde v_{i'}, \tilde v_{j'})$ of $\newrg$ are not necessarily disjoint.}
\label{fig:rightmap} 
\end{figure*}

\subsection{Distance between $\onerg$ and $\newrg$}
\label{appendix:sec:simpdistance}

While the simplification scheme removes persistence pairs $\tau_1, \ldots, \tau_k$, it is not clear how other points in the persistence diagram of the original Reeb graph $\onerg$ are affected. 
In this section, we bound the bottleneck distance between the persistence diagrams of $\onerg$ and $\newrg$. 
Specifically, we bound the \GHlike{} distance $\DD(\onerg, \newrg)$, where we have $f: \onerg \to \reals$ and $g: \newrg: \to \reals$ (defined at the end of \cref{appendix:sec:simpstrategy}). 
We do so by constructing continuous maps $\leftmap: \onerg \to \newrg$ and $\rightmap: \newrg \to \onerg$, and bounding the four terms in \cref{eqn:distdef}, which in turn provides an upper bound for $\DD(\onerg, \newrg)$. 

The continuous map $\leftmap: \onerg \to \newrg$ can simply be taken as the surjective map $\mu: \onerg \to \newrg$. For the opposite direction, we will construct a sequence of maps $\rightmap_\eps: \newrg \to \onerg$. First, we need the following result, which is a slight generalization of \cref{obs:simpmap}, and whose proof is similar but more tedious. 

\begin{lemma}
Let $\tilde{x}, \tilde{y} \in \newrg$ be two points in $\newrg$ such that there exists a monotonic path $\tilde{\pi}$ between $\tilde{x}$ and $\tilde{y}$ with $d_g(\tilde{x}, \tilde{y}) = g(\tilde{y}) - g(\tilde{x}) = \eps$. 
Let $x$ and $y$ be arbitrary preimages for $\tilde{x}$ and $\tilde{y}$, respectively. Then $d_f(x,y) \le 2\delta + \eps$. 

In fact, there is a path $\pi$ from $x$ to $y$ such that the highest point $t$ in $\im \pi$ satisfies $f(t) \le f(y) + \delta$, and the lowest point $b$ in $\im \pi$ satisfies $f(b) \ge f(x) - \delta$. 
\label{lem:preimagepath} 
\end{lemma}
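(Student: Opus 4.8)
The plan is to reduce to Observation~\ref{obs:simpmap} by subdividing the monotonic path $\tilde\pi$ into short pieces, lifting each piece, and concatenating the lifts. First I would subdivide $\tilde\pi$ by points $\tilde x = \tilde z_0, \tilde z_1, \ldots, \tilde z_N = \tilde y$ so that $g$ is strictly increasing along $\tilde\pi$ (it already is, by monotonicity) and each sub-arc $\tilde\pi[\tilde z_{i-1}, \tilde z_i]$ has height $g(\tilde z_i) - g(\tilde z_{i-1})$ as small as we like; in the limit this will not matter, but having the pieces short lets me control where the lifted path goes. For the endpoints I use the prescribed preimages: set $z_0 = x$, $z_N = y$. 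For each interior $\tilde z_i$ pick an \emph{arbitrary} preimage $z_i \in \mu^{-1}(\tilde z_i) \subseteq \onerg$.

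Next I would connect consecutive lifted points. The key observation is that $\tilde z_{i-1}$ and $\tilde z_i$ are joined by the monotonic sub-arc of $\tilde\pi$; choosing a preimage $w_{i-1}$ of $\tilde z_{i-1}$ and $w_i$ of $\tilde z_i$ that actually lie on a lift of that sub-arc of $\tilde\pi$ inside $\onerg$ (such a lift exists because $\mu$ restricted to the merging paths only identifies points of equal function value, so away from the collapsed arcs $\mu$ is a local homeomorphism and a short monotonic arc lifts), we get a monotonic path in $\onerg$ from $w_{i-1}$ to $w_i$ whose range is exactly $[g(\tilde z_{i-1}), g(\tilde z_i)] = [f(w_{i-1}), f(w_i)]$. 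Now $w_{i-1}$ and $z_{i-1}$ are two preimages of the same point $\tilde z_{i-1}$, so by Observation~\ref{obs:simpmap}(ii) there is a path between them of range contained in $[f(z_{i-1}) - \delta, f(z_{i-1}) + \delta]$; similarly for $w_i$ and $z_i$. Concatenating $z_{i-1} \leadsto w_{i-1} \leadsto w_i \leadsto z_i$ gives a path from $z_{i-1}$ to $z_i$ whose range is contained in $[f(z_{i-1}) - \delta,\, f(z_i) + \delta]$ (using $f(z_{i-1}) = g(\tilde z_{i-1}) \le g(\tilde z_i) = f(z_i)$ and the monotone subdivision so that the three pieces' ranges telescope within this window). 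Concatenating over all $i$ yields a path $\pi$ from $x = z_0$ to $y = z_N$; its highest point has $f$-value at most $f(z_N) + \delta = f(y) + \delta$ and its lowest point has $f$-value at least $f(z_0) - \delta = f(x) - \delta$, which is the "in fact" clause. The height bound $d_f(x,y) \le \height(\pi) \le (f(y) + \delta) - (f(x) - \delta) = \eps + 2\delta$ then follows.

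The main obstacle I anticipate is the lifting step: justifying that a short monotonic arc of $\tilde\pi$ in $\newrg$ can be lifted to $\onerg$ with prescribed endpoints over the subdivision points, and keeping careful track of the ranges so that the telescoping argument genuinely stays inside the window $[f(x) - \delta, f(y) + \delta]$ rather than merely inside $[f(x) - \delta, f(y) + \delta]$ for the overall endpoints but bulging out in the middle. Because the collapsed merging paths $\gamma^{\tau_i}$ all have height at most $\delta$ and only identify equal-value points, the detours introduced by Observation~\ref{obs:simpmap}(ii) stay within $\pm\delta$ of the current function value, which never exceeds $f(y)$ nor drops below $f(x)$ along the monotone subdivision; making this precise (and handling the fact that the paths $\pi(v_i,v_j)$ for different arcs need not be disjoint, as noted in Figure~\ref{fig:rightmap}) is the tedious but routine part. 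An alternative, cleaner route would be to induct on the number $k$ of collapsed pairs, treating one $\sim_{\tau_i}$ at a time as in Observation~\ref{obs:simpmap}, but the direct subdivision argument above seems more transparent.
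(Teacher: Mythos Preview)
The paper does not actually prove this lemma; it only remarks that the argument is ``similar but more tedious'' than the proof of Observation~\ref{obs:simpmap}(ii). Your plan---subdivide $\tilde\pi$, lift each piece, patch the endpoints together via Observation~\ref{obs:simpmap}(ii), and telescope the range bounds---is a faithful unpacking of that hint, and the overall skeleton is sound.

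There is one genuine inaccuracy in the lifting step. You assert that the lift $w_{i-1}\leadsto w_i$ is a \emph{monotonic} path in $\onerg$ with range exactly $[g(\tilde z_{i-1}), g(\tilde z_i)]$, and that making the pieces short is what controls the lift. Neither is quite right. The paper explicitly notes, in the description of loop-feature simplification, that the two halves $\pi_1,\pi_2$ of a merging cycle $\gamma^\tau$ ``may not be monotonic''. Hence when the sub-arc $\tilde\pi[\tilde z_{i-1},\tilde z_i]$ lies inside the collapsed image $\mu(\gamma^\tau)$, any lift running through $\gamma^\tau$ can oscillate within the full range $[f(s_1),f(s_2)]$, regardless of how short the sub-arc is. The correct (and sufficient) statement is only that such a lift has range contained in $[g(\tilde z_{i-1})-\delta,\,g(\tilde z_i)+\delta]$: any merging path the lift traverses has height at most $\delta$ and meets some level in $[g(\tilde z_{i-1}), g(\tilde z_i)]$, so its range sits inside that window. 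With this weaker bound your three-piece segment $z_{i-1}\leadsto w_{i-1}\leadsto w_i\leadsto z_i$ still has range in $[f(z_{i-1})-\delta,\,f(z_i)+\delta]$, and the telescoping goes through unchanged. (You essentially say this yourself in the last paragraph; the point is that the lift obeys the $\pm\delta$ bound, not the sharper monotone bound.)

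A slightly cleaner way to organize the same idea, and probably what the paper has in mind, is to dispense with the fine subdivision altogether: starting from $x$, walk monotonically upward along arcs of $\onerg$; whenever the projected path in $\newrg$ would leave $\tilde\pi$ at a node $\mu(v)$, use the chain-of-merging-paths argument from Observation~\ref{obs:simpmap}(ii) to hop from $v$ to another preimage $v'$ of $\mu(v)$ from which the upward walk continues to project onto $\tilde\pi$. The monotone pieces stay in $[f(x),f(y)]$ and each hop stays within $\pm\delta$ of its level, giving the desired range $[f(x)-\delta,\,f(y)+\delta]$ directly.
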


Now for a fixed positive real $\eps$, we use the following procedure to construct a continuous map $\rightmap^\eps: \newrg \to \onerg$ (see \cref{fig:rightmap} for an illustration). 
First, we subdivide the simplified Reeb graph $\newrg$ by adding a set of nodes, so that every arc in the resulting graph (still denoted by $\newrg$) has height at most $\eps$. Note that the height of an monotonic path from $x$ to $y$ is simply the difference in the function values of $x$ and $y$.
We refer to the resulting augmented graph $\newrg$ as \emph{an $\eps$-subdivision of\/ $\newrg$} with nodes $V_\eps = \{ \tilde{v}_1, \ldots \tilde{v}_m \}$. 
Now for each $\tilde{v}_i \in V_\eps$, we set $\rightmap^\eps(\tilde{v}_i)$ to be an arbitrary but fixed pre-image $v_i \in \mu^{-1}(\tilde{v}_i)$. 
For each arc $\tilde{\pi}(\tilde{v}_i, \tilde{v}_j)$ of $\newrg$, consider the path $\pi(v_i, v_j)$ connecting the two preimage points $v_i$ and $v_j$ as stated in \cref{lem:preimagepath}.
We set the restriction of $\rightmap^\eps$ over the arc $\tilde{\pi}(\tilde{v}_i, \tilde{v}_j)$ to be any homeomorphism from $\tilde{\pi}(\tilde{v}_i, \tilde{v}_j)$ to $\pi(v_i, v_j)$ with $\rightmap^\eps(\tilde{v}_i) = v_i$ and $\rightmap^\eps(\tilde{v}_j)  = v_j$. The maps $\rightmap^\eps(\tilde{\pi}(\tilde{v}_i, \tilde{v}_j))$ for all arcs $\tilde{\pi}(\tilde{v}_i, \tilde{v}_j)$ of $\newrg$ assemble to the continuous map $\rightmap^\eps: \newrg \to \onerg$.

By definition of $\leftmap = \mu$, we have $\max_{x\in \onerg} |f(x) - g\circ\leftmap(x)| = 0$. 
On the other hand, by construction of $\rightmap^\eps$  and \cref{lem:preimagepath}, we have \[\max_{y\in \newrg} |g(y) - f\circ\rightmap^\eps(y)| \le \delta + \eps. \]
We conclude that
 \[\max \{ \| f - g\circ\leftmap \|_\infty, \|f \circ \rightmap^\eps - g \|_\infty \} \le \delta + \eps. \]
To bound the \GHlike{} distance between $\onerg$ and $\newrg$ using \cref{eqn:distdef}, we now need to bound the term $D(\leftmap, \rightmap^\eps)$ from \cref{eqn:GD}. In particular, we wish to bound the distortion of distances for any  pair of correspondences $(x_1, y_1), (x_2, y_2) \in G(\leftmap, \rightmap^\eps)$; recall that $G(\leftmap, \rightmap^\eps) = \{(x, \leftmap(x))\} \cup \{ (\rightmap^\eps(y), y)\}$ is the set of all correspondences induced by $\leftmap$ and $\rightmap^\eps$. 
Assume that the two pairs $(x_1,y_1)$ and $(x_2, y_2)$ we have are of the form: $x_1 = \rightmap^\eps (y_1)$ and $x_2 = \rightmap^\eps(y_2)$. Below we will bound $| d_f(x_1, x_2) - d_g(y_1, y_2) |$. 

Consider the $\eps$-subdivision of $\newrg$, and assume that $y_1$ falls in the arc $\tilde \pi (\tilde v_i, \tilde v_{i+1})$ of the subdivision, and~$y_2$ falls in  the arc $\tilde \pi(\tilde v_j, \tilde v_{j+1})$ in the subdivision. Both $\tilde \pi (\tilde v_i, \tilde v_{i+1})$ and $\tilde \pi(\tilde v_j, \tilde v_{j+1})$ are of height at most $\eps$. 
Let~$v_i$ be the specific preimages of $\tilde v_i$ as chosen in the construction of $\rightmap^\eps$; that is, $v_i = \rightmap^\eps(\tilde{v}_i) \in \mu^{-1}(\tilde v_i)$. By the construction of $\rightmap^\eps$, we have $x_1\in \pi(v_i, v_{i+1})$ and $x_2 \in \pi(v_j, v_{j+1})$, where $\pi(v_i, v_{i+1})$ (resp. $\pi(v_j, v_{j+1}))$ is the path connecting $v_i$ to $v_{i+1}$ as specified by \cref{lem:preimagepath}.
Now consider the optimal path $\tilde\pi(y_1, y_2)$ that gives rise to $d_g(y_1, y_2)$. Assume w.l.o.g.\ that the representation of $\tilde\pi(y_1, y_2)$ using arcs from the $\eps$-subdivision of $\newrg$ is as follows: 
\[\tilde\pi(y_1, y_2) = \langle y_1, \tilde v_{i+1}=\tilde v_{I_0}, \tilde v_{I_1}, \ldots, \tilde v_{I_{s-1}}, \tilde v_{j} = \tilde v_{I_s}, y_2 \rangle, \]
where each $\tilde v_{I_a}$ is a vertex from the $\eps$-subdivition of $\newrg$. 
By \cref{lem:preimagepath}, each arc $\tilde\pi(\tilde v_{I_a}, \tilde v_{I_{a+1}})$ gives rise to a path $\pi(v_{I_a}, v_{I_{a+1}})$ whose range is within $\delta$-Hausdorff distance to $\range(\tilde\pi(\tilde v_{I_a}, \tilde v_{I_{a+1}}))$. 
Let $\tilde\pi(\tilde v_{i+1}, \tilde v_j)$ denote the subpath $\langle \tilde v_{i+1}=\tilde v_{I_0}, \tilde v_{I_1}, \ldots, \tilde v_{I_{s-1}}, \tilde v_{j} = \tilde v_{I_s} \rangle$ of $\tilde \pi(y_1, y_2)$. 
\begin{figure}[h]\centering\includegraphics[height=4cm]{./Figs/simp-illustration1}\end{figure}
Concatenating all such $\pi(v_{I_a}, v_{I_{a+1}})$ together for $a \in [0,s]$, we obtain a path $\pi(v_{i+1}, v_j)$ in $\onerg$ whose range is within $\delta$-Hausdorff distance from $\range(\tilde{\pi}(\tilde  v_{i+1}, \tilde v_j))$. 
Furthermore, by \cref{lem:preimagepath}, the range of the path $\pi(v_i, v_{i+1})$ is within $\delta$-Hausdorff distance to $\range(\tilde \pi(\tilde v_{i}, \tilde v_{i+1}))$. Since the monotone arc $\tilde \pi (\tilde v_{i}, \tilde v_{i+1}))$ has height at most $\eps$, it then follows that $\pi(x_1, v_{i+1})$ (as a subpath of $\pi(v_i, v_{i+1})$) is within Hausdorff distance $\delta + 2\eps$ to $\range(\tilde\pi(y_1, \tilde v_{i+1}))$. A similar statement holds for the path $\pi(v_j, x_2)$. Putting everything together, we have that the path $\pi(x_1, v_{i+1}) \circ \pi(v_{i+1}, v_j) \circ \pi(v_j, x_2)$ from $x_1$ to $x_2$ satisfies that its range is within $(\delta + 2\eps)$-Hausdorff distance from $\range(\tilde{\pi}(y_1, y_2))$. 
Hence \[d_f(x_1, x_2) \le d_g(y_1, y_2) + 2\delta + 4\eps. \]

On the other hand, consider the optimal path $\pi^* (x_1, x_2)$ that gives rise to $d_f(x_1, x_2)$. 
It is mapped to a path $\tilde{\pi}^* (\tilde{y}_1, \tilde{y}_2) = \leftmap(\pi^*(x_1, x_2))$ connecting $\tilde{y}_1 = \leftmap(x_1)$ and $\tilde y_2=\leftmap(x_2)$ in $\newrg$ under the map $\leftmap =\mu: \onerg \to \newrg$. 
\begin{figure}[h]\centering\includegraphics[height=4cm]{./Figs/simp-illustration2}\end{figure}
Furthermore, since $\leftmap$ is a quotient map that preserves function values, $\range(\pi^*(x_1, x_2)) = \range(\tilde{\pi}^* (\tilde{y}_1, \tilde{y}_2))$. 
Similarly, under $\leftmap$, the path $\pi(x_1, v_i)$ (resp. $\pi(v_j, x_2)$ is mapped to a path $\pi' (\tilde{y}_1, \tilde{v}_i)$ (resp. $\pi'(\tilde{v}_j, \tilde{y}_2)$) of the same range. 
By \cref{lem:preimagepath}, we have \[\height(\pi(x_1, v_{i+1})) = \height(\pi' (\tilde{y}_1, \tilde{v}_{i+1})) \le 2\delta + \eps.\] A similar bound holds for $\height(\pi(x_2, v_j)) = \height(\pi'(\tilde{v}_j, \tilde{y}_2))$. 
Hence the path $\pi'(\tilde{v}_{i+1},\tilde{y}_1) \circ \tilde{\pi}^*(\tilde{y}_1, \tilde{y}_2) \circ \pi'(\tilde{y}_2,\tilde{v}_j)$ is a path connecting $\tilde{v}_{i+1}$ to $\tilde{v}_j$ whose range is within $(2\delta + 2\eps)$-Hausdorff distance to the range of $\pi^*(x_1, x_2)$. 
Since by construction of the $\eps$-subdivision, each arc $\tilde{\pi}(y_1, \tilde{v}_{i+1})$ and $\tilde{\pi}(y_2, \tilde v_j)$ is of height at most $\eps$, 
we have that there is a path in $\newrg$ connecting $y_1$ to $y_2$ whose range is within $(2\delta+4\eps)$-Hausdorff distance to the range of $\pi^*(x_1, x_2)$. In other words, $d_g(y_1, y_2) \le d_f(x_1, x_2) + 4\delta + 8\eps$. 
Putting everything together, we have  
\[| d_f(x_1, x_2) - d_g(y_1, y_2)| \le 4\delta + 8\eps. \]

Using an analogous argument, we obtain the same bound for the cases $y_1 = \leftmap(x_1),y_2 = \leftmap(x_2)$ and $y_1 = \leftmap(x_1),x_2 = \rightmap^\eps(y_2)$. Putting everything together, we have that 
\[d_{\leftmap, \rightmap^\eps} := \max \{ D(\leftmap, \rightmap^\eps), \| f-g\circ\leftmap\|_\infty, \|f\circ\rightmap^\eps - g\|_\infty \} \le 2\delta + 4\eps. \]
By \cref{eqn:distdef}, we have that 
$\DD(\onerg, \newrg) \le \lim_{\eps \to 0} d_{\leftmap, \rightmap^{\eps}} = 2\delta. $
Combining this with \cref{thm:traditional,thm:extendedbound}, we thus have 
\begin{align*}
d_B(\Dg_0(\rg_{\pm f}), \Dg_0(\rg_{\pm g})) &\le \DD(\onerg, \newrg) = 2\delta , \\
\intertext{and} 
d_B(\eDg_1(\rg_f), \eDg_1(\rg_g)) &\le 3 \DD(\onerg, \newrg) = 6 \delta. 
\end{align*}

}

\subsection{Distance between $\onerg$ and $\newrg$}
\label{sec:simpdistance:GH}

While the simplification scheme removes persistence pairs $\tau_1, \ldots, \tau_k$, it is not clear how other points in the persistence diagram of the original Reeb graph $\onerg$ are affected. 
In this section, we aim to bound the \GHlike{} distance $\DD(\onerg, \newrg)$, which in turn will give an upper bound on the respective persistence diagrams. 
We do so through the functional Gromov-Hausdorff distance, $\Dfgh(\onerg, \newrg)$, between $\onerg$ and $\newrg$.
In particular,  by using the quotient  map $\mu: \onerg \to \newrg$ which describes the simplification process implemented on $\onerg$ so that $\newrg$ is obtained, we will show that the functional Gromov-Hausdorff distance between $\onerg$ and $\newrg$ is bounded by $\delta$. 

First, we rewrite the definition of functional GH distance in\cref{eqn:GHdistdef} by the following using the concept of correspondance: A correspondance $C \subset X \times Y$ between two topological spaces $X$ and $Y$ is a relation whose projection on $X$ and on $Y$ are both surjective. 
We can then rewrite \cref{eqn:GHdistdef} as follows: 
\begin{align}
D(C) &= \frac{1}{2} \max_{(x_1, y_1), (x_2, y_2) \in C} |d_f(x_1, x_2) - d_g (y_1, y_2) |; ~~\text{and} \nonumber \\
\Dfgh(\onerg, \newrg) &= \inf_%
{C: \onerg \times \newrg}
\max \{ D(C), \max_{(x,y) \in C} |f(x) - g(y)| \},  
\label{eqn:GHdistdef_reeb}
\end{align}
where $C$ ranges over all the correspondence between $\onerg$ and $\newrg$. 

Set $\widehat{C} = \{(x, \mu(x))|  x\in\onerg\}$. Note that this indeed is a correspondence since $\mu$ is a subjective map from $\onerg$ to $\newrg$. We will now bound $D(\widehat{C})$. 
Specifically, given any $\{(x_1, y_2), (x_2, y_2)\} \in \widehat{C}$ with $y_1 = \mu(x_1)$ and $y_2 = \mu(x_2)$, we aim to show that 
$|d_f(x_1,x_2) - d_g(y_1,y_2)| \leq 2\delta$; that is, 
\begin{align}
-2\delta \leq d_f(x_1,x_2) - d_g(y_1,y_2) \leq 2\delta
\label{eqn:ghdreeb}
\end{align}

To see that the left inequality in \cref{eqn:ghdreeb} holds, note that by \cref{obs:simpmap}, we have $ d_g( y_1 ,y_2) - d_f(x_1, x_2) \leq 0$. 

\begin{figure*}[htbp]
\begin{center}
\includegraphics[height=4.5cm]{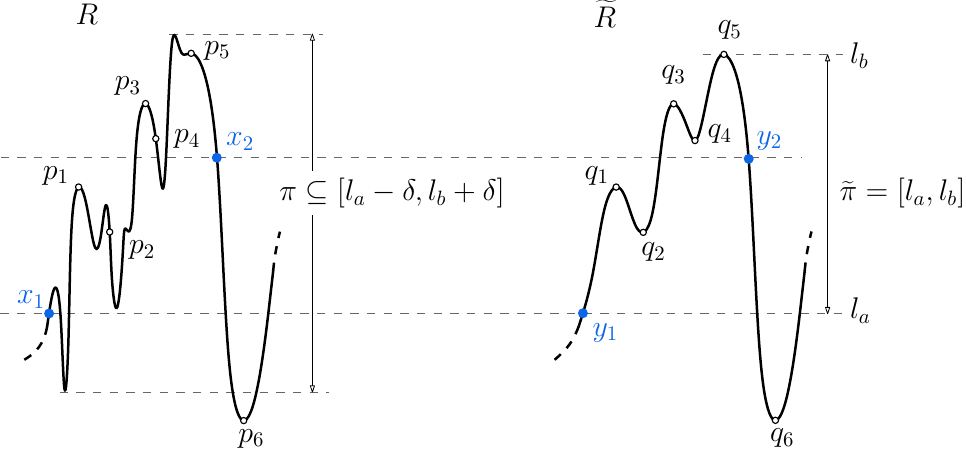}
\end{center}
\caption{Right: An arc $(q_i, q_{i+1}) \in \newrg$ denotes a monotonic path. Each of them has a correspondent path $(p_i, p_{i+1}) \in \onerg$, with $q_i = \leftmap(p_i)$. The path between $y_1$ and $y_2$ is the concatenation of a set of monotonic paths, i.e., $\tilde{\pi}(y_1, y_2) = \lbrace (y_1, q_1,), \cdots, (q_5, y_2) \rbrace \in \newrg $ with  $\range(\tilde{\pi}(y_1, y_2)) = [l_a, l_b]$. Left: There exists a path $\pi(x_1, x_2)$ in $\onerg$, with $x_1$ and $x_2$ be the arbitrarily preimages of $y_1$ and $y_2$, respectively. The range of $\pi(x_1, x_2)$ can be bounded as $[l_a-\delta, l_b+\delta]$. } 
\label{fig:ghreeb_right} 
\end{figure*}

We now show the right part of \cref{eqn:ghdreeb}. 
Assume w.l.o.g that the Reeb graph $\onerg$ and thus also $\newrg$ are connected. 
Let $\tilde \pi (y_1, y_2)\in \newrg$ denote the path with the minimum height connecting $y_1$ and $y_2$ (i.e, achieving $d_g(y_1, y_2)$) in $\newrg$. 
Suppose that $\tilde{\pi}(y_1, y_2)\in \newrg$ is the concatenation of a set of monotonic paths in $\newrg$; see \cref{fig:ghreeb_right}: 
\begin{align*}
\tilde{\pi}(y_1, y_2) = \lbrace \tilde\pi(y_1=q_0, q_1), \tilde\pi(q_1, q_{2}), \cdots, \tilde\pi(q_{s-1}, q_s= y_2) \rbrace \in \newrg
\end{align*}
By \cref{lem:ghpreimagepath}, each $\tilde{\pi}(q_i, q_{i+1})$ gives rise to a path $\pi(p_i, p_{i+1}) \in \onerg$ such that $\mu(p_i) = q_i, \mu(p_{i+1}) = q_{i+1}$, and $\height(\pi^*(p_i,p_{i+1})) \le 2\delta + \height(\tilde{\pi}^*(q_i, q_{i+1}))$. In fact, we can choose $p_0$ and $p_s$ as $x_1$ and $x_2$ (which are preimages of $y_1$ and $y_2$), respectively. 
Concatenating all $\pi(p_i, p_{i+1})$, for $i=0,\cdots,s$, we obtain a path $\pi(p_0=x_1, p_s=x_2)$ with $\height(\pi(p_0, p_s)) \leq  \height(\tilde{\pi}(q_0, q_s)) + 2\delta$. Hence
 $$d_f(x_1, x_2) \le \height(\pi(x_1=p_0, x_2=p_s)) \leq \height( \tilde{\pi}(y_1=q_0, y_2=q_s)) + 2\delta = d_g(y_1, y_2) + 2\delta. $$ 
The right part of \cref{eqn:ghdreeb} thus  holds. Hence $D(\widehat{C}) \le \delta$. 

Furthermore, since $\max_{(x,y) \in \widehat{C}} |f(x) - g(y)| = 0$ (as for any $y=\mu(x)$, $g\circ \mu(x) = f(x)$), we have that $\Dfgh(\onerg, \newrg) \le \delta$. 
Therefore, by \cref{thm:GHrelations}, we have 
\begin{align}
d_{FD}(\onerg, \newrg)\leq 3d_{fGH}(\onerg, \newrg) = 3\delta
\end{align}

Combining this with \cref{thm:traditional,thm:extendedbound}, we conclude with the following main result on the simplification of the Reeb graphs: 

\begin{theorem}
Suppose we simplify a Reeb graph $\onerg$ by removing features of persistence $\leq \delta$ using the strategy detailed in \cref{sec:simpstrategy}. 
The bottleneck distance between the (ordinary and extended) persistence diagrams for $\onerg$ and for its simplification $\newrg$ is at most $9\delta$. 

\label{thm:simp}
\end{theorem} 

We remark that
instead of invoking \cref{thm:GHrelations}, one can use a direct argument similar to the proof of that theorem to improve the bound on $\DD(\onerg, \newrg)$ to $2\delta$, which further improves the bound on bottleneck distance between persistence diagrams for $\onerg$ and $\newrg$ to $6\delta$.

\section{Concluding Remarks}
In this paper, we propose a distance for Reeb graphs, under which the Reeb graph is stable with respect to changes in the input function under the $L_\infty$ norm. More importantly, we show that this distance is bounded from below by and thus more discriminative at differentiating scalar fields than the bottleneck distance between both 0th ordinary and 1st extended persistence diagrams. 
Similar to the Gromov-Hausdorff distance for metric spaces, the functional distortion distance provides a rigorous setting for describing and studying various properties of Reeb graphs. Indeed, by bounding the \GHlike{} distance between a Reeb graph and its simplification, we can prove that important (persistent) features are preserved under simplification, which addresses a key practical issue.

Our current bound in Theorem \ref{thm:extendedbound} has a constant factor of $3$. It will be interesting to see whether this factor can be improved to $1$ to match the bound in Theorem \ref{thm:traditional}, either for the \GHlike{} distance or for some other distance.

A natural question is how to compute the \GHlike{} distance. %
We believe that there is an exponential time algorithm to approximate $\DD(\rg_f, \rg_g)$, similar to what is known for the $\eps$-interleaving distance for merge trees \cite{MBW13}. 
However, it remains an open problem to develop more efficient algorithms. 
We remark that comparing unlabeled trees is computationally hard in general: The commonly used tree edit distance and tree alignment distance 
are NP-hard to compute (and sometimes even to approximate) \cite{B05}. 
Similarly, it has been shown that computing the Gromov-Hausdorff distance is NP-hard even for two metric trees \cite{AFN15}. 
It will be interesting to see whether by leveraging the scalar field associated with merge trees and Reeb graphs, more efficient approximation algorithms for computing \GHlike{} distance can be developed.

\subsection*{Acknowledgements}
We thank Facundo M\'emoli for helpful discussions about variants of the Gromov--Hausdorff distance, which lead to improvements in our definition of the functional distortion distance. 
This research is partially supported by the National Science Foundation under grants CCF-1319406, CCF-1116258, and by the {\sc Toposys} project FP7-ICT-318493-STREP.

{
\bibliography{refs}
}

\todos

\end{document}